\documentclass[DIV12]{scrartcl}

\usepackage[utf8]{inputenc}

\usepackage{microtype}
\usepackage{graphicx}
\usepackage{booktabs} 
\usepackage{xcolor} 
\usepackage{amssymb}
\usepackage{stmaryrd}
\usepackage{bm}
\usepackage[pdfborder={0 0 0}]{hyperref}
\usepackage{paralist}
\usepackage{amsmath}
\usepackage{colonequals}
\usepackage[linesnumbered,ruled,vlined]{algorithm2e}
\SetKwRepeat{Do}{do}{While}

\hypersetup{
    colorlinks,
    linkcolor={red!50!black},
    citecolor={blue!50!black},
    urlcolor={blue!30!black}
}

\hyphenation{non-de-ter-mi-ni-stic}

\makeatother

\def\subparagraph*#1{\paragraph{#1}}

\newcommand{\parabm}[1]{\bm{#1}}

\newcommand{\myparagraph}[1]{\subparagraph*{#1.}}

\newcommand{\dom}{\mathrm{dom}}
\newcommand{\defeq}{\colonequals}

\newcommand{\sigmai}{\sigma_{\mathrm{int}}}
\newcommand{\sigmae}{\sigma}

\newcommand{\card}[1]{\left|#1\right|}

\newcommand{\calA}{\mathcal{A}}
\newcommand{\calB}{\mathcal{B}}
\newcommand{\calG}{\mathcal{G}}
\newcommand{\calQ}{\mathcal{Q}}
\newcommand{\calD}{\mathcal{D}}
\newcommand{\calI}{\mathcal{I}}
\newcommand{\calE}{\mathcal{E}}
\newcommand{\calN}{\mathcal{N}}
\newcommand{\calS}{\mathcal{S}}
\newcommand{\calT}{\mathcal{T}}

\newcommand{\II}{\mathrm{I}}
\newcommand{\I}{\mathrm{I}}
\newcommand{\QQ}{\mathrm{Q}}
\newcommand{\PP}{\mathrm{P}}

\newcommand{\ki}{k_{\II}}
\newcommand{\kq}{k_{\QQ}}
\newcommand{\kp}{k_{\PP}}

\newcommand{\NN}{\mathbb{N}}

\newcommand{\decode}[1]{\mathrm{dec}(#1)}

\newcommand{\arity}[1]{\mathrm{arity}(#1)}

\renewcommand\r{\mathrm{r}}
\newcommand\AND{\mathrm{AND}}
\newcommand\OR{\mathrm{OR}}
\newcommand\NOT{\mathrm{NOT}}

\newcommand\true{\top}
\newcommand\false{\bot}

\newcommand\neigh{\mathrm{Nbh}}
\newcommand\la{\langle}
\newcommand\ra{\rangle}
\newcommand\strat{\zeta}
\newcommand\dec{\mathrm{dec}}
\newcommand\enc{\mathrm{enc}}

\newcommand\vars{\mathrm{vars}}

\newcommand{\inp}{\mathsf{inp}}
\newcommand{\gi}{\mathsf{in}}

\newcommand{\guard}{\mathrm{guard}}

\newcommand{\goal}{\text{Goal}()}

\newcommand{\first}{\mathsf{Ch1}}
\newcommand{\second}{\mathsf{Ch2}}

\newcommand{\binary}{\mathsf{Bin}}
\newcommand{\schildself}{\mathcal{S}^{\binary}_{\first,\second,\child,\childself}}

\newcommand{\child}{\mathsf{Child}}
\newcommand{\childself}{\mathsf{Child^?}}

\newcommand{\f}{\mathrm{f}}

\newcommand{\mnull}{\mathsf{null}}

\newcommand\restr[2]{{%
  \kern-\nulldelimiterspace %
  #1 %
  _{|#2} %
  }}

\newcommand{\cfggnd}{\mbox{CFG$^\text{GN}$-Datalog}\xspace}

\newcommand{\Pdir}{P_{\text{directions}}}

\newcommand{\Adom}{\text{Adom}}

\title{Evaluating Datalog via Tree Automata and Cycluits}

\author{
\begin{tabular}[t]{c}
Antoine Amarilli \\
{\normalfont LTCI, Télécom ParisTech, Université Paris-Saclay} \\
{\normalfont antoine.amarilli@telecom-paristech.fr} \\[0.5em]
Pierre Bourhis \\
{\normalfont CRIStAL, CNRS, Université de Lille} \\
{\normalfont pierre.bourhis@lifl.fr} \\[0.5em]
Mikaël Monet \\
{\normalfont LTCI, Télécom ParisTech, Université Paris-Saclay} \\
{\normalfont \& Inria Paris} \\
{\normalfont mikael.monet@telecom-paristech.fr} \\[0.5em]
Pierre Senellart \\
{\normalfont
LTCI, CNRS, Télécom ParisTech, Université Paris-Saclay} \\
{\normalfont \& Inria Paris} \\
{\normalfont \& DI ENS, ENS, CNRS, PSL University} \\
{\normalfont pierre.senellart@ens.fr} \\[0.5em]
\end{tabular}
}

\date{}

\renewcommand{\phi}{\varphi}
\renewcommand{\epsilon}{\varepsilon}
\renewcommand{\leq}{\leqslant}
\renewcommand{\geq}{\geqslant}

\usepackage[bibliography=common]{apxproof}

\newtheoremrep{theorem}{Theorem}
\newtheorem{proposition}[theorem]{Proposition}
\newtheorem{lemma}[theorem]{Lemma}
\newtheorem{property}[theorem]{Property}

\theoremstyle{plain}
\newtheorem{example}[theorem]{Example}
\newtheorem{definition}[theorem]{Definition}

\begin{document}

\maketitle

\begin{abstract}
  We investigate parameterizations of both database instances and queries
that make query evaluation fixed-parameter tractable in combined
complexity.
We show that clique-frontier-guarded Datalog with stratified negation (CFG-Datalog) enjoys
bilinear-time evaluation on structures of bounded treewidth for programs of bounded
rule size. Such programs capture in particular conjunctive queries with
simplicial decompositions of bounded width, guarded negation fragment
queries of bounded CQ-rank, or two-way regular path queries. Our
result is shown by translating to alternating two-way
automata, whose semantics is defined via cyclic provenance circuits
(cycluits) that can be tractably evaluated. 

\end{abstract}

\pagebreak

\section{Introduction}
Arguably the most fundamental task performed by database systems
is \emph{query evaluation}, namely, computing the results
of a query over a
database instance. Unfortunately, this task is well-known to be intractable in
\emph{combined complexity}~\cite{vardi1982complexity} even for simple
query languages such as conjunctive
queries~\cite{abiteboul1995foundations}.

To address this issue, two main directions have been investigated. The first
is to restrict the class of \emph{queries} to ensure tractability, for
instance, to $\alpha$-acyclic conjunctive queries \cite{yannakakis1981algorithms}, this
being motivated by the idea that many real-world queries are simple and usually
small. 
The second approach restricts the
structure of database instances, e.g., requiring them to have bounded
\emph{treewidth}~\cite{robertson1986graph} (we call them \emph{treelike}).
This has been notably studied by
Courcelle~\cite{courcelle1990monadic}, to show the tractability of
monadic-second order logic
on treelike instances, but in \emph{data complexity} (i.e., for fixed
queries); the combined complexity is
generally nonelementary~\cite{meyer1975weak}.

This leaves open the main question studied in this paper: \emph{Which queries
can be efficiently evaluated, in combined complexity, on treelike databases?}
This question has been addressed 
by Gottlob, Pichler, and Fei~\cite{gottlob2010monadic} by
introducing \emph{quasi-guarded Datalog};
however, an unusual feature of this language is that programs must 
explicitly refer to the tree decomposition
of the instance. Instead, we try
to follow Courcelle's approach and investigate which queries can be
efficiently \emph{translated to automata}. Specifically, rather than restricting
to a fixed class of ``efficient'' queries, we study \emph{parameterized}
query classes,
i.e., we define an efficient class of queries for each value of the parameter.
We further make the standard assumption that the signature is fixed; in
particular, its arity is constant.
This allows us to
aim for low combined complexity for query evaluation, namely, fixed-parameter tractability with
linear-time complexity in the product of the input query and instance, which we call
\emph{FPT-bilinear} complexity.

Surprisingly, we are not aware of further existing work on tractable 
combined query evaluation for parameterized instances and queries, except from
an unexpected angle: the translation of restricted query fragments to tree
automata on treelike instances was used in the context of \emph{guarded logics}
and other fragments,
to decide
\emph{satisfiability}~\cite{benedikt2016step} and
\emph{containment}~\cite{barcelo2014does}.
To do this, one usually establishes a \emph{treelike
model property} to restrict the search to models of low treewidth (but
dependent on the formula), and then translates the formula to an
automaton, so that the problems reduce to emptiness testing: expressive
automata formalisms, such as \emph{alternating two-way automata}, are typically
used. Exploiting this connection, we show how
query evaluation on treelike instances can benefit from these ideas: for
instance, as we show, some queries can only be translated efficiently to such
concise automata, and not to the more common bottom-up tree automata.

\paragraph{Contributions.}

From there, the first main contribution of
this paper is to consider the language of \emph{clique-frontier-guarded
Datalog} (CFG-Datalog), and show an efficient FPT-linear translation procedure
for this language, parameterized by the body size of rules: this implies
FPT-bilinear combined complexity on treelike instances. While it is a
Datalog fragment, CFG-Datalog shares some similarities with guarded logics;
yet, its design incorporates several features
(fixpoints, clique-guards, negation, guarding positive subformulas) that
are not usually found together in guarded fragments, but are important for query
evaluation. We show how the tractability of this language captures the tractability of such query
classes as two-way regular path queries~\cite{barcelo2013querying} and
$\alpha$-acyclic conjunctive queries.
We further show that, in contrast with guarded negation logics, satisfiability of
CFG-Datalog is undecidable. 

Already for
conjunctive queries, we show that the treewidth of queries is not
the right parameter to ensure efficient translatability. In fact, the second contribution of our work is a lower bound: we show that bounded-treewidth queries cannot be efficiently translated to automata at all, so we cannot hope
to show
combined tractability for them via automata methods. By
contrast, CFG-Datalog implies the combined tractability of bounded-treewidth queries with
an additional requirement (interfaces between bags must be clique-guarded),
which is the notion of \emph{simplicial decompositions} previously studied
by Tarjan~\cite{tarjan1985decomposition}. To our knowledge, we are the first
to introduce this query class and to show its 
tractability on treelike instances. CFG-Datalog can be
understood as an extension of this fragment to disjunction, clique-guardedness, stratified negation, and inflationary fixpoints, that preserves tractability.

To derive our main FPT-bilinear combined complexity result, we define an
operational semantics for our tree automata 
by introducing a notion of
\emph{cyclic provenance circuits},
that we call \emph{cycluits}. These cycluits, the third contribution of our paper,
are well-suited as a provenance representation
for alternating two-way automata encoding CFG-Datalog programs, as they naturally
deal with both recursion and two-way traversal of a treelike instance,
which is less straightforward with provenance
formulas~\cite{green2007provenance} or
circuits~\cite{deutch2014circuits}. 
While we believe that this
natural generalization of Boolean circuits may be of independent interest, it
does not seem to have been studied in detail, except in the context of
integrated circuit design \cite{malik1993analysis,riedel2012cyclic}, where the semantics often features
feedback loops that involve negation; we prohibit these by focusing on \emph{stratified}
circuits, which we show can be evaluated in linear time.
We
show that the provenance of alternating two-way automata can be
represented as a stratified 
cycluit in FPT-bilinear time, generalizing results on bottom-up automata
and circuits from~\cite{amarilli2015provenance}.

The current article is a significant extension of the conference
version~\cite{amarilli2017combined,amarilli2017combinedb}, which in
particular includes all proofs. We improved the definition of our
language to a more natural and more expressive one, allowing us to step
away from the world of guarded negation logics and thus answering a
question that we left open in the conclusion of~\cite{amarilli2017combined}.
We show that, in contrast with guarded negation logics and the
ICG-Datalog language of~\cite{amarilli2017combined}, satisfiability of
CFG-Datalog is undecidable. To make space for the new material, this
paper does not include any of the applications to probabilistic query
evaluation that can be found
in~\cite{amarilli2017combined,amarilli2017combinedb} (see also
\cite{amarilli2017conjunctive} for a more in-depth study of the combined
complexity of probabilistic query evaluation).

\paragraph{Outline.} We give preliminaries in
Section~\ref{sec:prelim}, and then position our approach relative to
existing work in Section~\ref{sec:existing}. We then present our tractable
fragment, first for bounded-simplicial-width conjunctive queries in
Section~\ref{sec:cq}, then for CFG-Datalog in
Section~\ref{sec:CFG}. We then define the automata variants we use and translate CFG-Datalog to
them in Section~\ref{sec:compilation}, before introducing cycluits and showing
our provenance computation result in Section~\ref{sec:provenance}. 
We last present the proof of our translation result in Section~\ref{sec:proof}.

\section{Preliminaries}
\label{sec:prelim}
\subparagraph*{Relational instances.}

A \emph{relational signature} $\sigma$ is a finite set of relation
names written $R$, $S$,
$T$, \dots, each with its associated \emph{arity} written
$\arity{R} \in \NN$.
Throughout this work, \emph{we always assume the signature $\sigma$ to be
fixed} (with a single exception, in Proposition~\ref{prop:undecidable}):
hence, its \emph{arity} $\arity{\sigma}$ (the maximal arity of relations
in~$\sigma$) is assumed to be constant, and we further assume it is $>0$.
A \emph{($\sigma$-)instance} $I$
is a finite set of \emph{facts} on~$\sigma$,
i.e., $R(a_1, \ldots, a_{\arity{R}})$ with $R \in
\sigma$, and where $a_i$ is what we call an \emph{element}.
The \emph{active domain} $\dom(I)$ consists
of the
elements occurring in~$I$, and the size of $I$, denoted $|I|$, is the number of
tuples that $I$ contains.

\begin{example}
	\label{expl:instance}
	Table~\ref{tab:instance} shows an example of relational instance $I$ on signature $\sigma = \{R,S,T\}$ with $\arity{R} = \arity{S} = 2$ and $\arity{T} = 3$.
	The active domain of $I$ is $\dom(I) = \{1,2,3,4,5,6,7,8,9,10,11\}$ and its size is $|I| = 11$.
\begin{table}
\centering
\caption{Example relational instance}

\begin{tabular}{ccc}
	\begin{tabular}[t]{cc}
\toprule
		\multicolumn{2}{c}{$\mathbf{R}$} \\ 
\midrule
$3$ & $7$  \\ 
$3$ & $4$  \\
$5$ & $4$\\
$2$ & $5$\\
$9$ & $10$\\
$7$ & $8$\\
\bottomrule
\end{tabular}
&
\begin{tabular}[t]{cc}
\toprule
	\multicolumn{2}{c}{$\mathbf{S}$} \\ 
\midrule
$3$ & $7$  \\ 
$7$ & $9$\\
$11$ & $9$\\
$2$ & $6$\\
\bottomrule
\end{tabular}
&
\begin{tabular}[t]{ccc}
\toprule
	\multicolumn{3}{c}{$\mathbf{T}$} \\ 
\midrule
$1$ & $2$ & $3$  \\ 
\bottomrule
\end{tabular}
\end{tabular}

\label{tab:instance}
\end{table}
\end{example}

A \emph{subinstance} of $I$ is a $\sigma$-instance that is included in $I$ (as a set of tuples).
An \emph{isomorphism} between two $\sigma$-instances $I$ and $I'$ is a bijective function 
$f: \dom(I) \to \dom(I')$ such that for every relation name $R$,
for each tuple $(a_1,\ldots,a_{\arity{R}})$ in $\dom(I)^{\arity{R}}$,  
we have $R(a_1,\ldots,a_{\arity{R}}) \in I$
if and only if $R(f(a'_1),\ldots,f(a'_{\arity{R}})) \in I'$. When there exists such an isomorphism, we
say that $I$ and $I'$ are \emph{isomorphic}:
intuitively, isomorphic instances have exactly the same structure and differ only by the name of the elements
in their active domains.

\subparagraph*{Query evaluation and fixed-parameter tractability.}

We study query evaluation for several \emph{query
languages} that are subsets of first-order (FO) logic
(e.g., conjunctive queries) or of second-order (SO) logic (e.g., Datalog), without built-in relations.
Unless otherwise stated, we
only consider queries that are \emph{constant-free}, and \emph{Boolean}, so that an instance $I$ either
\emph{satisfies} a query $Q$ ($I \models Q$), or \emph{violates} it
($I \not\models Q$), with the standard semantics~\cite{abiteboul1995foundations}.
We recall that a constant-free Boolean query $Q$ cannot differentiate between isomorphic instances, i.e., for any two isomorphic relational instances $I$ and $I'$,
we have $I \models Q$ if and only if $I' \models Q$.

We study the \emph{query evaluation} (or \emph{model checking})
problem for a query class $\mathcal{Q}$ and instance
class $\mathcal{I}$:
given an instance $I \in
\mathcal{I}$ and query $Q \in \mathcal{Q}$, check if $I \models Q$.
Its \emph{combined complexity} 
for 
$\calI$ and $\calQ$ is 
a function of~$I$ and~$Q$, whereas
\emph{data complexity} 
assumes $Q$ to be fixed.
We 
also study cases where $\calI$ and
$\calQ$ are \emph{parameterized}: given infinite sequences
$\calI_1, \calI_2, \ldots$ and $\calQ_1, \calQ_2, \ldots$, the \emph{query evaluation
problem parameterized by $k_{\mathrm{I}}$, $k_{\mathrm{Q}}$} 
applies to~$\calI_{k_{\II}}$ and $\calQ_{k_{\mathrm{Q}}}$.
The parameterized problem is \emph{fixed-parameter tractable} (FPT), for $(\calI_n)$ and $(\calQ_n)$, if
there is a constant $c \in \mathbb{N}$ and computable function $f$ such that the problem
can be solved 
with combined complexity
$O\left(f(\ki, \kq) \cdot (\card{I} + \card{Q})^c\right)$.
When the complexity is of the form $O\left(f(\ki, \kq) \cdot (\card{I} \cdot \card{Q})\right)$, we call it \emph{FPT-bilinear} (in $\card{I} \cdot \card{Q}$).
When there is only one input (for example when we want to check that an instance $I$ has treewidth $\leq \ki$) and the complexity is $O\left(f(\ki) \cdot \card{I}\right)$, we call it \emph{FPT-linear}.
Observe that calling the problem FPT is more
informative than saying that it is in PTIME for fixed $\ki$ and $\kq$, as we are
further imposing that the polynomial degree $c$ does not depend on $\ki$ and $\kq$:
this follows the usual distinction in parameterized complexity between FPT and
classes such as XP~\cite{flum2006parameterized}.

\subparagraph*{Query languages.}
We first study fragments of FO, in particular, \emph{conjunctive queries} (CQ), i.e., 
existentially quantified conjunctions of atoms.
The \emph{canonical model} of a CQ $Q$ is the instance built from $Q$ by
seeing variables as elements and atoms as facts.
The \emph{primal graph} of~$Q$ has its variables as vertices, and
connects all variable pairs that co-occur in some atom.

Second, we study \emph{Datalog with stratified negation}.
We summarize the definitions here, see
\cite{abiteboul1995foundations} for details. 
A \emph{Datalog program} $P$
(without negation) over $\sigma$ (called the \emph{extensional signature})
consists of an \emph{intensional signature} $\sigmai$ disjoint from $\sigma$
(with the arity of $\sigmai$ being possibly greater than that of~$\sigma$),
a 0-ary \emph{goal predicate} $\text{Goal}$ in~$\sigmai$, and a set of
\emph{rules}.
Each rule is
of the form $R(\mathbf{x}) \leftarrow \psi(\mathbf{x}, \mathbf{y})$, where
the \emph{head} $R(\mathbf{x})$ is an atom with $R \in
\sigmai$, and the \emph{body} $\psi$ is a CQ over the signature $\sigmai \sqcup \sigmae$ 
(with~$\sqcup$ denoting disjoint union), where
we require that every variable of~$\mathbf{x}$ occurs in~$\psi$.
The \emph{semantics} $P(I)$ of $P$ over
an input $\sigma$-instance $I$
is the $(\sigma \sqcup \sigmai)$-instance defined by as the least fixpoint 
of the \emph{immediate consequence operator} $\Xi^P$. Formally,
start with $P(I) \defeq I$, and repeatedly apply the operator $\Xi^P$ which does
the following: simultaneously consider 
each rule $R(\mathbf{x})
\leftarrow \psi(\mathbf{x}, \mathbf{y})$ and every tuple $\mathbf{a}$ of
$\dom(I)$ for which
$P(I) \models \exists \mathbf{y} \,\psi(\mathbf{a}, \mathbf{y})$, 
\emph{derive} the fact $R(\mathbf{a})$, and add all derived facts 
to~$P(I)$ where they can be used in subsequent iterations to derive more facts.
We say that $I \models P$ iff $\text{Goal}()$ is in~$P(I)$.
The \emph{arity} of~$P$ is $\max(\arity{\sigma}, \arity{\sigmai})$,
and $P$~is \emph{monadic} if $\sigmai$ has arity~$1$.

\emph{Datalog with stratified
negation}~\cite{abiteboul1995foundations} allows
negated \emph{intensional} atoms
 in bodies, but
requires~$P$ to have a \emph{stratification}, i.e.,
an ordered partition $P_1
\sqcup \dots \sqcup P_n$ of the rules where:
\begin{compactenum}[(i)]
\item Each $R \in \sigmai$ has a \emph{stratum} $\strat(R) \in \{1, \ldots, n\}$ such that all rules
  with $R$ in the head are in~$P_{\strat(R)}$;
 \item For any $1 \leq i \leq n$ and $\sigmai$-atom $R(\mathbf{z})$ in a body of a rule of~$P_i$,
   we have $\strat(R) \leq i$;
 \item For any $1 \leq i \leq n$ and negated $\sigmai$-atom $R(\mathbf{z})$ in a
   body of~$P_i$, we have 
  $\strat(R) < i$. 
\end{compactenum}
The stratification ensures that we can define the semantics of a stratified
Datalog program by computing its interpretation for strata $P_1, \ldots, P_n$ in order:
atoms in bodies always depend on a lower stratum, and negated atoms depend on
strictly lower strata, whose
interpretation was already fixed. We point out that, although a Datalog program with stratified negation can have many stratifications, all stratifications give rise to the same semantics~\cite[Theorem 15.2.10]{abiteboul1995foundations}. 
Hence, the semantics of $P$, as well as $I \models P$, are well-defined.

\begin{example}
	\label{expl:datalog}
  The following stratified Datalog program, with $\sigma = \{R\}$ and $\sigmai =
  \{T, \mathrm{Goal}\}$, and strata $P_1$,
  $P_2$, tests if there are two elements that are not connected by a directed
  $R$-path:\\
  \null\hfill\(
    P_1: T(x,y) \leftarrow R(x,y), \quad
      T(x,y) \leftarrow R(x,z) \land T(z,y) \qquad\qquad P_2: 
    \mathrm{Goal}() \leftarrow \lnot T(x,y)
    \)\hfill\null
\end{example}

\paragraph*{Treewidth.}

The treewidth measure~\cite{robertson1984graph} quantifies how far a graph is to
being a tree: we will use it
to restrict instances and conjunctive queries.
The \emph{treewidth} of a CQ is that of its canonical model, and
the \emph{treewidth} of an instance $I$ is the smallest $k$ such that $I$ has a 
\emph{tree decomposition} of \emph{width} $k$,
i.e., a finite, rooted, unranked tree $T$, whose nodes~$b$ (called
\emph{bags}) are labeled by a subset $\dom(b)$ of $\dom(I)$ with $\card{\dom(b)}
\leq k+1$, and which satisfies:
\begin{compactenum}[(i)]
\item for every
fact $R(\mathbf{a}) \in I$, there is a bag $b \in T$ with $\mathbf{a} \subseteq
\dom(b)$; 
\item for all $a \in \dom(I)$, the set of bags $\{b \in T \mid a \in
  \dom(b)\}$ is a connected subtree of~$T$.
\end{compactenum}

\begin{example}
	\label{expl:treedec}
	Figure~\ref{fig:treedec} shows a tree decomposition of the instance $I$ from Example~\ref{expl:instance}.
	The width of this tree decomposition is $2$. Moreover, the width of any tree decomposition of $I$ is at least $2$, since 
        there must be a bag containing all elements of the fact $T(1,2,3)$.
	Hence, the treewidth of $I$ is $2$.
\begin{figure}
  \centering
  \includegraphics[scale=0.10]{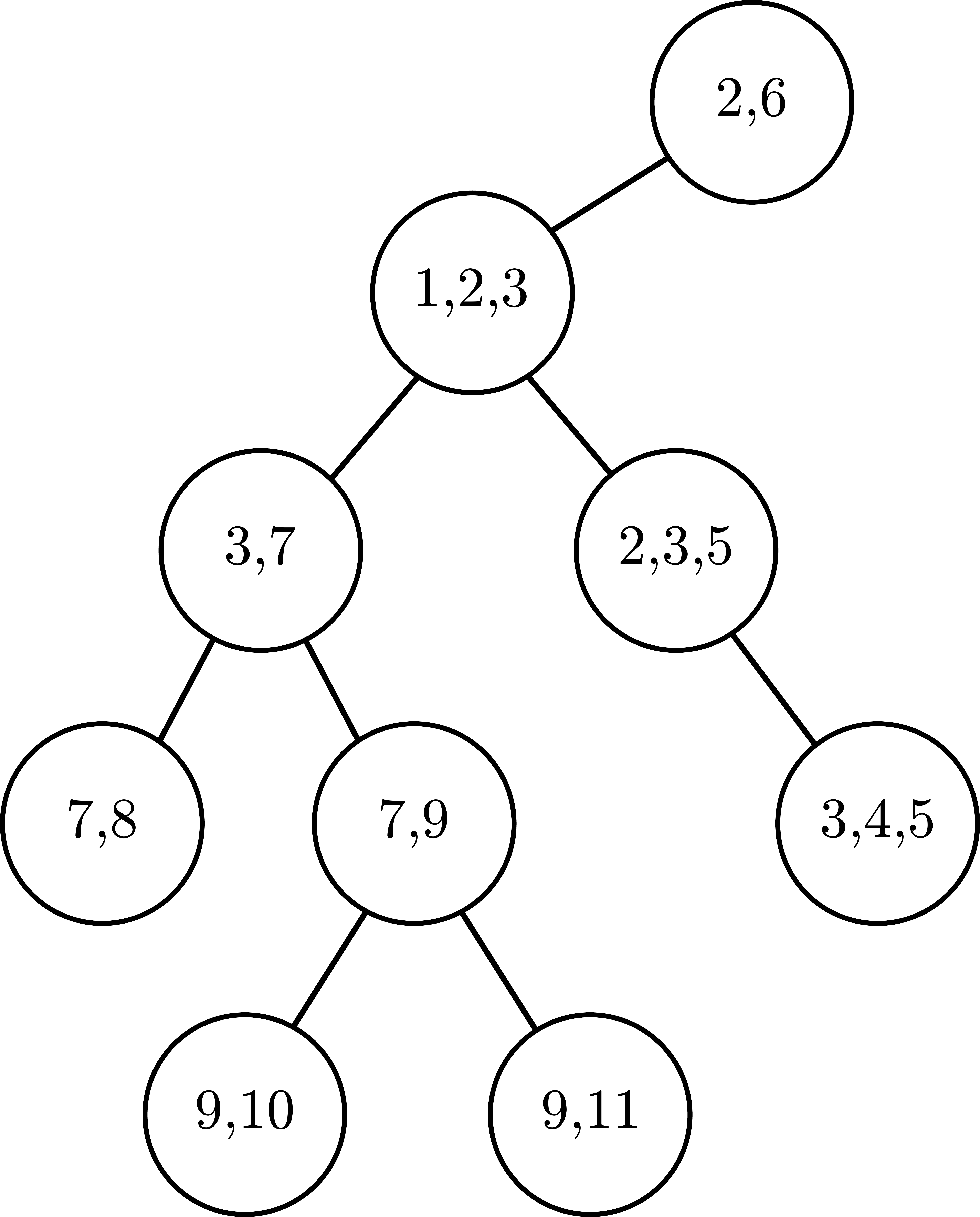}
\caption{Tree decomposition of the instance from Example~\ref{expl:instance}.}
\label{fig:treedec}
\end{figure}
\end{example}

A family of instances is \emph{treelike} if their treewidth is bounded by a
constant.

\section{Approaches for Tractability}
\label{sec:existing}
We now review existing approaches to ensure the tractability of query
evaluation, starting by query languages whose evaluation is tractable
in combined complexity on \emph{all} input instances. We then study more
expressive query languages which are tractable on \emph{treelike} instances, but
where tractability only holds in data complexity. We then present the
goals of our work.

\subsection{Tractable Queries on All Instances}
\label{sec:tractableall}

The best-known query language to ensure tractable query complexity is the language of
\emph{$\alpha$-acyclic queries}~\cite{fagin1983degrees}, i.e., those CQs that have a tree
decomposition where the domain of each bag corresponds exactly to an atom: this
is called a \emph{join tree} \cite{gottlob2002hypertree}.
With Yannakakis's algorithm
\cite{yannakakis1981algorithms}, we can evaluate an $\alpha$-acyclic
conjunctive query $Q$
on an arbitrary instance $I$ in time $O(\card{I} \cdot \card{Q})$.

Yannakakis's result was generalized in two main directions. One
direction~\cite{gottlob2014treewidth}
has investigated more general CQ classes, in particular
CQs of bounded treewidth~\cite{flum2002query},
\emph{hypertreewidth}~\cite{gottlob2002hypertree},
and \emph{fractional hypertreewidth}~\cite{grohe2014constraint}.
Bounding these query parameters to some
fixed $k$ makes query evaluation run in time
$O((\card{I} \cdot \card{Q})^{f(k)})$ for some function $f$, hence in
PTIME; for
treewidth, since the decomposition can be computed
in FPT-linear time~\cite{bodlaender1996linear}, this goes down to
$O(\card{I}^k\cdot\card{Q})$. However, query
evaluation on arbitrary instances is unlikely to be FPT when parameterized by
the query treewidth, 
since it would imply that deciding if a graph contains a $k$-clique is
FPT parameterized by $k$, which is widely believed to be false in
parameterized complexity theory (this is the $\text{W[1]}\neq\text{FPT}$ assumption).
Further, even for treewidth~2 (e.g.,
triangles), it is not known if 
we can achieve linear data complexity~\cite{alon1997finding}.

In another direction, $\alpha$-acyclicity has been generalized to queries
with more expressive operators, e.g.,
disjunction or negation. The result on $\alpha$-acyclic CQs thus extends
to the \emph{guarded fragment} (GF) of first-order logic, which can be evaluated
on arbitrary instances
in time $O(\card{I} \cdot \card{Q})$
\cite{leinders2005semijoin}.
Tractability is independently known for FO$^k$, the fragment of
FO where subformulas use at most $k$ variables, with
a simple evaluation algorithm in $O(\card{I}^k \cdot \card{Q})$~\cite{vardi1995complexity}.

Other important operators are
\emph{fixpoints}, which can be used to express, e.g., reachability
queries.
Though 
FO$^k$ is no longer tractable when adding fixpoints~\cite{vardi1995complexity},
query evaluation is tractable
for
$\mu$GF 
\cite[Theorem~3]{berwanger2001games},
i.e., GF with some restricted least and greatest fixpoint operators, when
\emph{alternation depth} is bounded; without alternation, the combined
complexity
is in $O(\card{I} \cdot \card{Q})$.
We could alternatively express fixpoints in Datalog, 
but, sadly, most known tractable fragments are nonrecursive:
nonrecursive stratified Datalog is tractable 
\cite[Corollary~5.26]{flum2002query}
for rules with restricted bodies
(i.e., strictly acyclic, or bounded strict treewidth). 
This result was generalized in \cite{gottlob2003robbers} when bounding the number
of guards: this nonrecursive fragment is shown to be equivalent to the
$k$-guarded fragment of FO, with connections to the bounded-hypertreewidth
approach.
One recursive tractable fragment is Datalog LITE, which is equivalent to alternation-free
$\mu$GF \cite{gottlob2002datalog}.
Fixpoints were independently studied for graph query languages
such as
reachability queries and \emph{regular path queries}
(RPQ), which enjoy linear combined complexity on arbitrary input instances: this
extends to two-way RPQs (2RPQs) and even strongly acyclic conjunctions of 2RPQs (SAC2RPQs),
which are expressible in alternation-free $\mu$GF. Tractability also extends to acyclic C2RPQs
but with PTIME complexity~\cite{barcelo2013querying}.

\subsection{Tractability on Treelike Instances}
\label{sec:treelike}

We now study another approach for tractable query evaluation:
this time, we restrict the shape of the \emph{instances}, using treewidth. This
ensures that we can
translate them to a tree for efficient query evaluation, using tree automata techniques.

\myparagraph{Tree encodings}
\label{apx:tree-encodings}

Informally, having fixed the signature $\sigma$, for a fixed treewidth $k \in
\NN$, we define a finite tree alphabet $\Gamma^k_\sigma$ such that
$\sigma$-instances of treewidth $\leq k$ can be translated in FPT-linear time
(parameterized by $k$), following the structure of a tree decomposition, to a
(rooted full ordered binary) \emph{$\Gamma^k_\sigma$-tree}, which we call a \emph{tree encoding}. 
 Formally:
 
 \begin{definition}
	 \label{def:tree_enc}
Let $\sigma$ be a signature, and let $k \in \NN$.
	 We define the domain $\calD_k = \{a_1, \ldots, a_{2k+2}\}$ and
the finite alphabet $\Gamma^k_\sigma$ whose elements are pairs $(d, s)$, with $d$
being a subset of up to $k+1$ elements of $\calD_k$, and $s$ being a~$\sigma$-instance consisting at most one $\sigma$-fact over some subset
of~$d$ (i.e., $\dom(s) \subseteq d$): in the latter case, we will abuse notation and identify~$s$ with the one fact that it contains. A
\emph{$(\sigma,k)$-tree encoding} is simply a rooted, binary, ordered,
full $\Gamma^k_\sigma$-tree $\la E,\lambda\ra$.
\end{definition}

The fact that $\la
  E,\lambda\ra$
  is rooted and ordered is merely for technical convenience when running bNTAs, but it is otherwise inessential.

  \begin{example}
	  The tree depicted in black in Figure~\ref{fig:treeenc} is a
          $(\{R,S,T\},2)$-tree encoding. For now, ignore the annotations
          in red and green; the link with
	  Example~\ref{expl:instance} will be explained later.
	  The domain $\calD_2$ is $\{a,b,c,d,e,f\}$, but we only use $\{a,b,c,d\}$.
  \end{example}

  A tree
encoding $\la E,\lambda\ra$ 
can be decoded
  to an
instance $I$ with the elements of~$\calD_k$ being decoded
  to new instance elements. Informally,
we create a fresh instance element
for each occurrence of an element $a_i \in \calD_k$ in an \emph{$a_i$-connected
subtree of~$E$}, i.e., a maximal connected subtree where $a_i$ appears in the
first component of the label of each node. In other words, reusing
the same $a_i$ in adjacent nodes in $\la E,\lambda\ra$ means that they stand for the same
element, and using $a_i$ elsewhere in the tree creates a new element. 
Formally:

\begin{definition}
	\label{def:tree_enc_decode}
	Let $\la E, \lambda \ra$ be a $(\sigma,k)$-tree encoding, where, for each node $n$ of $E$, we write $\lambda(n) = (d_n,s_n)$.
	A set $S$ of \emph{bag decoding functions for $\la E, \lambda \ra$} consists of one function $\dec_n$ with domain $d_n$ for every node $n$ of $E$.
	We say that $S$ is \emph{valid} if $S$ satisfies the following
        condition:
		for every $a \in \calD_k$ and nodes $n_1, n_2$ of $E$
                such that $a \in d_{n_1}$ and $a \in d_{n_2}$,
                we have $\dec_{n_1}(a) = \dec_{n_2}(a)$ if and only if $n_1$ and $n_2$ are in the same
			$a$-connected subtree of $\la E, \lambda \ra$.
\end{definition}

\begin{example}
	\label{expl:valid_bag_dec}
	Consider again the tree encoding $\la E, \lambda \ra$ in Figure~\ref{fig:treeenc}. 
	For each node $n \in E$, let $\dec_n$ be the function that is defined by the green annotations next to $n$.
	Then one can check that $S = \{\dec_n \mid n \in E\}$ is a valid set of bag decoding functions for $\la E, \lambda \ra$.
\end{example}

	We can use a valid set of bag decoding functions $S$ to decode a tree encoding $\la E, \lambda \ra$ to a $\sigma$-instance:
\begin{definition}
        Let $\la E, \lambda \ra$ be a $(\sigma,k)$-tree encoding, where, for each node $n$ of $E$, we write $\lambda(n) = (d_n,s_n)$, and let $S$ be a 
	valid set of bag decoding functions for $\la E, \lambda \ra$.
	The $\sigma$-instance $\dec_S(\la E, \lambda \ra)$ is defined as
        follows.
	The elements of $\dec_S(\la E, \lambda \ra)$ are $\{\dec_n(a) \mid n \in E,~a \in d_n\}$.
	The facts of $\dec_S(\la E, \lambda \ra)$ are $\{R(\dec_n(\mathbf{a})) \mid n \in E,~ s_n = R(\mathbf{a})\}$.
\end{definition}

\begin{example}
	Continuing Example~\ref{expl:valid_bag_dec}, consider again the tree encoding $\la E, \lambda \ra$ and valid set $S$ of decoding functions for $\la E, \lambda \ra$.
	Then, computing $\dec_S(\la E, \lambda \ra)$ yields the instance from Example~\ref{expl:instance}.
\end{example}

A tree encoding can have multiple valid sets $S$ of bag decoding functions. 
However, the choice of $S$ does not matter since they all decode to isomorphic instances:

\begin{lemma}[\cite{amarilli2015provenance}]
  Let $\la E, \lambda \ra$ be a tree encoding, and $S_1, S_2$ be two valid sets of bag decoding functions of $\la E, \lambda \ra$.
  Then $\dec_{S_1}(\la E, \lambda \ra)$ and $\dec_{S_2}(\la E, \lambda \ra)$ are isomorphic.
\end{lemma}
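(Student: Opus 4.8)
The plan is to exhibit an explicit isomorphism. For a valid set $S$, write $\dec^{S}_n$ for its bag decoding functions. I would define the candidate bijection $f : \dom(\dec_{S_1}(\la E,\lambda\ra)) \to \dom(\dec_{S_2}(\la E,\lambda\ra))$ by $f(\dec^{S_1}_n(a)) \defeq \dec^{S_2}_n(a)$ for every node $n$ and $a \in d_n$. Every element of the source domain is of the form $\dec^{S_1}_n(a)$, so $f$ is defined everywhere, and by symmetry the same formula with $S_1, S_2$ swapped is the natural candidate for $f^{-1}$. Likewise, the facts of each decoded instance are exactly the $R(\dec^{S}_n(\mathbf a))$ for nodes $n$ with $s_n = R(\mathbf a)$, so $f$ sends each such fact of $\dec_{S_1}$ to the corresponding fact $R(\dec^{S_2}_n(\mathbf a))$ of $\dec_{S_2}$ and conversely. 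Thus, once $f$ is shown to be a well-defined bijection, it is automatically an isomorphism, and the fact-preservation part is routine.

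Everything therefore reduces to a single statement about the \emph{equality pattern} of occurrences, which I claim is independent of $S$: for any two occurrences $(n,a)$ and $(n',a')$ with $a \in d_n$, $a' \in d_{n'}$,
\[
  \dec^{S}_n(a) = \dec^{S}_{n'}(a') \iff a = a' \text{ and } n, n' \text{ lie in the same } a\text{-connected subtree of } \la E, \lambda \ra.
\]
If this holds for every valid $S$, then the right-hand side does not depend on $S$; in particular the equality pattern of $S_1$ coincides with that of $S_2$, which gives at once that $f$ is well-defined (the value $\dec^{S_2}_n(a)$ depends only on the source element $\dec^{S_1}_n(a)$) and injective (by running the argument with $S_1,S_2$ swapped), surjectivity being immediate. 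A clean way to package this is to introduce a single canonical instance $I_0$ whose elements are the pairs $(a, C)$ with $C$ an $a$-connected subtree, and whose facts are read off the labels $s_n$, and to prove $\dec_{S} \cong I_0$ for every valid $S$ via $(a,C) \mapsto \dec^S_n(a)$ for any $n \in C$; the Lemma then follows by composing $\dec_{S_1} \cong I_0 \cong \dec_{S_2}$.

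The easy direction of the displayed equivalence is ``$\Leftarrow$'': if $a = a'$ and $n, n'$ are in the same $a$-connected subtree, the validity condition applied to this fixed $a$ gives $\dec^S_n(a) = \dec^S_{n'}(a)$. I expect the main obstacle to be the ``$\Rightarrow$'' direction, i.e.\ showing that a valid decoding never accidentally identifies two occurrences from different classes. When $a = a'$ this is again exactly the validity condition (its ``only if'' half). The delicate case is $a \neq a'$: here the per-element validity condition says nothing, and one must use that the decoding assigns a genuinely \emph{fresh} element to each distinct pair $(a, C)$, i.e.\ that the functions $\dec^S_n$ have pairwise disjoint ranges across the tree except where the connected-subtree condition forces equality. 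This global freshness is inherent in the notion of decoding (each occurrence of an element $a_i$ in an $a_i$-connected subtree spawns a new element), and it is the one ingredient beyond the literal same-$a$ statement of validity that the argument relies on; isolating and invoking it cleanly is the crux of the proof.
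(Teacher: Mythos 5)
The paper itself contains no proof of this lemma---it is imported by citation from \cite{amarilli2015provenance}---so your argument has to stand on its own. Its skeleton is the natural one and matches what the cited construction does: factor both decodings through the canonical instance whose elements are the pairs of a letter and an $a$-connected subtree, observe that fact-preservation is automatic once the bijection exists, and reduce everything to the claim that the equality pattern of occurrences is the same for every valid $S$. The two same-letter cases of that claim are indeed exactly the two halves of the validity condition, as you say.

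The cross-letter case, which you correctly single out as the crux, is however a genuine gap, and it cannot be closed from the paper's formal definitions: the validity condition quantifies over a \emph{single} letter $a$ and two nodes, so it puts no constraint whatsoever on $\dec_{n}(a)$ versus $\dec_{n'}(b)$ when $a \neq b$, and the lemma is in fact false under the literal definition. Concretely, take a one-node encoding $n$ with $d_n = \{a_1,a_2\}$ and $s_n = R(a_1,a_2)$. The set given by $\dec_n(a_1) \defeq z$ and $\dec_n(a_2) \defeq z$ satisfies the validity condition vacuously (the only instances of the condition have $n_1 = n_2 = n$ and a single letter), yet it decodes to $\{R(z,z)\}$, which is not isomorphic to the decoding $\{R(x,y)\}$ with $x \neq y$ produced by an injective choice. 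So ``global freshness'' is not something you can invoke as ``inherent in the notion of decoding'': it is a property of the informal prose (``we create a fresh instance element for each occurrence\dots''), not of the formal notion of valid bag decoding functions, and no argument can derive it from the definitions as given. To complete the proof you must make freshness an explicit hypothesis or strengthen validity accordingly---require that $\dec_{n}(a) = \dec_{n'}(b)$ implies $a = b$, after which the same-letter condition takes over---which is what the constructive decoding of \cite{amarilli2015provenance} guarantees by fiat. With that clause granted, your argument is complete and correct as written.
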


Hence, we will now write $\dec(\la E, \lambda \ra)$, forgetting the
subscript $S$, since we are not interested in distinguishing isomorphic
instances (and since there always exists at least one valid
set of bag decoding functions, for every tree encoding).
Furthermore, it is easy to see that $\decode{\la E,\lambda\ra}$ has treewidth $\leq k$, as a tree decomposition for it can
be constructed from $\la E,\lambda\ra$. 
Conversely, for any instance $I$ of treewidth $\leq k$, we can compute a
$(\sigma,k)$-encoding $\la
E,\lambda\ra$ such that $\decode{\la E,\lambda\ra}$ is $I$ (up
to isomorphism). 
We say that $\la E, \lambda \ra$ \emph{is a tree encoding of $I$}:

\begin{definition}
	Let $I$ be a $\sigma$-instance, and $\la E, \lambda \ra$ be a $(\sigma,k)$-tree encoding (for some $k \in \NN$).
	We say that $\la E, \lambda \ra$ \emph{is a tree encoding of $I$} if $\dec(\la E, \lambda \ra)$ is $I$, up to isomorphism.
\end{definition}

Informally, given $I$ of treewidth $\leq k$, we can construct a
$(\sigma,k)$-tree encoding $\la E, \lambda \ra$ of $I$ from a tree
decomposition of $I$ as follows: copy each
bag of the decomposition multiple times so that 
each fact can be coded in a separate node; arrange these copies in a
binary tree to make the tree encoding binary; make the tree encoding full by
adding empty nodes.
We can easily show that this process is FPT-linear for $k$,
so that we will use the following claim (see \cite{amarilli2016leveraging} for
  our type of encodings):

\begin{lemma}[\cite{flum2002query}]
  \label{lem:getencoding}
  The problem, given an instance~$I$ of treewidth $\leq k$,
  of computing a tree encoding of $I$, is FPT-linear parameterized by~$k$.
\end{lemma}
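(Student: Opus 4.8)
The plan is to follow the informal recipe stated just before the lemma, making the size and time bounds explicit. First I would invoke Bodlaender's algorithm~\cite{bodlaender1996linear} to compute, in time $O(f(k)\cdot\card{I})$, a tree decomposition $T$ of $I$ of width $\leq k$; by a standard post-processing (merging or duplicating bags, and discarding bags equal to their parent) I may assume $T$ has $O(\card{I})$ bags, so that the remaining construction stays FPT-linear. The one genuinely substantive point is the assignment of symbols from the fixed domain $\calD_k = \{a_1,\ldots,a_{2k+2}\}$ to the elements of $\dom(I)$, and this is exactly what forces the size $2k+2$ of $\calD_k$.

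Second, the coloring. I process $T$ top-down and color each element $u \in \dom(I)$ with a single symbol $c(u) \in \calD_k$ at the moment it is first encountered; by tree-decomposition property~(ii) the bags containing $u$ form a connected subtree, so this is well defined and $u$ keeps the color $c(u)$ throughout that subtree. At a bag $b'$ with parent $b$, the elements of $\dom(b) \cap \dom(b')$ are already colored, and I give each new element of $\dom(b') \setminus \dom(b)$ a color distinct from all colors occurring on $\dom(b) \cup \dom(b')$. This is always possible with $2k+2$ symbols: the set $\dom(b) \cup \dom(b')$ has at most $2(k+1) = 2k+2$ elements, the $\leq k+1$ already-used colors all come from $\dom(b)$, so at least $k+1$ free colors remain for the $\leq k+1$ new elements. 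I would then argue that this coloring is exactly what validity (Definition~\ref{def:tree_enc_decode}) demands: whenever a new element of $\dom(b') \setminus \dom(b)$ is colored it avoids the colors of every element of $\dom(b)$, so no two distinct elements sharing a symbol $a$ can have their occurrences lie in a common $a$-connected subtree of $\la E, \lambda \ra$.

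Third, I build the encoding. For each bag $b$ I replace it by a chain of nodes all carrying the same first component $d_b = \{c(u) \mid u \in \dom(b)\}$, with one node per fact $R(\mathbf{a}) \in I$ such that $\mathbf{a} \subseteq \dom(b)$, placing in that node the label $s_n = R(c(\mathbf{a}))$; for each fact I select exactly one bag containing it (such a bag exists by property~(i)), so every fact is coded once. Consecutive copies share all of $d_b$, hence the $a$-connected subtrees and the decoding are unaffected. Finally I binarize the unranked tree by inserting intermediary nodes that repeat their parent's first component (with empty fact), and make it full by appending empty nodes $(\emptyset, \emptyset)$; both operations only add nodes whose label repeats or is empty, so they preserve the decoded instance. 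Comparing $\la E, \lambda \ra$ with Definition~\ref{def:tree_enc_decode} then shows $\decode{\la E, \lambda \ra} \cong I$, the isomorphism sending each $u$ to its (unique) decoded copy.

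The main obstacle is not any single deep lemma but the interaction of the size bounds with correctness: one must verify that the greedy top-down coloring is simultaneously proper within each bag, consistent along each element's subtree, and separating for equally-colored distinct elements, and that binarization, full-completion and fact-splitting all keep the node count in $O(f(k)\cdot\card{I})$ while leaving the decoded instance isomorphic to $I$. Granting Bodlaender's FPT-linear decomposition as a black box, every remaining step is a single linear pass over $T$, so the total running time is $O(f(k)\cdot\card{I})$, as required.
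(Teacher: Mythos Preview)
Your proposal is correct and follows essentially the same approach as the paper, which only sketches the construction informally (copy bags once per fact, binarize, make full) and defers details to~\cite{flum2002query,amarilli2016leveraging}. You have usefully made explicit the greedy top-down coloring with $2k+2$ symbols and the reason this size is necessary (new elements in a child bag must avoid all colors of the parent bag, not just those in the intersection), which the paper leaves implicit in its citations.
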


We sum up this discussion about tree encoding with the full example.

\begin{example}
	Remember that Figure~\ref{fig:treeenc} presents a tree encoding $\la E,\lambda\ra$ for $k=2$ and the signature $\sigma$ from Example~\ref{expl:instance}. 
	Remember we can decode $\la E,\lambda\ra$ by the mappings (drawn in green), obtaining this way 
	the instance $I$ from Example~\ref{expl:instance}. 
	Hence $\la E,\lambda\ra$ is a tree encoding of $I$.
	Moreover, we recall that any valid way of decoding $\la E,\lambda\ra$ would yield an instance isomorphic to $I$.
	We point out a few details that can help understand how these encodings work. 
	The elements ``a'' in bags $\alpha$ and $\beta$ are decoded to distinct instance elements, since $\alpha$
	and $\beta$ are not in a same $a$-connected subtree of the tree encoding. 
	Bags like $\gamma$, that contain elements but no fact, are usually used in order to help making the tree encoding binary. Empty
	bags like $\delta$ can be used in order to make the tree encoding full.
\begin{figure}
  \centering
  \includegraphics[scale=0.12]{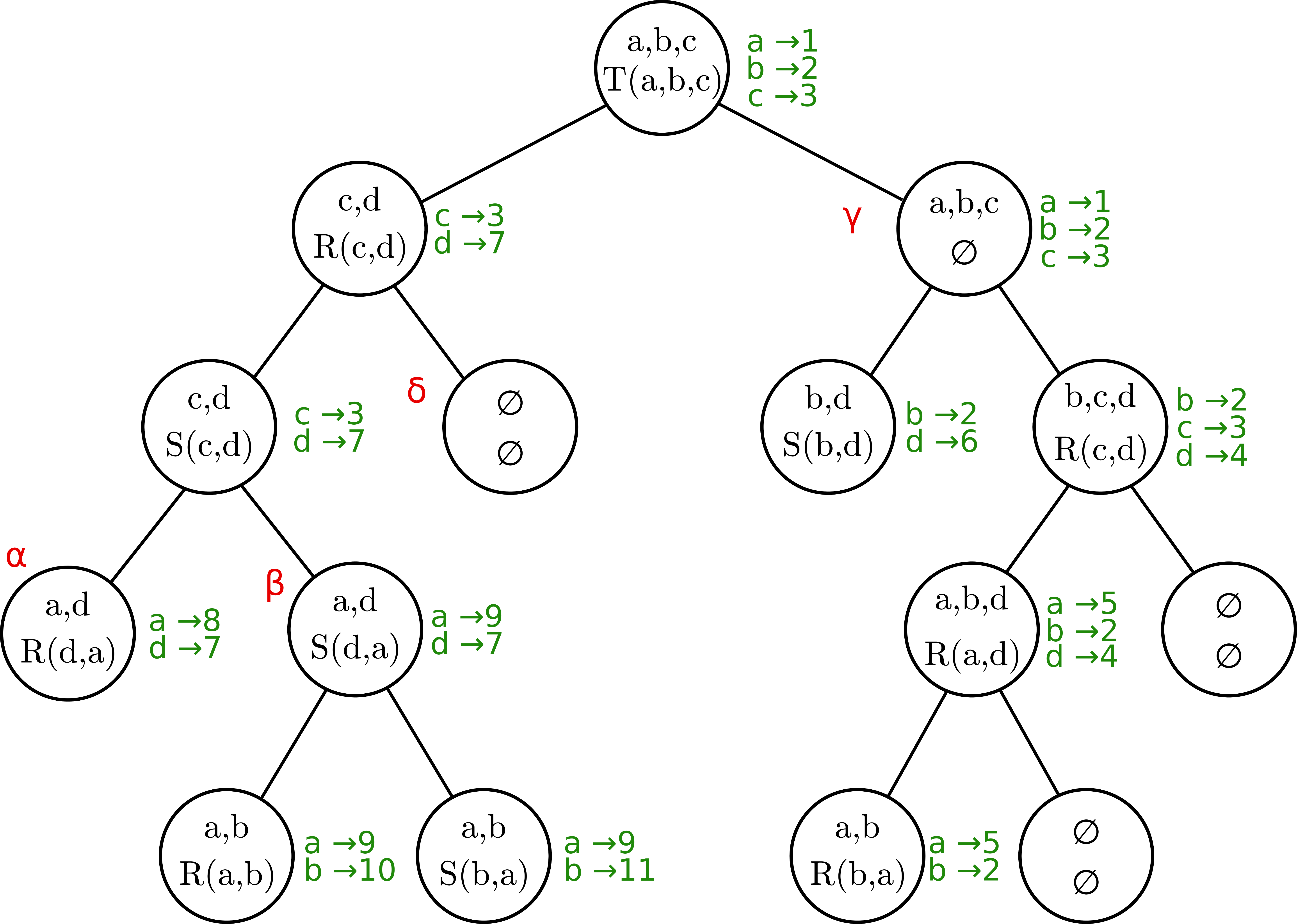}

\caption{Tree encoding of the relational instance from Example~\ref{expl:instance}.}
\label{fig:treeenc}
\end{figure}
\end{example}

\myparagraph{Bottom-up tree automata}

We can then evaluate queries on treelike instances by running \emph{tree
automata} on the tree encoding that represents them.
Formally, given an alphabet $\Gamma$, a \emph{bottom-up
nondeterministic tree automaton}
on $\Gamma$-trees (or $\Gamma$-bNTA) is a tuple $A = (Q,F,\iota,\Delta)$,
where:

\begin{compactenum}[(i)]
	\item $Q$ is a finite set of \emph{states};
	\item $F \subseteq Q$ is a subset of \emph{accepting states};
	\item $\iota : \Gamma \to 2^Q$ 
          is an \emph{initialization function} 
          determining the possible states of a leaf from its label;
	\item $\Delta :  \Gamma \times Q^2
          \to 2^Q$ 
          is a \emph{transition function} 
		determining the possible states for an internal 
                node from its label and the states of its two
                children.
\end{compactenum}
Given a $\Gamma$-tree $\la T,\lambda\ra$ 
(where $\lambda : T \to \Gamma$ is the \emph{labeling function}),
we define a \emph{run} of $A$ on~$\la T,\lambda\ra$
as a function $\phi : T \to Q$ such that
(1)~$\phi(l) \in \iota(\lambda(l))$  for every leaf $l$ of~$T$; and
(2)~$\phi(n) \in \Delta( \lambda(n), \phi(n_1),\phi(n_2))$
for every internal node $n$ of~$T$ with children $n_1$ and $n_2$.
The bNTA $A$ \emph{accepts} $\la T,\lambda\ra$
if
it has a run on~$T$ mapping
the root of~$T$ to a state of~$F$.

We say that a bNTA $A$ \emph{tests a query $Q$ on instances of treewidth~$\leq k$} if, for any
$\Gamma^k_\sigma$-encoding $\la E,\lambda\ra$ coding an instance $I$ (of
treewidth $\leq k$), 
$A$ accepts $\la E,\lambda\ra$ iff $I \models Q$.
By a well-known result of Courcelle \cite{courcelle1990monadic} on graphs
(extended to higher-arity in \cite{flum2002query}),
we can use bNTAs to evaluate all queries in \emph{monadic second-order
logic} (MSO), i.e., first-order logic with second-order variables of arity~$1$.
MSO subsumes in particular CQs 
and monadic
Datalog (but not general Datalog).
Courcelle showed that MSO
queries can be translated to a bNTA that tests them:

\begin{theorem}[\cite{courcelle1990monadic,flum2002query}]
  \label{thm:courcelle}
  For any MSO query $Q$ and treewidth $k \in \NN$, we can
  compute a bNTA that tests $Q$ on instances of treewidth $\leq k$.
\end{theorem}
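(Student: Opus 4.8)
The plan is to reduce the statement to the classical effective equivalence between monadic second-order logic on finite labeled trees and bottom-up tree automata, by exhibiting the decoding map of Definition~\ref{def:tree_enc_decode} as an MSO interpretation. Fix the signature $\sigma$ and the width $k$, so that the alphabet $\Gamma^k_\sigma$ and the domain $\calD_k = \{a_1,\dots,a_{2k+2}\}$ are fixed and finite. A $(\sigma,k)$-tree encoding $\la E,\lambda\ra$ is then a finite binary ordered tree whose nodes carry one of finitely many labels; such a tree is a finite relational structure over a fixed signature (two child relations together with one unary predicate per alphabet letter), and on this class of structures MSO sentences can be effectively compiled into $\Gamma^k_\sigma$-bNTAs, and conversely. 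Hence it suffices to produce, from $Q$, an MSO sentence $\tilde Q$ over $\Gamma^k_\sigma$-trees such that $\la E,\lambda\ra \models \tilde Q$ iff $\decode{\la E,\lambda\ra} \models Q$.

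To build $\tilde Q$ I would first give an MSO interpretation $\Theta$ of the decoded instance $\decode{\la E,\lambda\ra}$ inside the tree $\la E,\lambda\ra$. By Definition~\ref{def:tree_enc_decode}, the elements of $\decode{\la E,\lambda\ra}$ are the $a_i$-connected subtrees of $E$, for $a_i$ ranging over the finite set $\calD_k$. I therefore use $\card{\calD_k} = 2k+2$ disjoint copies of the node set as the universe of the interpretation, the $i$-th copy being reserved for elements of the form $\dec_n(a_i)$. Within the $i$-th copy I keep only one canonical node per $a_i$-connected subtree, namely its unique topmost node, i.e.\ a node $n$ with $a_i \in d_n$ whose parent, if any, satisfies $a_i \notin d_{\mathrm{parent}(n)}$; this is definable from the child relations and the alphabet predicates. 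This yields a bijection between the surviving copies and the domain of $\decode{\la E,\lambda\ra}$.

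The key auxiliary relation is ``$n_1$ and $n_2$ lie in the same $a_i$-connected subtree'', which, by the characterization underlying Definition~\ref{def:tree_enc_decode}, holds iff every node on the unique tree path between $n_1$ and $n_2$ has $a_i$ in its first component; the path predicate is MSO-definable on trees, so this relation is MSO-definable. Using it, each relation $R$ of $\decode{\la E,\lambda\ra}$ is interpreted by the formula stating that some node $n$ has $s_n$ equal to an $R$-fact over $(a_{i_1},\dots,a_{i_{\arity{R}}})$ and that the canonical representatives of these occurrences, in the respective copies, are the argument elements; this is expressible since the fact carried by a node is read directly from its finitely many possible labels. I would then invoke the backwards-translation (inverse-image) property of MSO interpretations: pulling $Q$ back along $\Theta$ produces the desired sentence $\tilde Q$. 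Since distinct valid decodings yield isomorphic instances by the lemma of~\cite{amarilli2015provenance} and $Q$ is isomorphism-invariant, the choice of canonical representatives in $\Theta$ is immaterial. Compiling $\tilde Q$ into a $\Gamma^k_\sigma$-bNTA $A$ then gives an automaton that accepts an encoding of $I$ iff $I \models Q$, i.e.\ that tests $Q$ on instances of treewidth $\leq k$.

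The main obstacle is the construction of $\Theta$: arranging the copies so that distinct occurrences of the same $a_i$ in different connected subtrees are correctly separated while occurrences in the same subtree are identified. Everything else—the MSO/tree-automata equivalence and the closure of MSO under interpretations—is classical and effective, which is what guarantees that the resulting bNTA is \emph{computable} from $Q$ and $k$. The only genuine content is thus verifying that the connected-subtree equivalence of Definition~\ref{def:tree_enc_decode}, and hence the whole decoding, is faithfully captured by MSO formulas over the tree, and that higher-arity facts, stored atom-by-atom in single nodes, are handled uniformly across the relations of the fixed signature $\sigma$.
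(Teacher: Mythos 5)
This theorem is imported from the literature (\cite{courcelle1990monadic,flum2002query}) and the paper gives no proof of its own, so your reconstruction is measured against the standard argument underlying those references --- and it matches it: you realize the decoding of Definition~\ref{def:tree_enc_decode} as a finitely-copying MSO interpretation (one copy of the node set per element of $\calD_k$, canonical representatives at the topmost node of each $a_i$-connected subtree, the same-subtree relation defined via all-nodes-on-the-path conditions exactly as in Property~\ref{propTreeEnc1}), pull $Q$ back along this interpretation, and compile the resulting MSO sentence on $\Gamma^k_\sigma$-trees into a bNTA by the classical MSO-to-tree-automata equivalence. The construction is sound, and the two points you flag as the only genuine content are indeed handled correctly: connectivity of occurrence subtrees is MSO-definable on trees, and isomorphism-invariance of the constant-free Boolean query $Q$ makes the choice of valid decoding, hence of canonical representatives, immaterial.
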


This implies that evaluating any MSO query $Q$ has FPT-linear
\emph{data complexity} when parameterized by $Q$ 
and the instance treewidth~\cite{courcelle1990monadic,flum2002query},
i.e., is in $O\left(f(\card{Q}, k) \cdot
  \card{I}\right)$ for some computable function~$f$.
However, this tells little about the combined complexity, as 
$f$ is generally nonelementary 
in $Q$ \cite{meyer1975weak}.
A better combined complexity bound is known for unions of conjunctions of two-way regular path queries (UC2RPQs) that are further required to be acyclic and to have a constant number
of edges between pairs of variables: these can be translated into
polynomial-sized alternating two-way automata \cite{barcelo2014does}.

\subsection{Restricted Queries on Treelike Instances}

Our approach combines both ideas: we use instance treewidth as a parameter, but also restrict the queries to ensure tractable
translatability.
We are only aware of two approaches in this spirit.
First, 
Gottlob, Pichler, and Wei~\cite{gottlob2010monadic}
have proposed a
\emph{quasiguarded} Datalog fragment on \emph{relational structures
and their tree decompositions}, 
for which query evaluation is in $O(|I|\cdot|Q|)$. However,
this formalism requires queries to be expressed in terms of the tree
decomposition, and not just in terms of the relational signature.
Second, Berwanger and Grädel~\cite{berwanger2001games} remark (after Theorem~4) that, when
alternation depth and \emph{width} are bounded,
$\mu$CGF (the 
\emph{clique-guarded} fragment of FO with fixpoints) enjoys FPT-bilinear query evaluation when
parameterized by instance treewidth.
Their approach does not rely on automata methods, and subsumes the tractability
of $\alpha$-acyclic CQs and alternation-free $\mu$GF (and hence SAC2RPQs), on treelike instances.
However, $\mu$CGF is a restricted query language (the only CQs
that it can express are those with a chordal primal graph),
whereas we want a richer language, with a
parameterized definition.

Our goal is thus to develop an expressive parameterized
query language, which can be translated in FPT-linear time to an automaton
that tests it (with the treewidth of instances also being a parameter). We can then evaluate
the automaton, and obtain FPT-bilinear combined complexity for query evaluation.
Further, as we will show, the use of tree
automata will yield \emph{provenance representations} for the
query as in~\cite{amarilli2015provenance} (see
Section~\ref{sec:provenance}).

\section{Conjunctive Queries on Treelike Instances}
\label{sec:cq}
To identify classes of queries that can be efficiently translated to tree
automata, we start by the simplest queries: \emph{conjunctive
queries}.

\subparagraph*{$\parabm{\alpha}$-acyclic queries.}
A natural candidate for a tractable query class via automata methods would be
\emph{$\alpha$-acyclic} CQs, which, as we explained in
Section~\ref{sec:tractableall},
can be evaluated in time $O(\card{I} \cdot \card{Q})$ on all instances. Sadly,
we show that such queries cannot be translated efficiently to bNTAs, so
the translation result of Theorem~\ref{thm:courcelle} does not extend directly:
\begin{proposition}
  \label{prp:bntalower}
  There is an arity-two signature $\sigma$ and an infinite family
  $(Q_i)_{i \in \NN}$	
  of $\alpha$-acyclic CQs
  such
  that, for any $i \in \NN$, any bNTA that tests~$Q_i$ on instances of treewidth $\leq 1$ must have 
  $\Omega(2^{\card{Q_i}^{1-\epsilon}})$ states for any $\epsilon>0$.
\end{proposition}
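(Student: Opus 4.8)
The plan is to prove this as a \emph{crossing-sequence} (fooling-set) lower bound on the number of states of any bottom-up automaton, in the spirit of the classical exponential gaps between one-way and two-way/alternating automata. The key conceptual point is that a bNTA processes a tree encoding bottom-up and must summarize each subtree into a single state; when it tests a conjunctive query $Q$ with many conjuncts, the state it passes upward through a node must record \emph{which subset of the conjuncts of $Q$ have already been matched in the subtree below}, since a satisfying homomorphism must eventually match all conjuncts exactly once across the whole instance. I would therefore design $Q_i$ as a conjunction of $m$ independent ``legs'' and exhibit, for every subset $S \subseteq \{1,\dots,m\}$, a pair of tree-encoding contexts (a lower part $L_S$ hanging below a distinguished cut edge of $E$, and an upper part $U_S$ closing it into a full encoding) realizing all $2^m$ distinct matched-subsets. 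Concretely, $L_S$ provides in-instance witnesses for exactly the legs indexed by $S$, and $U_S$ provides witnesses for the complementary legs, so that $\decode{L_{S} \cdot U_{T}}$ satisfies $Q_i$ if and only if the union of the two index sets is all of $\{1,\dots,m\}$ (every leg is matched on one side or the other).

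First I would fix a signature with a few binary relations (encoding path edges and address bits) plus a unary marker, and build $Q_m$ as a ``spider'': a central variable $y$ with $m$ legs, where leg $i$ is a path from $y$ spelling the fixed-length binary address of $i$ and ending in a marked variable. Since the atoms of $Q_m$ form a tree, its primal graph is chordal and one checks directly that it admits a join tree, so $Q_m$ is $\alpha$-acyclic; using binary addresses keeps its size at $\card{Q_m} = O(m \log m)$, which is what forces the exponent $1-\epsilon$ rather than a clean $2^{\card{Q}}$ (and, crucially, makes $m = \Omega(\card{Q_m}^{1-\epsilon})$ hold for \emph{every} $\epsilon > 0$). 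The instances are trees, hence of treewidth~$1$: a center element $c$ (the image of $y$) with edge-disjoint pendant address-paths attached to it, each path spelling one index and ending at a marked leaf. Because the paths are edge-disjoint and each spells a distinct fixed-length address, leg $i$ of $Q_m$ can only map into the path spelling $i$, with no cross-matching between legs and no mixing of the lower and upper parts; thus leg $i$ is satisfied exactly when the address $i$ is present, and $\decode{L_S \cdot U_T} \models Q_m \iff S \cup T = \{1,\dots,m\}$.

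I would then run the standard splicing argument. Taking the fooling pairs $(L_S, U_{\overline{S}})$ for $S \subseteq \{1,\dots,m\}$, each diagonal instance $\decode{L_S \cdot U_{\overline S}}$ satisfies $Q_m$, so any bNTA $A$ testing $Q_m$ on treewidth-$1$ instances has an accepting run, with some state $q_S$ at the cut edge of $E$. If $q_S = q_{S'}$ for $S \neq S'$, I splice the lower run of $L_S$ (reaching $q_S$) onto the upper run of $U_{\overline{S'}}$ (accepting from $q_{S'} = q_S$), obtaining an accepting run of $L_S \cdot U_{\overline{S'}}$; since $A$ tests $Q_m$ this forces $S \cup \overline{S'} = \{1,\dots,m\}$, i.e.\ $S' \subseteq S$, and the symmetric splice forces $S \subseteq S'$, contradicting $S \neq S'$. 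Hence $S \mapsto q_S$ is injective and $A$ has at least $2^m$ states, which is $\Omega\bigl(2^{\card{Q_m}^{1-\epsilon}}\bigr)$ for every $\epsilon > 0$.

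The main obstacle is not the fooling-set bookkeeping but arranging that the \emph{conjunctive} (hence existential-positive) query nonetheless yields a genuine fooling set, i.e.\ genuine off-diagonal \emph{rejections}: a single existential witness would be cheap for a nondeterministic automaton, so the hardness must come from the conjunction forcing a ``covering''/subset-containment structure across the cut, which is exactly what gives the required exponentially many inequivalent lower contexts. The two technical points I would be most careful about are (i) guaranteeing that each leg localizes to a single pendant path so that the acceptance condition is \emph{precisely} the union-equals-$\{1,\dots,m\}$ relation (no spurious homomorphisms through $c$ or across the cut, which I enforce via edge-disjointness and distinct equal-length addresses), and (ii) controlling the address-encoding overhead under the fixed-signature assumption so that $\card{Q_m} = m^{1+o(1)}$ and the bound survives for all $\epsilon>0$.
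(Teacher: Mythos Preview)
Your proposal is correct and essentially the same approach as the paper's: both build a spider query whose $m$ legs spell distinct fixed-length binary addresses (so $\card{Q_m}=\Theta(m\log m)$), both use treelike instances where the presence of a leg's address path is exactly what enables that leg to match, and both run a fooling-set/splicing argument showing that a bNTA must distinguish exponentially many subsets at a cut through the center. The only cosmetic differences are that the paper restricts to half-size subsets (getting $\binom{m}{m/2}$ via a single splice, since equal cardinalities turn $S_2\subseteq S_1$ into $S_1=S_2$) and packs the instance legs into a full binary $C$-tree rather than a star of pendant paths, but the construction and the arithmetic leading to $\Omega(2^{\card{Q}^{1-\epsilon}})$ are the same.
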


\begin{proof}
  We fix the signature $\sigma$ to consist of binary relations $S$, $S_0$, $S_1$, and
  $C$. We will code binary numbers as gadgets on this fixed signature. The
  coding of $i \in \NN$ at length~$k$, with $k \geq 1 + \lceil \log_2 i
  \rceil$,
  consists of an $S$-chain $S(a_1, a_2), \ldots,
  S(a_{k-1}, a_k)$, and facts $S_{b_j}(a_{j+1},
  a'_{j+1})$ for $1\leq j\leq k-1$ where $a'_{j+1}$ is a fresh element and $b_j$ is the $j$-th bit in the binary expression of $i$ (padding
  the most significant bits with 0). We now
  define the query family $Q_i$: each $Q_i$ is formed by picking a root variable
  $x$ and gluing $2^i$ chains to~$x$; for $0\leq j\leq 2^i-1$,
  we have one chain that is the 
  concatenation of a chain of~$C$ of length $i$ and the coding of
  $j$ at length~$(i+1)$ using a gadget. Clearly the size of $Q_i$ is $\Theta(i \times
  2^i)$.
  Now, fix $\epsilon>0$.
  As $|Q_i|$ is in $O(i \times 2^i)$, there exist $N, \beta > 0$ such that $\forall i \geq N,~|Q_i| \leq \beta \times i \times 2^i$. 
  But clearly, there exists $M > 0$ such that $\forall i \geq M,~(\beta \times i \times 2^i)^{1-\epsilon} \leq 2^i - i/2$.
  Hence for $i \geq \max(N,M)$ we have $2^i - i/2 \geq |Q_i|^{1 - \epsilon}$.

  Fix $i > 0$. Let $A$ be a bNTA testing $Q_i$ on instances of treewidth
  $1$. We will show that $A$~must have at least $\binom{2^{i}}{2^{i-1}}=\Omega\left(2^{2^{i}-\frac{i}{2}}\right)$ states (the lower
  bound is obtained from Stirling's formula), from which
  the claim follows since for $i \geq \max(N,M)$ we have $2^{2^{i}-\frac{i}{2}}\geq 2^{|Q_i|^{1-\epsilon}}$.
  In fact, we will consider a specific subset $\calI$ of the instances
  of treewidth $\leq 1$, and a specific set $\calE$ of tree encodings of
  instances of $\calI$, and show the claim on $\calE$, which suffices to
  conclude.

  To define $\calI$, let $\calS_i$ be the set of subsets of
  $\{0, \ldots, 2^i-1\}$ of cardinality $2^{i-1}$, so that $\card{\calS_i}$ is
  $\binom{2^{i}}{2^{i-1}}$. 
  We will first define a family~$\calI'$ of instances indexed by $\calS_i$
  as follows. Given $S \in \calS_i$, the
  instance $I'_S$ of $\calI'$ is obtained by constructing a full binary tree of
  the $C$-relation of height $i-1$, and identifying, for all $j$, the $j$-th
  leaf node with element $a_1$ of the length-$(i+1)$ coding of the $j$-th
  smallest number in~$S$.
  We now define the instances of $\calI$ to consist of a root element
  with two $C$-children, each of which are the root element of an instance of
  $\calI'$ (we call the two the \emph{child instances}). It is clear
  that instances of $\calI$ have treewidth $1$, and we can check quite easily
  that an instance of $\calI$ satisfies $Q_i$ iff the child instances $I'_{S_1}$ and
  $I'_{S_2}$ are such that $S_1 \cup S_2 = \{1, \ldots, 2^i\}$.
  
  We now define $\calE$ to be tree encodings of instances of $\calI$. First,
  define $\calE'$ to consist of tree encodings of instances of~$\calI'$, which
  we will also index with $\calS_i$, i.e., $E_S$ is a tree encoding of $I'_S$. We
  now define $\calE$ as the tree encodings $E$ constructed as follows:
  given an instance $I \in \calI$, we encode it as a root bag with
  domain $\{r\}$, where $r$ is the root of the tree $I$, and no fact, the
  first child $n_1$ of the root bag having
  domain $\{r, r_1\}$ and fact $C(r, r_1)$, the second child $n_2$
  of the root being defined in the same way. Now, $n_1$ has one dummy
  child with empty domain and no fact, and one child
  which is the root of some tree encoding in~$\calE$ of one child instance of
  $I$. We define $n_2$ analogously with the other child instance.

  For each $S \in \calS_i$, letting $\bar S$ be the complement of~$S$ relative
  to $\{0, \ldots, 2^i-1\}$, we call $I_S \in \calI$ the instance where the
  first child instance is $I'_S$ and the second child instance is $I'_{\bar S}$,
  and we call $E_S \in \calE$ the tree encoding of~$I_S$ according to the
  definition above. We then call $\calQ_S$ the set of states $q$ of~$A$ such that
  there exists a run of~$A$ on~$E_S$ where the root of the encoding of the first child
  instance is mapped to~$q$. As each $I_S$ satisfies~$Q$, each $E_S$
  should be accepted by the automaton, so each $\calQ_S$ is non-empty.
  
  Further,
  we show that the $\calQ_S$ are pairwise disjoint: for any $S_1 \neq S_2$ of~$\calS_i$,
  we show that $\calQ_{S_1} \cap \calQ_{S_2} = \emptyset$. Assume to the
  contrary the existence of $q$ in the intersection, and let $\rho_{S_1}$ and
  $\rho_{S_2}$ be runs of~$A$ respectively on $I_{S_1}$ and $I_{S_2}$ that
  witness respectively that $q \in \calQ_{S_1}$ and $q \in \calQ_{S_2}$. Now,
  consider the instance $I \in \calI$ where the first child instance is~$I_1$,
  and the second child instance is $\bar{I_2}$, and let $E \in \calE$ be the tree
  encoding of~$I$. We can construct a run~$\rho$ of~$A$ on~$E$ by defining
  $\rho$ according to~$\rho_{S_2}$ except that, on the subtree of~$E$ rooted at
  the root $r'$ of the tree encoding of the first child instance, $\rho$ is defined
  according to~$\rho_{S_1}$: this is possible because $\rho_{S_1}$ and
  $\rho_{S_2}$ agree on~$r_1'$ as they both map $r'$ to~$q$. Hence, $\rho$
  witnesses that $A$ accepts $E$. Yet, as $I_1 \neq I_2$, we know that $I$ does
  not satisfy~$Q$, so that, letting $E \in \calE$ be its tree encoding, $A$
  rejects~$E$. We have reached a contradiction, so indeed the $\calQ_S$ are
  pairwise disjoint.

  As the $\calQ_S$ are non-empty, 
  we can construct a mapping from $\calS_i$ to the state
  set of $A$ by mapping each $S \in \calS_i$ to some state of~$\calQ_S$: as the
  $\calQ_S$ are pairwise disjoint, this mapping is injective.
  We deduce that the state set of~$A$ has size at least $\card{\calS_i}$, which
  concludes from the bound on the size of~$\calS_i$ that we showed previously.
\end{proof}

Faced by this, we propose to use different tree automata formalisms,
which are generally more concise than bNTAs.
There are two classical generalizations of nondeterministic automata, on words~\cite{birget1993state}
and on trees~\cite{tata}: one goes from the inherent existential
quantification of nondeterminism to
\emph{quantifier alternation}; the other allows
\emph{two-way} navigation instead of
imposing a left-to-right (on
words) or bottom-up (on trees) traversal. On words, both of these extensions
independently allow for exponentially more compact
automata~\cite{birget1993state}. In this work, we combine both extensions
and use
\emph{alternating two-way
tree automata}~\cite{tata,cachat2002two}, formally introduced in
Section~\ref{sec:compilation}, which leads to tractable 
combined complexity for evaluation.
Our general results in the next section will then imply:

\begin{proposition}\label{prp:alphaatwa}
  For any treewidth bound $\ki \in \NN$, given an $\alpha$-acyclic CQ $Q$, we can
  compute in FPT-linear time in $O(\card{Q})$ (parameterized by $\ki$) an alternating two-way tree automaton that tests
  it on instances of treewidth $\leq \ki$.

  Hence, if we are additionally given a relational instance $I$ of treewidth
  $\leq \ki$, one can determine whether $I\models Q$ in FPT-bilinear time in
  $\card{I}\cdot\card{Q}$ (parameterized by $\ki$).
\end{proposition}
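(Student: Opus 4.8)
The plan is to reduce this statement to the general translation result for CFG-Datalog established in Sections~\ref{sec:CFG} and~\ref{sec:compilation}, by exhibiting, for every $\alpha$-acyclic CQ $Q$, an equivalent CFG-Datalog program $P_Q$ whose rules have bounded body size and whose total size is $O(\card{Q})$. First I would compute a join tree $J$ of $Q$: since $Q$ is $\alpha$-acyclic this takes time $O(\card{Q})$ (e.g.\ via maximum-cardinality search), each node of $J$ is labeled by a single atom $A_v$ of $Q$, and the running intersection property guarantees that for every node $v$ with parent $u$ the shared variables $\mathrm{int}(v) := \vars(A_v) \cap \vars(A_u)$ are contained in $\vars(A_v)$, hence inside a single atom.

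Second I would turn $J$ into a Datalog program. The naive rule at a node $v$ with atom $A_v$ and children $c_1,\dots,c_m$ is $P_v(\mathrm{int}(v)) \leftarrow A_v \wedge \bigwedge_i P_{c_i}(\mathrm{int}(c_i))$, with $P_{\mathrm{root}}$ being the $0$-ary goal. A straightforward bottom-up induction on $J$ shows equivalence with $Q$ (each $P_v$ computes exactly the projection onto $\mathrm{int}(v)$ of the matches of the subquery rooted at $v$, so the goal fires iff $I \models Q$), and every head or child-atom variable lies inside $\vars(A_v)$, hence inside the single clique-guard $A_v$, so the program is clique-frontier-guarded. The only quantity that is not yet bounded is the body size, since $v$ may have arbitrarily many children. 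I would fix this by \emph{binarizing}: introduce auxiliary predicates $P_v^0,\dots,P_v^m$ of arity $\card{\vars(A_v)} \leq \arity{\sigma}$, with $P_v^0 \leftarrow A_v$, then $P_v^i \leftarrow P_v^{i-1} \wedge P_{c_i}(\mathrm{int}(c_i))$, and finally $P_v(\mathrm{int}(v)) \leftarrow P_v^m$. Each such rule now has at most one extensional and two intensional atoms, all of arity $\leq \arity{\sigma}$, so its body size is $O(\arity{\sigma}) = O(1)$; guardedness is preserved because $P_v^{i-1}$ (resp.\ $P_v^m$) carries all of $\vars(A_v)$ and hence guards the frontier. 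The total number of rules is $\sum_v (m_v + O(1)) = O(\card{Q})$, and the whole construction runs in time $O(\card{Q})$.

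Third, I would invoke the general result on $P_Q$: being a CFG-Datalog program of bounded body size and of size $O(\card{Q})$, it compiles in FPT-linear time in $\card{Q}$ (parameterized by $\ki$) to an alternating two-way automaton testing $Q$ on instances of treewidth $\leq \ki$, giving the first claim. For the second claim, given $I$ of treewidth $\leq \ki$, I would compute a tree encoding in FPT-linear time in $\card{I}$ by Lemma~\ref{lem:getencoding} and then evaluate the automaton on it; since running an alternating two-way automaton on a fixed encoding can be done in time linear in the product of the automaton size and the encoding size (via the provenance/cycluit evaluation of Section~\ref{sec:provenance}), the overall combined complexity is FPT-bilinear in $\card{I}\cdot\card{Q}$.

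I expect the main obstacle to be the second step: controlling the body size while simultaneously preserving equivalence with $Q$ and the clique-frontier-guardedness condition. The key point is that binarization keeps every auxiliary predicate's arity bounded by $\arity{\sigma}$, which is exactly what the running intersection property buys us, since each interface variable set lies inside a single atom; one must also verify that the intermediate predicates $P_v^{i-1}$ legitimately serve as frontier guards under the precise definition of CFG-Datalog. The equivalence proof is then a routine induction on $J$, and the final automaton bound follows immediately from the general theorem.
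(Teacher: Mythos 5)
Your proposal is correct, and its skeleton matches the paper's: reduce to a CFG-Datalog program of bounded body size, invoke Theorem~\ref{thm:maintheorem} for the automaton, and evaluate via tree encodings and cycluits (which is exactly what Theorem~\ref{thm:main} packages up) for the bilinear bound. Where you diverge is the middle step. The paper does not build the program by hand: it observes that a join tree is a simplicial decomposition of width $\leq \arity{\sigma}-1$ and invokes the general Proposition~\ref{prp:simplicial}, whose proof first rewrites the decomposition so every bag has degree at most $2^{k+1}$ (Lemma~\ref{lem:rewritesimplicial}) and then emits one rule per bag, containing all children plus a set of clique-guard atoms; the resulting body size is a larger constant $f_\sigma(k)$. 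You instead binarize the join tree with a chain of auxiliary predicates $P_v^0, \ldots, P_v^m$ of arity $\leq \arity{\sigma}$, each rule carrying at most two atoms, so the body size drops to $\leq 2\arity{\sigma}$ and no degree-bounding lemma is needed. This works, and the guardedness check is sound: Definition~\ref{def:CFG} explicitly allows positive \emph{intensional} atoms as clique guards, so $P_v^{i-1}$ legitimately guards $\vars(A_v)$. The trade-off is that your construction exploits the defining feature of join trees --- every bag is the variable set of a single atom --- so the chain predicate can carry the whole bag; for general bounded simplicial width the interface is only covered by a \emph{set} of atoms with spurious extra variables, which is precisely what the paper's $\calA^\guard_b$ machinery handles, and why the paper factors this proposition through the more general result rather than proving it standalone. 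One small point you gloss over: Theorem~\ref{thm:maintheorem} produces a SATWA, and the proposition promises an alternating two-way automaton without negation; the paper closes this gap by noting that the translation only emits negated states for negated atoms, so a positive program (like yours) yields a negation-free automaton --- your construction supports the same one-line argument, but it should be said.
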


\begin{proof}
This proof depends on notions and
results that are given in the rest of the paper, and should be read after studying the rest of this paper.

  Given the $\alpha$-acyclic CQ $Q$, we can compute in linear time in~$Q$ a
  chordal decomposition~$T$ (also called a join tree) of~$Q$ by using Theorem~5.6
	of~\cite{flum2002query} (attributed to~\cite{tarjan1984simple}). 
  We recall that a chordal decomposition of a CQ $Q$ is a 
	tree decomposition $T$ of $Q$ such that for every bag $b$ of $T$, there
	exists an atom of $Q$ whose variables is exactly the variables in $b$.
As $T$ is in
  particular a simplicial decomposition of~$Q$ of width $\leq \arity{\sigma} -
  1$, i.e., of constant width, we use Proposition~\ref{prp:simplicial} to
  obtain in linear time in $\card{Q}$ a
  CFG-Datalog program $P$ equivalent to $Q$ with body size bounded by a constant $\kp$.
  
  We now use Theorem~\ref{thm:maintheorem} to construct, in FPT-linear time in
  $\card{P}$ (hence, in $\card{Q}$), parameterized by $\ki$ and the constant
  $\kp$, an automaton $A$ testing $P$ on instances of treewidth $\leq \ki$; specifically, a 
  stratified isotropic alternating two-way automata or SATWA (to be introduced
  in Definition~\ref{def:satwa}).

  We now observe that, thanks to the fact that $Q$ is monotone, the SATWA $A$
  does not actually feature any negation: the translation in the proof of
  Proposition~\ref{prp:simplicial} does not produce any negated atom, and
  the translation in the proof of Theorem~\ref{thm:maintheorem} only
  produces a negated state within a Boolean formula when there is a corresponding
  negated atom in the Datalog program. Hence, $A$ is
  actually an alternating two-way tree automaton, which proves the first part of
  the claim.

  For the second part of the claim, we use Theorem~\ref{thm:main} to evaluate
  $P$ on $I$ in FPT-bilinear time in $\card{I} \cdot \card{P}$, parameterized by
  the constant $\kp$ and $\ki$. This proves the claim.
\end{proof}

\subparagraph*{Bounded-treewidth queries.}
Having re-proven the combined tractability of $\alpha$-acyclic queries (on
bounded-treewidth instances),
we naturally try to extend to \emph{bounded-treewidth} CQs. Recall from
Section~\ref{sec:tractableall} that these queries have PTIME combined
complexity on all instances, but are unlikely to be FPT when parameterized by
the query treewidth (unless $\text{W[1]}=\text{FPT}$). Can they be efficiently evaluated on
\emph{treelike} instances by translating them to automata? We answer in the negative:
that bounded-treewidth CQs \emph{cannot} be
efficiently translated to automata to test them, even when using the expressive
formalism of alternating two-way tree automata:

\begin{toappendix}
  \section{Proof of Theorem~\ref{thm:nocompile}}
\end{toappendix}

\begin{theoremrep}\label{thm:nocompile}
  There is an arity-two signature $\sigma$ for which there is no
  algorithm $\calA$ 
  with exponential running time and polynomial output size for the 
  following task:
  given a conjunctive query $Q$ of treewidth $\leq 2$,
  produce an alternating two-way tree automaton $A_Q$ on $\Gamma^5_\sigma$-trees
  that tests $Q$ on
  $\sigma$-instances of treewidth~$\leq 5$.
\end{theoremrep}

\begin{toappendix}
  \label{app:nocompile}
To prove this theorem, we need some notions and lemmas from
  \cite{unpublishedbenediktmonadic}, an extended version
  of~\cite{benedikt2012monadic}. Since \cite{unpublishedbenediktmonadic}
  is currently unpublished, relevant results are reproduced as Appendix~F
  of~\cite{amarilli2017combinedb}, in particular Lemma~68, Theorem~69,
  and their proofs.
\end{toappendix}

\begin{toappendix}
\begin{proof}[Proof of Theorem~\ref{thm:nocompile}]
  Let $\sigma$ be $\schildself$ as in Theorem~69
  of~\cite{amarilli2017combinedb}. We
  pose $c=3$, $\ki=2\times 3-1=5$.
  Assume by way of
  contradiction that there exists an algorithm $\calA$ satisfying the prescribed
  properties.
  We will describe an algorithm to solve any instance of the
  containment problem of Theorem~69 of~\cite{amarilli2017combinedb} in singly exponential
  time. As Theorem~69 of~\cite{amarilli2017combinedb} states that it is 2EXPTIME-hard, this
  yields a contradiction by the time hierarchy theorem.

  Let $P$ and $Q$ be an instance of the containment problem of
  Theorem~69 of~\cite{amarilli2017combinedb}, where $P$ is a monadic Datalog program of
  var-size $\leq 3$,
  and $Q$ is a CQ of treewidth $\leq 2$. We will show how to solve the
  containment problem, that is, decide whether there exists some instance $I$
  satisfying $P \land \neg Q$.
  
  Using
  Lemma~68 of~\cite{amarilli2017combinedb}, compute in singly exponential time the
  $\Gamma_\sigma^{\ki}$-bNTA
  $A_P$. Using the putative algorithm $\calA$ on~$Q$, compute in singly
  exponential time an alternating two-way automaton $A_Q$ of polynomial size. As $A_P$ describes a family $\calI$ of canonical
  instances for~$P$, there is an instance satisfying $P \wedge \neg Q$ iff there is an
  instance in $\calI$ satisfying $P \wedge \neg Q$. Now, as $\calI$ is described
  as the decodings of the language of~$A_P$, all instances in $\calI$ have
  treewidth~$\leq \ki$. Furthermore, the instances in~$\calI$ satisfy $P$ by definition of $\calI$.
  Hence, there is an instance satisfying $P\wedge\neg Q$ iff there is an encoding $E$ in the
  language of $A_P$ whose decoding satisfies $\neg Q$. Now, as $A_Q$ tests $Q$ on
  instances of treewidth $\ki$, this is the case iff there is an encoding $E$ in the
  language of $A_P$ which is not accepted by $A_Q$. Hence, our problem is
  equivalent to the problem of deciding whether there is a tree accepted by
  $A_P$ but not by $A_Q$.
  
  We now use Theorem~A.1 of~\cite{cosmadakis1988decidable} to compute in EXPTIME
  in~$A_Q$ a bNTA $A'_Q$ recognizing the complement of the language
  of~$A_Q$. Remember that $A_Q$ was computed in EXPTIME and is of polynomial
  size, so the entire process so far is EXPTIME. Now we know that we can solve
  the containment problem by testing whether $A_P$ and $A'_Q$ have non-trivial
  intersection, which can be done in PTIME by computing the product automaton
  and testing emptiness~\cite{tata}. This solves the containment problem in
  EXPTIME. As we explained initially, we have reached a contradiction, because
  it is 2EXPTIME-hard.
\end{proof}
\end{toappendix}

This result is obtained from a variant of the 2EXPTIME-hardness of monadic
Datalog containment \cite{benedikt2012monadic}. As this result heavily
relies on \cite{unpublishedbenediktmonadic}, an unpublished extension
of~\cite{benedikt2012monadic} whose relevant results are reproduced
in~\cite{amarilli2017combinedb}, we deport its proof to
Appendix~\ref{app:nocompile}.
Briefly, we show that efficient translation of bounded-treewidth CQs to
automata would yield an EXPTIME containment test, and conclude by the
time hierarchy theorem.

\subparagraph*{Bounded simplicial width.}
We have shown that we cannot translate bounded-treewidth queries to automata
efficiently.
We now show that efficient translation can be ensured with an additional
requirement on tree decompositions.
As it turns out, the resulting decomposition notion
has been independently introduced for graphs:

\begin{definition}[\cite{diestel1989simplicial}]
  A \emph{simplicial decomposition of a graph $G$} is a tree decomposition $T$ of~$G$
  such that, for any bag $b$ of $T$ and child bag $b'$ of~$b$, if $S$ is
  the intersection of the domains of $b$ and $b'$, then the subgraph of $G$ induced by $S$ is a complete
  subgraph of~$G$.
\end{definition}

We extend this notion to CQs, and introduce the \emph{simplicial width} measure:

\begin{definition}
  \label{def:simplicialcq}
  A \emph{simplicial decomposition of a CQ $Q$} is a simplicial
  decomposition of its primal graph.
  Note that any
  CQ has a simplicial
  decomposition (e.g., the trivial one that puts all variables in one
  bag).
  The \emph{simplicial width} of~$Q$
  is the minimum, over all simplicial tree decompositions, of the size of
  the largest bag minus~$1$.
\end{definition}

Bounding the simplicial width of CQs is of course more restrictive than
bounding their treewidth, and
this containment relation is strict: cycles have
treewidth $\leq 2$ but 
have unbounded simplicial width. This
being said, bounding the simplicial width is less restrictive than imposing
$\alpha$-acyclicity:
the join tree of an $\alpha$-acyclic CQ is
in particular a simplicial decomposition, so $\alpha$-acyclic CQs have
simplicial width at most $\arity{\sigma}-1$, which is constant as $\sigma$ is
fixed. Again, the containment is strict: a triangle has simplicial width $2$ but is not $\alpha$-acyclic.

To our knowledge, 
simplicial width for CQs has not been studied before.
Yet, we show that bounding the
simplicial width ensures that CQs can be efficiently translated to
automata. 
This is
in contrast to bounding the
treewidth, which we have shown in~Theorem~\ref{thm:nocompile} not to be sufficient to ensure efficient translatability to tree automata. Hence:

\begin{theorem}\label{thm:simplicial}
  For any $\ki, \kq \in \NN$, given a CQ $Q$ and a simplicial decomposition $T$
  of simplicial width $\kq$
  of~$Q$,
  we can compute in FPT-linear in~$\card{Q}$ (parameterized by~$\ki$ and~$\kq$)
  an alternating two-way tree automaton that tests
  $Q$ on instances of treewidth $\leq \ki$.

  Hence, if we are additionally given a relational instance~$I$ of treewidth
  $\leq \ki$,
  one can determine
  whether $I\models Q$ in FPT-bilinear time in $|I|\cdot(|Q|+|T|)$ 
  (parameterized by~$\ki$ and~$\kq$).
\end{theorem}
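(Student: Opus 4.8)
The plan is to follow exactly the strategy used in the proof of Proposition~\ref{prp:alphaatwa}, of which this theorem is the natural generalization: Proposition~\ref{prp:alphaatwa} is the special case where $\kq = \arity{\sigma}-1$ and $T$ is a join tree. Rather than first computing a join tree and then passing through a \emph{constant}-body-size CFG-Datalog program, here we are handed the simplicial decomposition $T$ of width $\kq$ directly, and we route it through a CFG-Datalog program whose body size is bounded by a function of $\kq$. As with Proposition~\ref{prp:alphaatwa}, this argument relies on machinery (CFG-Datalog, SATWAs, Proposition~\ref{prp:simplicial}, Theorems~\ref{thm:maintheorem} and~\ref{thm:main}) developed later in the paper, and should be read afterwards.

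First I would apply Proposition~\ref{prp:simplicial} to the pair $(Q, T)$: since $T$ is a simplicial decomposition of $Q$ of width $\kq$, this yields, in FPT-linear time in $\card{Q} + \card{T}$ and with output program size FPT-linear in $\card{Q}$, a CFG-Datalog program $P$ equivalent to $Q$ whose body size is bounded by some constant $\kp$ depending only on $\kq$ and the fixed $\arity{\sigma}$. The one substantive difference with the $\alpha$-acyclic case is that $\kp$ is no longer an absolute constant but a function of the parameter $\kq$; since $\kq$ is itself a parameter of the statement, this is harmless for FPT purposes. Next I would feed $P$ into Theorem~\ref{thm:maintheorem} to construct, in FPT-linear time in $\card{P}$ (hence in $\card{Q}$), parameterized by $\ki$ and $\kp$ — equivalently by $\ki$ and $\kq$ — a stratified isotropic alternating two-way automaton (SATWA) $A$ that tests $P$, and therefore $Q$, on instances of treewidth $\leq \ki$.

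As in Proposition~\ref{prp:alphaatwa}, I would then observe that because $Q$ is monotone, neither the translation of Proposition~\ref{prp:simplicial} nor that of Theorem~\ref{thm:maintheorem} introduces any negation: the former produces no negated atom, and the latter only produces a negated state when there is a corresponding negated atom in the program. Hence $A$ is in fact an ordinary alternating two-way tree automaton, which establishes the first part of the statement. For the second part, given additionally an instance $I$ of treewidth $\leq \ki$, I would invoke Theorem~\ref{thm:main} to evaluate $P$ on $I$ in FPT-bilinear time in $\card{I} \cdot \card{P}$, that is, in $\card{I} \cdot (\card{Q} + \card{T})$, parameterized by $\ki$ and $\kp$ (hence by $\ki$ and $\kq$). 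This yields both conclusions.

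The genuine mathematical content lies entirely in the two black boxes being chained together, Proposition~\ref{prp:simplicial} and Theorem~\ref{thm:maintheorem}, so the only work specific to the present theorem is careful bookkeeping of the parameter dependencies: verifying that $\kp$ depends only on $\kq$ and the fixed arity, that the program $P$ stays of size FPT-linear in $\card{Q}$ while being computable in time FPT-linear in $\card{Q}+\card{T}$, and that each invoked result preserves FPT-\emph{linear} (not merely polynomial) complexity. I expect the real obstacle to sit upstream, in Proposition~\ref{prp:simplicial}: one must encode the simplicial structure so that each bag together with its clique-guarded interface to its parent can be captured by a CFG-Datalog rule whose body remains bounded by $\kp(\kq)$, exploiting precisely the fact that the shared interfaces are complete subgraphs (hence clique-guarded) to keep the guardedness and body-size constraints satisfiable. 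Once that proposition is in place, the theorem follows by the routine composition described above.
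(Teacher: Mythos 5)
Your proposal is correct and follows essentially the same route as the paper's own proof: apply Proposition~\ref{prp:simplicial} to $(Q,T)$ to get a CFG-Datalog program of body size $f_\sigma(\kq)$, translate it via Theorem~\ref{thm:maintheorem} to a SATWA (which, by the monotonicity observation from Proposition~\ref{prp:alphaatwa}, is negation-free and hence an ordinary alternating two-way tree automaton), and conclude the evaluation claim with Theorem~\ref{thm:main}. The parameter bookkeeping you describe ($\kp$ depending only on $\kq$ and the fixed arity, FPT-linearity preserved at each step) matches the paper's argument exactly.
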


\begin{proof}
This proof depends on notions and
results that are given in the rest of the paper, and should be read after studying the rest of this paper.

  We use Proposition~\ref{prp:simplicial} to transform
  the CQ $Q$ to a
  CFG-Datalog program $P$ with body size at most $\kp \defeq f_\sigma(\kq)$,
  in FPT-linear time in $\card{Q} + \card{T}$ parameterized by~$\kq$. 
  
  We now use Theorem~\ref{thm:maintheorem} to construct, in FPT-linear time in
  $\card{P}$ (hence, in $\card{Q}$), parameterized by $\ki$ and~$\kp$, hence
  in~$\ki$ and~$\kq$, a SATWA $A$ testing $P$ on instances of treewidth $\leq \ki$ (see
  Definition~\ref{def:satwa}). For the same
  reasons as in the proof of Proposition~\ref{prp:alphaatwa}, it is actually a
  two-way alternating tree automaton, so we have shown the first part of the result.

  To prove the second part of the result,
  we now use Theorem~\ref{thm:main} to evaluate $P$ on $I$ in FPT-bilinear time
  in $\card{I} \cdot \card{P}$, parameterized by~$\kp$ and~$\ki$, hence again
  by~$\kq$ and~$\ki$. This proves the claim.
\end{proof}

Notice the technicality that the simplicial decomposition $T$ must be provided
as input to the procedure, because it is not known
to be computable in FPT-linear time, unlike tree decompositions.
While we are not aware of results on the
complexity of this specific task,
quadratic-time algorithms are known for the related problem of computing the
\emph{clique-minimal separator decomposition}
\cite{leimer1993optimal,berry2010introduction}.

The intuition for the efficient translation of bounded-simplicial-width CQs
is as follows. The \emph{interface} variables shared between any bag and its parent must
be ``clique-guarded'' (each pair is covered by an atom). Hence, consider
any subquery rooted at a bag of the query decomposition, and see it as a
non-Boolean CQ
with the interface variables as free variables. Each result of this
CQ must then be covered by a clique of facts of the instance, which ensures
\cite{gavril1974intersection} that it occurs in some bag of the instance tree decomposition and can be
``seen'' by a tree automaton.
This intuition can be generalized, beyond conjunctive queries, to design an
expressive query language featuring disjunction, negation, and fixpoint, with
the same properties of efficient translation to automata and FPT-linear
combined complexity of evaluation on treelike instances. We introduce such a
Datalog variant in the next section.

\section{CFG-Datalog on Treelike Instances}
\label{sec:CFG}
To design a Datalog fragment with efficient translation to automata, we must of
course impose some limitations, as we did for CQs. In fact, we can even show
that the full Datalog language (even without negation) \emph{cannot} be translated to automata, no matter
the complexity:

\begin{proposition}\label{prp:dlnotftar}
  There is a signature $\sigma$ and Datalog program $P$ 
  such that the language of $\Gamma_\sigma^1$-trees that encode instances
  satisfying $P$ is not a regular tree language.
  \end{proposition}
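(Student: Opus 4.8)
The plan is to exhibit a concrete Datalog program whose set of satisfying treelike instances, when viewed through tree encodings, fails to be recognizable by any finite tree automaton. The natural candidate is a program expressing an unbounded counting or balance property that a finite-state device cannot track. The cleanest choice is to encode a query that is satisfiable exactly when some global arithmetic relationship holds between the two halves of the instance — for example, a reachability/transitivity program that forces equality of two unboundedly-large quantities. Concretely, I would work over a signature rich enough to lay down two disjoint $S$-chains hanging off a common root, together with a relation that lets a Datalog program compute, by a transitive-closure-style fixpoint, the length of each chain and compare them. Since the encodings of treewidth-$1$ instances of this form are essentially binary trees carrying two chains, a regular tree language over $\Gamma^1_\sigma$ would have to distinguish trees according to whether the two chain lengths agree, which is the classic non-regular ``equal numbers of $a$'s and $b$'s'' phenomenon lifted to trees.

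The key steps, in order, are as follows. First I would fix $\sigma$ and write an explicit Datalog program $P$ (using a transitive-closure predicate to propagate a marker along each chain and a goal rule firing only when the two propagated markers meet in a way that certifies equal lengths). Second, I would restrict attention to a family of instances $I_{m,n}$ consisting of two $S$-chains of lengths $m$ and $n$ glued at a root, all of treewidth $\leq 1$, and verify that $I_{m,n} \models P$ if and only if $m = n$ (or whatever comparison the program computes). Third, I would fix, for each such instance, a canonical tree encoding $E_{m,n} \in \Gamma^1_\sigma$, chosen so that the encoding is itself ``chain-like'' and the parameters $m,n$ appear as the depths of two linear branches. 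Fourth, I would invoke the pumping lemma for regular tree languages: if the language $L$ of encodings of models of $P$ were regular, then on the family $\{E_{n,n}\}$ one could pump a segment of one chain without touching the other, producing an encoding $E_{n',n}$ with $n' \neq n$ that is still accepted, contradicting the fact that $E_{n',n}$ decodes to an instance violating $P$.

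A subtlety I must handle carefully is that a single instance has many tree encodings, so ``the language of encodings of models'' is really the set of all encodings decoding to a model; I would therefore make sure the pumping argument either stays within my chosen canonical family $\{E_{m,n}\}$ (it suffices to find one regular-language violation, and regularity of $L$ would force regularity of $L$ restricted to any regular sublanguage such as the canonical encodings), or argue directly that pumping a chain segment yields a genuine tree encoding of a genuine model-or-nonmodel. The cleanest formulation is: if $L$ is regular then $L \cap \mathcal{E}$ is regular, where $\mathcal{E}$ is the (regular) set of canonical encodings, and this intersection is essentially $\{E_{n,n}\}$, whose non-regularity follows from a pumping argument mirroring the non-regularity of $\{a^n b^n\}$.

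The main obstacle I expect is ensuring that the comparison computed by the Datalog fixpoint is genuinely the ``unbounded equality'' that defeats finite-state recognition, rather than something a large but finite automaton could absorb: I must design $P$ so that the recursive predicate's closure truly depends on both chain lengths simultaneously and no bounded amount of state suffices. Getting the transitive-closure program to synchronize the two chains — e.g.\ by a diagonal predicate $D(x,y)$ that walks both chains in lockstep and whose reaching the two endpoints certifies equal length — is the delicate part, and I would verify its correctness by induction on chain length before turning the pumping argument.
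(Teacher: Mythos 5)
Your proposal is correct, and it reaches the conclusion by a genuinely different route than the paper. The paper's program works on a \emph{single} chain: over a signature with binary relations $Y$, $Z$ and unary markers $\mathrm{Begin}$, $\mathrm{End}$, it derives $S(x,y)$ exactly when there is a path of the form $Y^kZ^k$ from $x$ to $y$ (a nested-matching recursion, directly mimicking the context-free language $a^nb^n$), so that the chain instances satisfying the program are exactly $I_k = Y^kZ^k$. The paper then intersects the hypothetical bNTA with a regular automaton forcing single-branch encodings of chains, converts the resulting bNTA into a deterministic \emph{word} automaton over $\{B,Y,Z,E\}$, and derives the contradiction from non-regularity of the word language $\{BY^kZ^kE \mid k \in \mathbb{N}\}$. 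You instead use \emph{two} chains glued at a root, a lockstep diagonal predicate $D(x,y)$ certifying equal lengths, and the pumping lemma for regular tree languages applied to one branch of the two-branch canonical encodings. Both proofs share the same skeleton --- exhibit an unbounded-counting program, cut the encoding language down to a regular family of canonical encodings by intersection, and invoke a non-regularity fact --- but your non-regularity mechanism (pump one branch of a tree, keeping the pump strictly below the branching node so the other branch is untouched) stays entirely within tree-automata theory and avoids the paper's tree-to-word conversion, while the paper's reduction lands on the most elementary possible fact ($a^nb^n$ is not regular) at the cost of that conversion step and an appeal to Courcelle's theorem to build the chain-shape automaton. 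One small point to make explicit when writing your argument up: when you pump, place both repeated-state positions strictly inside one branch (possible since that branch is longer than the number of states), so that excising or duplicating the context provably leaves the other branch and the root intact; with that said, your intersection-with-$\mathcal{E}$ framing correctly disposes of the many-encodings-per-instance subtlety.
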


\begin{proof}
Let $\sigma$ be the signature containing two binary relations $Y$ and $Z$ and
  two unary relations $\mathrm{Begin}$ and $\mathrm{End}$.
Consider the following program $P$:
\begin{align*}
  \mathrm{Goal}() &\leftarrow S(x,y), \mathrm{Begin}(x), \mathrm{End}(y)\\
                      S(x,y) &\leftarrow Y(x,w), S(w,u), Z(u,y) \\ 
                      S(x,y) &\leftarrow Y(x,w), Z(w,y)
\end{align*}
Let $L$ be the language of the tree encodings of instances of treewidth~$1$ that
satisfy~$P$. We will show that $L$ is not a regular tree language, which clearly
  implies the second claim, as a bNTA or an alternating two-way tree
  automaton can
  only recognize regular tree languages \cite{tata}. To show this, let us assume by
  contradiction that $L$ is a regular tree language, so that there exists a
  $\Gamma^1_\sigma$-bNTA $A$ that accepts $L$, i.e., that tests $P$.

We consider instances that are chains of facts which are either $Y$- or
$Z$-facts, and where the first end is the only node labeled $\mathrm{Begin}$ and
the other end is the only node labeled $\mathrm{End}$. This condition on instances
  can clearly be expressed in MSO, so that by Theorem~\ref{thm:courcelle} there exists
a bNTA $A_{\mathrm{chain}}$ on~$\Gamma_\sigma^1$ that tests this property. In particular, we can build the
bNTA $A'$ which is the intersection of $A$ and $A_{\mathrm{chain}}$, which tests whether
instances are of the prescribed form and are accepted by the program $P$.

We now observe that such instances must be the instance
\begin{align*}
  I_k = {} & \{
  \mathrm{Begin}(a_1), \allowbreak
  Y(a_1, a_2), \ldots, \allowbreak
  Y(a_{k-1}, a_k), \allowbreak
  Y(a_k, a_{k+1}), \\
  & \quad Z(a_{k+1}, a_{k+2}), \ldots, \allowbreak
  Z(a_{2k-1}, a_{2k}), \allowbreak
  Z(a_{2k}, a_{2k+1}), \allowbreak
\mathrm{End}(a_{2k+1})\}\end{align*} for
some $k \in \mathbb{N}$. Indeed, it is clear that $I_k$ satisfies $P$ for all $k
\in \mathbb{N}$, as we derive the facts \[S(a_k, a_{k+2}), S(a_{k-1}, a_{k+3}), \ldots,
S(a_{k-(k-1)}, a_{k+2+(k-1)})\text{, that is, }S(a_1, a_{2k+1}),\] and finally
$\mathrm{Goal}()$. Conversely, for any instance $I$ of the prescribed shape that
satisfies~$P$, it is easily seen that the derivation of $\mathrm{Goal}$
justifies the existence of a chain in~$I$ of the form~$I_k$, which by the
restrictions on the shape of $I$ means that $I = I_k$.

We further restrict our attention to tree encodings
that consist of a single branch of a specific form, namely, their contents are as
  follows (given from leaf to root) for some integer $n\geq 0$:
$(\{a_1\}, \mathrm{Begin}(a_1))$, $(\{a_1, a_2\}, X(a_1, a_2))$, $(\{a_2, a_3\},
X(a_2, a_3))$, $(\{a_3, a_1\}, X(a_3, a_1))$, 
\dots, $(\{a_{n}, a_{n+1}\}, X(a_n, a_{n+1}))$, $(\{a_{n+1}\},
\mathrm{End}(a_{n+1}))$, 
  where we write $X$ to mean that we may match either $Y$ or $Z$,
  where addition is modulo $3$, and where we add dummy nodes $(\bot,
\bot)$ as left children of all nodes, and as right children of the leaf node
  $(\{a_1\},\mathrm{Begin}(a_1))$,
to ensure that the tree is full.
It is clear that we can design a bNTA $A_{\mathrm{encode}}$ which recognizes
  tree encodings of this form, and we define $A''$ to be the intersection of
  $A'$ and $A_{\mathrm{encode}}$. In other words,
  $A''$ further enforces that the $\Gamma^1_\sigma$-tree
encodes the input instance as a chain of consecutive facts with a certain
prescribed alternation pattern for elements, with the $\mathrm{Begin}$ end of
the chain at the top and the $\mathrm{End}$ end at the bottom.

Now, it is easily seen that there is exactly one tree encoding of every $I_k$ which is
accepted by~$A''$, namely, the one of the form tested by $A_{\mathrm{encode}}$
where $n=2k$, the first $k$ $X$ are matched to $Y$ and the last $k$ $X$
are matched to~$Z$. 

Now, we observe that as $A''$ is a bNTA which is forced to operate on
chains (completed to full binary trees by a specific
addition of binary nodes). Thus, we can translate it to a deterministic automaton
$A'''$ on words on the alphabet $\Sigma = \{B, Y, Z, E\}$,
by looking at
its behavior in terms of the $X$-facts. Formally, $A'''$ has same state
space as $A''$, same final states, initial state
$\delta(\iota((\bot, \bot)), \iota((\bot, \bot)))$ and
transition function $\delta(q, x) = \delta(\iota((\bot, \bot)), q,
(s, f))$ for every domain $s$, where $f$ is a fact corresponding to the letter $x
\in \Sigma$ ($B$ stands here for $\mathrm{Begin}$, and $E$ for
$\mathrm{End}$).
By definition of $A''$, the automaton $A'''$ on words recognizes the language $\{BY^kZ^kE \mid k \in
\mathbb{N}\}$. 
However, this language is not regular.
This contradicts our hypothesis about the existence of automaton $A$, which
establishes the desired result.
\end{proof}

Hence, there is no bNTA or alternating two-way tree automaton that tests $P$ for
  treewidth~$1$.
To work around this problem and ensure that translation is possible and efficient,
the key condition that we impose on Datalog programs, pursuant to the
intuition of simplicial decompositions, is that
the rules must be \emph{clique-frontier-guarded}, i.e., the
variables in the head must co-occur in \emph{positive} predicates of the rule body. 
We can then use the \emph{body size} of the program rules as a parameter,
and will show that the fragment can then be translated to automata in FPT-linear
time. Remember that we assume that the arity of the extensional signature is fixed.

\begin{definition}
  \label{def:CFG}
  Let $P$ be a stratified Datalog program.
  A rule $r$ of $P$ is \emph{clique-frontier-guarded} if for any two variables $x_i \neq x_j$ in the head of $r$, we have that $x_i$ and
  $x_j$ co-occur in some positive (extensional or intensional) predicate of the body of $r$.
  $P$ is \emph{clique-frontier-guarded} (CFG) if all its rules are clique-frontier-guarded.
  The \emph{body size} of~$P$ is the maximal number of atoms in
  the body of its rules, multiplied by its arity.
\end{definition}

\begin{example}
 Let $P$ be the stratified Datalog program from Example~\ref{expl:datalog}. 
	We recall that $\sigma = \{R\}$, and that $P$ tests if there are two elements that are not
	connected by a directed $R$-path. 
	Then $P$ is not a CFG-Datalog program, since the rule $T(x,y) \leftarrow R(x,z) \land T(z,y)$ is not 
	clique-frontier-guarded.
	In fact, it is easy to show that no CFG-Datalog program can contain a binary intensional predicate $T$ computing the transitive closure of an extensional binary relation $R$.

  However, it is possible to express a similar query in
        CFG-Datalog. Let $\sigma$ be $\{R,A,B\}$, with $R$ being binary, $A$ and $B$ being unary.
	Let $\sigmai = \{T\}$, with $T$ unary. 
	Consider the stratified Datalog program $P'$ with two strata $P'_1$ and $P'_2$. 
	The stratum $P'_1$ contains the following two rules:
	\begin{itemize}
		\item $T(x) \leftarrow A(x)$
		\item $T(y) \leftarrow T(x) \land R(x,y)$
	\end{itemize}
	And $P'_2$ contains the rule:
	\begin{itemize}
		\item $\mathrm{Goal}() \leftarrow A(x) \land B(y) \land \lnot T(y)$
	\end{itemize}
	Then $P'$ is a CFG-Datalog program (of body size $3\times 2=6$).
	Moreover, $P'$ tests if there exist two distinct elements $a \neq b$ such
	that $A(a)$ and $B(b)$ hold, and such that $a$ and $b$ are not connected by a directed $R$-path.
\end{example}

We will see later in this section what interesting query languages CFG-Datalog capture: (Boolean) CQs, (Boolean) 2RPQs, (Boolean) SAC2RPQs, guarded negation logics, monadic Datalog, etc.

The main result of this paper is that evaluation of CFG-Datalog is
\emph{FPT-bilinear} in \emph{combined complexity}, when parameterized by the body
size of the program and the instance treewidth.

\begin{theorem}
  \label{thm:main}
  Given a CFG-Datalog program~$P$ of body size
  $\kp$ and a relational instance~$I$ of treewidth $\ki$, checking
  if $I\models P$ is FPT-bilinear time in $|I|\cdot|P|$ 
  (parameterized by~$\kp$ and~$\ki$).
\end{theorem}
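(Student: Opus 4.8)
The plan is to assemble the three technical ingredients developed in the rest of the paper into a single evaluation pipeline: reduce the test $I \models P$ to the acceptance of a tree encoding of $I$ by an automaton compiled from $P$, and decide that acceptance by computing and then evaluating a provenance cycluit. First I would preprocess the instance: using Lemma~\ref{lem:getencoding}, compute in FPT-linear time in $\card{I}$ (parameterized by $\ki$) a $(\sigma,\ki)$-tree encoding $\la E,\lambda\ra$ with $\decode{\la E,\lambda\ra}$ isomorphic to $I$, of size $O(g(\ki)\cdot\card{I})$ for some computable $g$. Since constant-free Boolean queries, and hence CFG-Datalog programs, cannot distinguish isomorphic instances, it suffices to decide $\decode{\la E,\lambda\ra}\models P$.

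Second, I would compile the program. Using the translation result (Theorem~\ref{thm:maintheorem}), convert $P$ into a stratified isotropic alternating two-way automaton (SATWA) $A$ testing $P$ on instances of treewidth $\leq\ki$, in FPT-linear time in $\card{P}$ parameterized by $\kp$ and $\ki$; thus $\card{A}=O(h(\kp,\ki)\cdot\card{P})$. By the testing property, $A$ accepts $\la E,\lambda\ra$ if and only if $\decode{\la E,\lambda\ra}\models P$, i.e., if and only if $I\models P$.

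Third, and this is the crux, I would decide acceptance of $\la E,\lambda\ra$ by $A$ through provenance. Because $A$ is alternating and two-way, its runs may traverse $\la E,\lambda\ra$ back and forth, so the natural provenance object is cyclic; I would therefore invoke the provenance-computation result of Section~\ref{sec:provenance} to build, in FPT-bilinear time $O(f(\kp,\ki)\cdot\card{A}\cdot\card{E})$, a stratified cycluit $C$ whose output gate captures acceptance of $\la E,\lambda\ra$ by $A$ when the input gates are set according to the facts of the encoding. Substituting the two size bounds yields total time $O(f'(\kp,\ki)\cdot\card{I}\cdot\card{P})$, which is FPT-bilinear as required. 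Finally, since $A$ is stratified, so is $C$, and hence $C$ can be evaluated in linear time in $\card{C}$; this evaluation returns true exactly when $A$ accepts $\la E,\lambda\ra$, i.e., exactly when $I\models P$.

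I expect the main obstacle to lie entirely in the third step, namely in establishing the provenance-computation result. The difficulty is that the provenance cannot be captured by an acyclic Boolean circuit as in the bottom-up case of~\cite{amarilli2015provenance}: one must instead argue that the accepting behaviors of $A$ at each (node, state) pair satisfy a system of mutually recursive Boolean equations, encode this system as a cycluit with one gate per such pair (bounding the number of states by a function of $\kp$ and $\ki$ so the construction stays FPT-bilinear), and verify that forbidding negation inside cycles, guaranteed by the stratification of $A$ inherited from the stratified negation of $P$, makes the cycluit stratified and therefore linear-time evaluable with a well-defined semantics. The remaining steps, namely the encoding, the compilation, and the final evaluation, are then combined by a straightforward product of the FPT-linear and FPT-bilinear bounds.
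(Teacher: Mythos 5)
Your proposal is correct and follows essentially the same route as the paper: compute a tree encoding via Lemma~\ref{lem:getencoding}, compile $P$ to a SATWA via Theorem~\ref{thm:maintheorem}, build a stratified provenance cycluit for the automaton on the encoding (the paper's Theorem~\ref{thm:satwaprov}, used exactly as you describe in Theorem~\ref{thm:mainprov} and Proposition~\ref{prp:satwaeval}), and evaluate that cycluit in linear time via Proposition~\ref{prp:cycluitlinear}. You also correctly identify the crux — that two-way traversal forces a cyclic provenance representation, with stratification inherited from the program making linear-time evaluation possible — which is precisely where the paper locates the technical difficulty.
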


We will show this result in the next section by translating CFG-Datalog programs in FPT-linear time to
a special kind of tree automata (Theorem~\ref{thm:maintheorem}), and showing
in Section~\ref{sec:provenance} that we can efficiently evaluate such automata
and even compute \emph{provenance representations}.
The rest of this section
presents consequences of our main result for various languages.

\subparagraph*{Conjunctive queries.}
Our tractability result for
bounded-simplicial-width CQs (Theorem~\ref{thm:simplicial}), including $\alpha$-acyclic
CQs, is shown by rewriting
to CFG-Datalog of bounded body size:

\begin{proposition}\label{prp:simplicial}
  There is a function $f_\sigma$ (depending only on~$\sigma$)
  such that for all $k \in \mathbb{N}$,
  for any conjunctive query $Q$ and simplicial tree decomposition $T$ 
  of $Q$ of
  width at most~$k$, we can compute in $O(\card{Q} + \card{T})$ an equivalent
  CFG-Datalog program with body size at most $f_{\sigma}(k)$.
\end{proposition}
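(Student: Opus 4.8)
The plan is to process a rooted copy of the simplicial decomposition $T$ bottom-up, attaching one intensional predicate $P_b$ to each bag $b$. Writing $X_b := \dom(b) \cap \dom(p)$ for the interface between $b$ and its parent $p$ (and $X_b := \emptyset$ at the root), I intend $P_b(X_b)$ to hold exactly of those valuations of $X_b$ that extend to a homomorphism into $I$ of the subquery $Q_b$ consisting of all atoms of $Q$ assigned to $b$ or to a descendant of $b$; here each atom of $Q$ is assigned to one bag whose domain contains all its variables (such a bag exists by the tree-decomposition property). The goal predicate is then derived from $P_{\mathrm{root}}$. The decisive structural point is that $T$ is simplicial, so every interface $X_b$ induces a clique of the primal graph; hence for each pair of variables of $X_b$ there is an extensional atom of $Q$ containing both, and these atoms serve as the positive guards needed to make the head $P_b(X_b)$ clique-frontier-guarded.

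The rule defining $P_b$ must conjoin the local atoms of $b$, the guard atoms for $X_b$, and the child predicates $P_{b_1}(X_{b_1}), \dots, P_{b_m}(X_{b_m})$, then project onto $X_b$. Two sources of atoms must be bounded. First, because $\sigma$ is fixed and $\card{\dom(b)} \le k+1$, the number of atoms of $Q$ assignable to $b$ is bounded by some $g_\sigma(k)$, so the local atoms and the $X_b$-guards (at most $\binom{k+1}{2}$ of them) already fit in a body of size $f_\sigma(k)$. Second, and this is the crux, the number of children $m$ is unbounded, so the children cannot be conjoined in a single rule. I will instead thread them through a chain of auxiliary accumulator predicates, one combining step per child, where the accumulator after step $j$ has as head the set of variables still needed downstream (those occurring in $X_b$ or in some not-yet-processed interface $X_{b_i}$). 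Each combining rule has only two body atoms — the previous accumulator and the current $P_{b_j}(X_{b_j})$ — so its body size is bounded; crucially, the definition of CFG-Datalog permits \emph{intensional} guards, so the previous accumulator atom, whose arguments form a superset of the new head, witnesses all pairs of the new head and discharges the clique-frontier-guard condition.

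I expect the main obstacle to be precisely this guardedness bookkeeping along the chain: a naive left-to-right threading can force two non-adjacent variables to coexist in an accumulator head (one inherited from an earlier child, one introduced by the current child, both still needed), and such a pair is witnessed by neither the previous accumulator nor the current child predicate. The resolution is to exploit simpliciality more carefully: every \emph{consumer} of variables — each interface $X_{b_i}$ and the output interface $X_b$ — is a clique, so two non-adjacent variables are never jointly required by a single consumer. I therefore process the children in an order induced by an elimination ordering of the separators, and project each variable out as soon as its last consumer has been handled, so that the frontier tracked by every accumulator stays contained in a single interface clique. The seed of each chain is then a single interface clique $X_{b_1}$, whose head is guarded by the child's own predicate $P_{b_1}$.

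It remains to verify correctness and the running time. Correctness is the standard gluing argument for tree decompositions: because $X_b$ separates the subtree below $b$ from the rest, partial homomorphisms agreeing on the interface values recorded by the $P_b$'s compose into a homomorphism of all of $Q$, and conversely; the extra $X_b$-guard atoms never discard solutions, since they are themselves atoms of $Q$ and hence satisfied whenever $Q$ is. For the complexity, the assignment of atoms, the computation of each $X_b$, and the emission of the (constantly many per child) accumulator rules are all linear-time passes over $T$ and $Q$, giving the $O(\card{Q} + \card{T})$ bound; since every rule has at most $\max(g_\sigma(k),2) + \binom{k+1}{2}$ atoms and the program arity is at most $\max(\arity{\sigma}, k+1)$, the body size is bounded by a function $f_\sigma(k)$ depending only on $\sigma$, as required.
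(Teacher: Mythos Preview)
There is a genuine gap in the guardedness argument for your accumulator chain. You correctly identify the danger that an accumulator head may contain two non-adjacent variables, but your proposed fix --- processing the children in ``an order induced by an elimination ordering of the separators'' so that ``the frontier tracked by every accumulator stays contained in a single interface clique'' --- does not work in general. Consider a bag $b$ with $\dom(b)=\{u,v,w,x\}$, parent interface $X_b=\{u,x\}$, and three children with interfaces $X_{b_1}=\{u,v\}$, $X_{b_2}=\{v,w\}$, $X_{b_3}=\{w,x\}$; assume the primal graph restricted to $\dom(b)$ is exactly this $4$-cycle (this is easily realised by a concrete CQ of simplicial width~$3$). In any linear ordering of $b_1,b_2,b_3$, after two steps the frontier necessarily contains a non-adjacent pair: for instance after $b_1,b_2$ you must retain $u$ (needed for $X_b$) and $w$ (needed for $X_{b_3}$), and $\{u,w\}$ is not an edge, so no atom of $Q$ guards it and neither body atom of the combining rule contains both. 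No ordering avoids this, and adding extensional guard atoms cannot help since the needed edge simply does not exist in the primal graph.

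The paper's proof sidesteps this by \emph{not} chaining children with distinct interfaces. It first applies a preprocessing lemma that rewrites $T$ in linear time so that every bag has degree at most $2^{k+1}$: children of $b$ are partitioned by the value of their interface $S\subseteq\dom(b)$ (at most $2^{k+1}$ possibilities), and children sharing the same interface $S$ are strung under a path of fresh bags each with domain exactly~$S$. This chaining \emph{is} CFG because every intermediate head equals the clique~$S$. After this preprocessing, each bag has boundedly many children and one can write a single rule $P_b\leftarrow \calA^{\mathrm{guard}}_b \wedge \calA_b \wedge \bigwedge_i P_{b_i}$ whose head $X_b$ is guarded by the extensional atoms in $\calA^{\mathrm{guard}}_b$. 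In short, the missing idea is that only children with a \emph{common} interface can be safely threaded into a chain; combining across distinct interfaces must happen in one bounded-size rule, which is possible precisely because there are only $2^{k+1}$ distinct interfaces.
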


To prove Proposition~\ref{prp:simplicial}, we first prove the following lemma about simplicial tree decompositions:

  \begin{lemma}
    \label{lem:rewritesimplicial}
    For any simplicial decomposition $T$ of width~$k$ of a query $Q$, we can compute in
    linear time a simplicial decomposition $T_{\mathrm{bounded}}$ of $Q$ such
    that each bag has degree at most~$2^{k+1}$.
  \end{lemma}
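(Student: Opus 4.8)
The plan is to reduce the degree of every bag by grouping its children according to the clique interface they share with it, and then replacing each such group by a path. First I would root the decomposition $T$ and, for every bag $b$ with children $b_1,\dots,b_m$, compute for each child its \emph{interface} $S_i \defeq \dom(b)\cap\dom(b_i)$, which by the simplicial hypothesis induces a complete subgraph of the primal graph of~$Q$. Since $\card{\dom(b)} \le k+1$, there are at most $2^{k+1}$ distinct possible interfaces, so I would bucket the children of~$b$ by their interface value, encoded as a bitmask over $\dom(b)$; this bucketing takes time linear in the number of children of~$b$, hence $O(\card{T})$ over the whole tree.

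For each bucket, i.e.\ each set of children $b_{i_1},\dots,b_{i_\ell}$ sharing a common interface $S$, I would replace the ``star'' in which they all hang directly below~$b$ by a \emph{path}: introduce fresh auxiliary bags $c_1 \wire c_2 \wire \cdots \wire c_\ell$, each with domain exactly~$S$, attach $c_1$ as a child of~$b$, and hang $b_{i_j}$ as the second child of~$c_j$. After doing this for every bucket of every bag, each original bag has exactly one child per distinct interface value, hence at most $2^{k+1}$ children, while every auxiliary bag has at most two; this yields the claimed degree bound, and the construction is clearly computable in $O(\card{T})$. Note that each $c_j$ is a valid bag, since $S = \dom(c_j) \subseteq \dom(b)$ is a clique.

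The work is in checking that $T_{\mathrm{bounded}}$ is still a simplicial decomposition of~$Q$. The decisive structural observation is that, in any tree decomposition, the subtrees rooted at two distinct children $b_i, b_j$ of~$b$ can share only vertices lying in $\dom(b)$: any vertex occurring in both subtrees has, by the connectivity axiom, its region of occurrence passing through~$b$. From this I would derive, for two children $b_{i_p}, b_{i_q}$ of the same bucket, that $\dom(b_{i_p}) \cap \dom(b_{i_q}) = S$, since this intersection is contained in $\dom(b)\cap\dom(b_{i_p}) = S$ and contains~$S$. Coverage of facts is untouched because no original bag domain changes (and auxiliary bags add no facts). For connectivity, I would verify that any vertex $a$ occurring both in the subtree newly placed below~$c_j$ and somewhere above it must lie in $\dom(c_j) = S$: such an $a$ lies in $\dom(b)$ by connectivity through~$b$ in the original tree, and in $\dom(b_{i_j})$, hence in $\dom(b)\cap\dom(b_{i_j}) = S$, so it occurs along the whole path $c_1,\dots,c_j$ and connectivity is preserved. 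Finally, each new edge has interface exactly~$S$ (either $\dom(b)\cap\dom(c_1)$ or $\dom(c_j)\cap\dom(b_{i_j})$ or $\dom(c_j)\cap\dom(c_{j+1})$, all equal to~$S$), which is a clique, so the simplicial property holds at the new edges and is unaffected at the old ones.

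The main obstacle is precisely this connectivity and simpliciality verification, and in particular pinning down the structural fact that distinct child subtrees of a bag meet exactly in the shared interface clique, so that rerouting a subtree one level down through a copy of~$S$ never forces a vertex to ``skip'' a bag. Everything else — the interface bucketing, the per-bag degree count, and the linear-time bound — is routine once that fact is in hand.
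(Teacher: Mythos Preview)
Your proposal is correct and follows essentially the same construction as the paper: partition each bag's children by their interface with the parent (at most $2^{k+1}$ distinct interfaces since $\card{\dom(b)}\le k+1$), then replace each group by a path of fresh bags whose domain is the common interface~$S$. Your verification of the tree-decomposition and simpliciality axioms is in fact more explicit than the paper's, which simply asserts that connectivity is preserved and that all new interfaces equal some~$S$ already present in~$T$.
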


  \begin{proof}
	Fix $Q$ and $T$.
        We construct the simplicial decomposition $T_{\mathrm{bounded}}$ of $Q$
        in a process which shares some similarity with the routine rewriting of
        tree decompositions to make them binary, by creating copies of bags.
        However, the process is more intricate because we need to preserve the fact that
        we have a \emph{simplicial} tree decomposition, where interfaces are
        guarded.
        
        We go over $T$ bottom-up: for each bag $b$ of $T$, we create a bag
        $b'$ of $T_{\mathrm{bounded}}$ with same domain as $b$.
        Now, we partition the children of $b$ depending on their intersection
        with $b$: for every subset $S$ of the domain of~$b$ such that $b$
        has some children whose intersection with $b$ is equal to $S$, we
        write these children
        $b_{S,1},\ldots,b_{S,n_S}$ (so we have $S = \dom(b) \cap
        \dom(b_{S,j})$ for all $1 \leq j \leq n_S$), and we write $b'_{S,1},
        \ldots, b'_{S, n_S}$ for the copies that we already created for these bags in
        $T_{\mathrm{bounded}}$. Now, for each~$S$, we create
        $n_S$ fresh bags $b'_{=S,j}$ in $T_{\mathrm{bounded}}$ (for $1 \leq
        j \leq n_S$) with domain equal to~$S$,
        and we set $b'_{=S,1}$ to be a child of~$b'$,
        $b'_{=S,j+1}$ to be a child of $b'_{=S,j}$ for all $1 \leq j < n_S$,
        and we set each $b'_{S,i}$ to be a child of $b'_{=S,i}$.

        This process can clearly be performed in linear time. Now, the degree of
        the fresh bags in $T_{\mathrm{bounded}}$ is at most~$2$, and the degree
        of the copies of the original bags is at most~$2^{k+1}$, as stated.
        Further, it is clear that the result is still a tree decomposition (each
        fact is still covered, the occurrences of each element still form a
        connected subtree because they are as in~$T$ with the addition of some
        paths of the fresh bags), and the interfaces in~$T_{\mathrm{bounded}}$
        are the same as in~$T$, so they still satisfy the requirement of
        simplicial decompositions.
  \end{proof}

    We can now prove Proposition~\ref{prp:simplicial}. In fact, as will be easy
    to notice from the proof, our construction further ensures that the
    equivalent CFG-Datalog program is positive, nonrecursive, and
    conjunctive.
Recall that
  a Datalog program is \emph{positive} if it contains no negated
  atoms. It is \emph{nonrecursive} if there is no cycle in the directed graph on
  $\sigmai$ having an edge from $R$ to $S$ whenever a rule contains $R$ in its
  head and $S$ in its body.
  It is \emph{conjunctive}~\cite{benedikt2010impact} if each intensional relation
  $R$ occurs in the head of at most one rule.

  \begin{proof}[Proof of Proposition~\ref{prp:simplicial}]
        Using Lemma~\ref{lem:rewritesimplicial}, we can start by rewriting in
        linear time the
        input simplicial decomposition to ensure that each bag has degree at
        most $2^{k+1}$. Hence, let us assume without loss of generality that $T$
        has this property. We further add an empty root bag if necessary to
        ensure that the root bag of~$T$ is empty and has exactly one child.

	We start by using Lemma~3.1 of \cite{flum2002query} to annotate in linear time each node $b$ of $T$ by
	the set of atoms $\calA_b$ of $Q$ whose free variables are in the domain of $b$ and such that 
	for each atom $A$ of $\calA_b$, $b$ is the topmost bag of $T$ which contains all the variables of~$A$.
        As the signature $\sigma$ is fixed, note that we have
        $\card{\calA_b} \leq g_\sigma(k)$ for some function~$g_\sigma$ depending only on~$\sigma$.

	We now perform a process similar to Lemma~3.1 of \cite{flum2002query}. 
	We start by precomputing in linear time a mapping $\mu$ that associates, to each
        pair $\{x, y\}$ of variables of~$Q$, the set of all atoms in~$Q$ where
        $\{x, y\}$ co-occur. We can compute $\mu$ in linear time by
        processing all atoms of~$Q$ and adding each atom as an image of~$\mu$ for
        each pair of variables that it contains (remember that the arity of~$\sigma$ is
        constant). Now, we do the following computation:
        for each bag $b$ which is not the root of~$T$, letting
        $S$ be its interface with its parent bag, we annotate $b$ by a set of
        atoms $\calA^\guard_b$ defined as follows:
        for all $x,y \in S$ with $x \neq y$, letting $A(\mathbf{z})$ be an atom
        of~$Q$ where $x$ and $y$ appear (which must exist, by the requirement on
        simplicial decompositions, and which we retrieve from~$\mu$), we
        add $A(\mathbf{w})$ to $\calA^\guard_b$, where, for $1 \leq i \leq
        \card{\mathbf{z}}$, we set $w_i \defeq z_i$ if $z_i \in \{x, y\}$, and
        $w_i$ to be a fresh variable otherwise. In other words,
        $\calA^\guard_b$ is a set of atoms that ensures that the interface $S$
        of $b$ with its parent is covered by a clique, and we construct it by
        picking atoms
        of~$Q$ that witness the fact that it is guarded (which it is, because
        $T$ is a simplicial decomposition), and replacing their irrelevant
        variables to be fresh. Note that $\calA^\guard_b$
        consists of at most $k\times(k+1)/2$ atoms, but the domain of these
        atoms is not a subset of $\dom(b)$ (because they include fresh
        variables). This entire computation is performed in linear time.

        We now define the function $f_\sigma(k)$ as follows, remembering that
        $\arity{\sigma}$ denotes the arity of the \emph{extensional} signature:
        \[
          f_\sigma(k) \defeq (k+1) \times \left( g_\sigma(k)
          + 2^{k+1} + k(k+1)/2 \right).
        \]

	We now build our CFG-Datalog program $P$ of body size $f_\sigma(k)$
        which is equivalent to $Q$. We define the intensional signature
        $\sigmai$ by creating one intensional predicate
        $P_b$ for each non-root bag $b$ of~$T$,
        whose arity is the size of the intersection of $b$ with its parent.
        As we
        ensured that the root bag $b_\r$ of~$T$ is empty and has exactly one
        child $b_\r'$, we use $P_{b_\r'}$ as our
        0-ary $\goal$ predicate (because its interface with its parent $b_\r$
        is necessarily empty).
        We now define
        the rules of~$P$ by
        processing $T$ bottom-up: for each bag $b$ of~$T$, we add one rule
        $\rho_b$ with head $P_b(\mathbf{x})$, defined as follows:

	\begin{itemize}
		\item If $b$ is a leaf, then
                  $\rho_b$ is $P_b \leftarrow \bigwedge\calA^{\text{guard}}_{b} \land \bigwedge \calA_b$.
		\item If $b$ is an internal node with children $b_1,\ldots,b_m$
                  (remember that $m \leq 2^{k+1}$),
			then $\rho_b$ is $P_b \leftarrow
			\bigwedge\calA^{\text{guard}}_{b} \land \bigwedge \calA_b \land
                        \bigwedge_{1 \leq i \leq m} P_{b_i}$.
	\end{itemize}
        
        We first check that $P$ is clique-frontier-guarded, but this is the
        case because by construction the conjunction of atoms
	$\bigwedge\calA^{\guard}_{b}$ is a suitable guard for $\mathbf{x}$: for each
	$\{x, y\} \in \mathbf{x}$, it contains an atom where both $x$ and $y$ occur.

        Second, we check that the body size of $P$ is indeed
        $f_\sigma(k)$. It is clear that $\arity{P} = \arity{\sigmai \cup \sigma} \leq k+1$.
	Further, 
        the maximal number
	of atoms in the body of a rule is $g_\sigma(k) + 2^{k+1} + k(k+1)/2$,
        so we obtain the desired bound.

	What is left to check is that $P$ is equivalent to $Q$. It will be
        helpful to reason about $P$ by seeing it as the conjunctive query $Q'$
        obtained by recursively inlining the definition of rules: observe that
        this a conjunctive query, because $P$ is conjunctive, i.e., for each
        intensional atom $P_b$, the rule $\rho_b$ is the only one where
        $P_b$ occurs as head atom. It is clear that $P$ and~$Q'$ are
        equivalent, so we must prove that $Q$ and $Q'$ are equivalent.

        For the forward direction, it is obvious that $Q'$ implies $Q$, because
        $Q'$ contains every atom of~$Q$ by construction of the $\calA_b$. For
        the backward direction, noting that the only atoms of $Q'$ that are not
        in $Q$ are those added in the sets $\calA^\guard_b$, we observe that
        there is a homomorphism from $Q'$ to $Q$ defined by mapping each atom
        $A(\mathbf{w})$ occurring in some $\calA^\guard_b$ to the atom
        $A(\mathbf{z})$ of~$Q$ used to create it;
        this mapping is the identity on the two variables $x$ and $y$ used to
        create $A(\mathbf{w})$, and maps each fresh variables $w_i$ to
        $z_i$: the fact that these variables are fresh ensures that this
        homomorphism is well-defined. This shows $Q$ and $Q'$, hence $P$, to be
        equivalent, which concludes the proof.
\end{proof}

This implies that CFG-Datalog can express any CQ up to
increasing the body size parameter (since any CQ has a simplicial decomposition), unlike, e.g., $\mu$CGF~\cite{berwanger2001games}.
Conversely, we can show that bounded-simplicial-width CQs
\emph{characterize}
the queries expressible in CFG-Datalog when disallowing negation,
recursion,
and disjunction.  

\begin{proposition}
	\label{prp:onlysimplicial}
  For any positive, conjunctive, nonrecursive CFG-Datalog program~$P$
  with body size $k$,
  there is a CQ $Q$ of
  simplicial width $\leq k$ that is equivalent to~$P$.
\end{proposition}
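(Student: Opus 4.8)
The plan is to invert the construction from Proposition~\ref{prp:simplicial}. Given a positive, conjunctive, nonrecursive CFG-Datalog program $P$ of body size $k$, I would first exploit the three structural hypotheses to extract a single conjunctive query and a simplicial decomposition witnessing its width. Because $P$ is \emph{conjunctive}, each intensional predicate occurs as the head of exactly one rule, so I can define $Q$ by the standard inlining procedure: starting from the goal predicate, recursively replace each intensional atom $P_b(\mathbf{x})$ by the body of its defining rule, renaming the body's existentially quantified variables afresh at each substitution. Because $P$ is \emph{nonrecursive}, the dependency graph on $\sigmai$ is acyclic, so this inlining terminates and yields a well-defined CQ $Q$. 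Because $P$ is \emph{positive}, no negated atoms appear, so $Q$ is a genuine conjunctive query. It is then immediate (by the semantics of nonrecursive conjunctive Datalog) that $Q$ is equivalent to~$P$.

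\emph{The main work} is to build a simplicial decomposition of $Q$ of width $\leq k$ directly from the rule structure of $P$. The natural idea is to let the decomposition tree mirror the rule-dependency tree: create one bag $b_r$ for each rule $\rho_r$ appearing in the inlining, whose domain consists of the (renamed) variables occurring in the \emph{local} body of $\rho_r$ -- that is, the head variables of $\rho_r$ together with the variables of the extensional atoms in its body, but \emph{not} the variables introduced inside the subtrees corresponding to the intensional atoms it calls. I would make $b_r$ the parent of the bags coming from the rules that instantiate the intensional atoms in the body of $\rho_r$. One must verify the two tree-decomposition axioms: every atom of $Q$ lies in some bag (true, since each extensional atom of $Q$ is contributed by exactly one rule body and its variables all land in that rule's bag), and the occurrences of each variable form a connected subtree (true because each freshly renamed variable is either local to a single rule or is a head/frontier variable shared only between a rule and its immediate parent, by the inlining renaming discipline).

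\emph{The crucial point}, and the step I expect to be the real obstacle, is verifying the \textbf{simplicial} condition: for each bag $b_r$ and child $b_{r'}$, the shared variables $S = \dom(b_r) \cap \dom(b_{r'})$ must induce a clique in the primal graph of $Q$. By construction, $S$ is exactly the set of head variables of the rule $\rho_{r'}$ (the frontier passed from the parent call into the child). Here the clique-frontier-guardedness of $P$ does the work: the definition guarantees that for every pair of distinct head variables $x_i, x_j$ of $\rho_{r'}$, there is a positive atom in the body of $\rho_{r'}$ in which both occur. Each such atom, after inlining, becomes an atom of $Q$ (if extensional) or propagates down to a guaranteed covering atom further in the subtree; in either case it places an edge between $x_i$ and $x_j$ in the primal graph of $Q$, so $S$ is a clique. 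The delicate part is chasing the guard through intensional atoms in the body: a pair of head variables might be co-guarded only by an \emph{intensional} atom $P_{b''}(\ldots)$, and I must argue inductively that clique-frontier-guardedness of $P_{b''}$'s own rule, combined with the variable identifications made during inlining, still yields an actual covering atom in $Q$ on that pair.

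Finally, \emph{the width bound} follows by bounding $|\dom(b_r)| - 1 \leq k$: the domain of $b_r$ consists of the head variables plus the local extensional-atom variables of $\rho_r$, and since the body size of $P$ is the maximal number of body atoms times the arity, and each bag's domain is covered by the variables of the local atoms of a single rule, the size of any bag is at most the body size~$k$ plus one. Thus $Q$ has simplicial width $\leq k$, completing the argument. I would present the inlining and the tree construction first, then the two tree-decomposition axioms, and reserve the bulk of the prose for the inductive verification of the simplicial (clique-guard) property, which is where the clique-frontier-guardedness hypothesis is essential and where the bookkeeping on renamed variables is most error-prone.
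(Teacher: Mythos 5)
Your proposal follows essentially the same route as the paper's proof: unfold the program into a single CQ (the paper phrases this as rewriting $P$ into an equivalent program $P'$ in which every intensional predicate occurs in the head of one rule and in exactly one rule body, exactly once, with fresh variables, and then takes $Q$ to be the conjunction of all extensional atoms of $P'$), let the decomposition tree be the call/inlining tree with one bag per rule occurrence, and derive the simplicial property from clique-frontier-guardedness, with the width bound coming from the fact that the body size bounds the number of variables per rule. On the step you single out as delicate --- a pair of interface variables whose only guard in the child rule is an \emph{intensional} atom --- you are in fact more explicit than the paper: the paper simply asserts that guardedness of $P'$ makes the decomposition simplicial, whereas one does need to chase the guard down through the inlining, invoking clique-frontier-guardedness of the called rule at each step and nonrecursiveness for termination, until an extensional atom of $Q$ covering the pair is reached. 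Your inductive argument for this is correct and supplies a detail the paper leaves implicit.

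There is, however, a concrete error in your bag definition that would make the construction fail as literally stated: you put in $\dom(b_r)$ only the head variables of $\rho_r$ and the variables of its \emph{extensional} body atoms, omitting variables that occur only in intensional atoms of the body. Consider the program $\mathrm{Goal}() \leftarrow T(x,y) \wedge T'(y)$, $T(x,y) \leftarrow R(x,y)$, $T'(y) \leftarrow S(y)$, which is positive, conjunctive, nonrecursive, and clique-frontier-guarded. Under your definition the root bag is empty, yet $y$ appears in both children's bags, so the occurrences of $y$ do not form a connected subtree; moreover the interface between the root and the $T$-child is $\emptyset$ rather than the child's head variables $\{x,y\}$, which your simplicial argument presupposes. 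The fix is exactly what the paper does: $\dom(b_r)$ must consist of \emph{all} variables occurring in the rule occurrence $\rho_r$, intensional atoms included. Connectedness then holds, but not for the reason you give (a variable is \emph{not} shared only between a rule and its immediate parent --- it can be passed down an arbitrarily long chain of head-variable identifications); rather, sharing happens only along parent--child edges, so each variable's occurrences form a subtree rooted at the rule occurrence where it is introduced. The width bound survives the fix, since the number of variables in a rule body is at most the number of its atoms times the arity, hence at most $k$, giving bags of size $\leq k$ and width $\leq k$.
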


  To prove Proposition~\ref{prp:onlysimplicial}, we will use the notion of
  \emph{call graph} of a Datalog program. This is the graph $G$ on the relations of
  $\sigmai$ which has an edge from $R$ to $S$ whenever a rule contains relation
  $R$ in its head and $S$ in its body. From the requirement that $P$ is
  nonrecursive,
  we know that this graph $G$ is a DAG.

  \begin{proof}[Proof of Proposition~\ref{prp:onlysimplicial}]
  We first check that every intensional relation reachable from $\text{Goal}$ in
  the call graph~$G$ of~$P$ appears in the head of a rule of $P$ (as $P$ is
    conjunctive, this
  rule is then unique). Otherwise, it is clear that $P$ is not satisfiable (it
  has no derivation tree), so we can simply rewrite $P$ to the query False.
  We also assume without loss of generality that each intensional relation
  except $\goal$ occurs in the body of some rule, as otherwise we can simply
  drop them and drop all rules where they appear as the head relation.

  In the rest of the proof we will consider the rules of~$P$ in some order, and
    create an equivalent CFG-Datalog program $P'$ 
        with rules
  $r'_0, \ldots, r'_{m}$. We will ensure that $P'$ is also positive,
    conjunctive, and
  nonrecursive, and that it further satisfies the following additional properties:
  \begin{enumerate}
	  \item Every intensional relation other than $\text{Goal}$ appears in
            the body of exactly one rule of~$P'$, and appears there exactly
            once;
	  \item For every $0 \leq i \leq m$, for every variable $z$ in the body
            of rule $r'_i$ that does not occur in its head,
		  then for every $0 \leq j < i$, $z$ does not occur in $r'_j$.
  \end{enumerate}

    We initialize a queue that contains only the one rule that defines
    $\text{Goal}$ in~$P$, and we do the following until the queue is empty:
  \begin{itemize}
	  \item Pop a rule $r$ from the queue.
		  Let $r'$ be defined from $r$ as follows:
                  for every intensional relation $R$ that occurs in the body of
                  $r$, letting $R(\mathbf{x^1}), \ldots, R(\mathbf{x^n})$ be
                  its occurrences, rewrite these atoms to $R^1_r(\mathbf{x^1}),
                  \ldots, R^n_r(\mathbf{x^n})$, where the $R^i_r$ are \emph{fresh}
                  intensional relations.
	  \item Add $r'$ to $P'$.
	  \item For each intensional atom $R^i_r(\mathbf{x})$ of $r'$, letting $R$
            be the relation from which $R^i_r$ was created, 
            let $r_R$ be the rule of $P$ that has $R$ in its head (by our
            initial considerations, there is one such rule, and as the program
            is
	    conjunctive there is exactly one such rule). Define
            $r'_{R^i}$ from~$r_R$ by changing its head relation to be $R^i_r$
            instead of~$R$, and by renaming
            its head and body variables such that the head is exactly
            $R^i_r(\mathbf{x})$. Further rename all variables that occur in the
            body but not in the head, to replace them by fresh new variables.
            Add $r'_{R^i}$ to the queue.
  \end{itemize}

  We first argue that this process terminates. Indeed, considering the graph
  $G$, whenever we pop from the queue a rule with head relation $R$ (or a fresh
  relation created from a relation~$R$), we add to the queue a finite number of
  rules for head relations created from relations $R'$ such that the edge $(R,
  R')$ is in the graph $G$. The fact that $G$ is acyclic ensures that the
  process terminates (but note that its running time may generally be
  exponential in the input).
  Second, we observe that, by construction, $P$ satisfies the first property,
  because each occurrence of an intensional relation in a body of $P'$ is fresh,
  and satisfies the second property, because each variable which is in the body of
  a rule but not in its head is fresh, so it cannot occur in a previous rule

  Last, we verify that $P$ and $P'$ are equivalent, but this is immediate,
  because any derivation tree for $P$ can be rewritten to a derivation tree for $P'$ (by
  renaming relations and variables), and vice-versa.

  We define $Q$ to be the conjunction of all extensional atoms occurring
  in~$P'$. To show that it is equivalent to~$P'$, the fact that $Q$ implies $P'$
  is immediate as the leaves are sufficient to construct a derivation tree, and the
  fact that $P'$ implies $Q$ is because, 
  letting $G'$ be the call graph of~$P'$, by the first property of~$P'$ we can easily observe that it is a
  tree, so the structure of derivation trees of $G'$ also corresponds to $P$, and by
  the second property of~$P'$ we know that two variables are equal in two extensional
  atoms iff they have to be equal in any derivation tree. Hence, $P'$ and $Q$ are
  indeed equivalent.

 We now justify that $Q$ has simplicial width at most $k$.
 We do so by building from~$P'$ a simplicial decomposition $T$ of $Q$ of width $\leq k$. 
 The structure of $T$ is the same as $G'$ (which is actually a tree). For each
 bag $b$ of $T$ corresponding to a node of $G'$ standing for a rule $r$ of $P'$, we
 set the domain of $b$ to be the variables occurring in~$r$. It is clear that
 $T$ is a tree decomposition of~$Q$, because each atom of~$Q$ is covered by a
 bag of~$T$ (namely, the one for the rule whose body contained that atom) and the
 occurrences of each variable form a connected subtree (whose root is the node
 of $G'$ standing for the rule where it was introduced, using the second
 condition of~$P'$). Further, $T$ is a simplicial decomposition because 
  $P'$ is clique-frontier-guarded; further, from the second condition, the
  variables shared between one bag and its child are precisely the head
  variables of the child rule. The width is $\leq k$ because the
  body size of a CFG-Datalog program is an upper bound on the maximal number of
  variables in a rule body.
\end{proof}

However, our CFG-Datalog fragment is still exponentially more \emph{concise}
than such CQs:

\begin{proposition}\label{prp:simplicialconcise}
  There is a signature $\sigma$ and a family $(P_n)_{n\in\NN}$ of CFG-Datalog
  programs with body size at most $6$ 
  which are positive, conjunctive, and nonrecursive, such that
  $|P_n|=O(n)$
  and any conjunctive query $Q_n$ equivalent to $P_n$ has size $\Omega(2^n)$.
\end{proposition}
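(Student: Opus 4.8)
The plan is to exploit the fact that inlining a positive, conjunctive, nonrecursive program, exactly as in the proof of Proposition~\ref{prp:onlysimplicial}, can unfold a linear-size program into an exponential-size conjunctive query, and then to argue that this query is a \emph{core}, so that no equivalent CQ can be smaller. I would take the signature $\sigma = \{E, G\}$ with $E$ and $G$ both binary, use binary intensional predicates $A_0, \ldots, A_n$ together with the goal predicate, and define $P_n$ by the rules
\begin{align*}
  A_0(x,y) &\leftarrow E(x,y), \\
  A_i(x,w) &\leftarrow A_{i-1}(x,y) \land A_{i-1}(y,w) \land G(x,w) \quad (1 \le i \le n), \\
  \mathrm{Goal}() &\leftarrow A_n(x,w).
\end{align*}
First I would check the easy syntactic properties: each intensional predicate heads exactly one rule (conjunctive), there is no negation (positive), the call graph is the strictly decreasing chain $\mathrm{Goal} \to A_n \to \cdots \to A_0$ (nonrecursive), the guard $G(x,w)$ makes every rule clique-frontier-guarded, the maximal body has $3$ atoms over an arity-$2$ signature so the body size is $6$, and $|P_n| = O(n)$.

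Second, I would describe the CQ obtained by fully inlining $P_n$. Expanding $A_n(x_0, x_{2^n})$ and splitting each $A_i(a, a+2^i)$ into $A_{i-1}(a, a+2^{i-1})$ and $A_{i-1}(a+2^{i-1}, a+2^i)$, the $2^n$ leaves $A_0(a,a+1)$ contribute the $E$-atoms of a directed path $E(x_0,x_1), \ldots, E(x_{2^n-1}, x_{2^n})$, while each of the $2^n-1$ internal nodes contributes one $G$-atom over a dyadic interval. Thus $P_n$ is equivalent to the Boolean CQ $Q_n$ whose canonical structure $C_n$ is this directed $E$-path of length $2^n$ augmented with $G$-edges over the dyadic intervals; it has $2^{n+1}-1 = \Theta(2^n)$ atoms.

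The main work, and the step I expect to be the real obstacle, is the lower bound: showing that \emph{every} CQ equivalent to $P_n$ has size $\Omega(2^n)$, for which I would prove that $C_n$ is a core. The key observation is that the $E$-reduct of $C_n$ is a directed path, which is rigid: any homomorphism of a directed path into itself is the identity, since following edges forces $h(x_{j+1}) = h(x_j)+1$ and staying inside the path forces $h(x_0)=x_0$. Since every vertex of $C_n$ lies on this path, any endomorphism of $C_n$ must already be the identity on all vertices, hence is the identity and trivially preserves the $G$-edges; so $C_n$ has no proper retract and equals its own core. I would then conclude by the standard fact that homomorphically equivalent CQs have isomorphic cores and that the core minimizes the number of atoms: any $Q_n'$ equivalent to $P_n$ is homomorphically equivalent to $C_n$, so $|Q_n'| \ge |\mathrm{core}(Q_n')| = |C_n| = \Omega(2^n)$. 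The only care needed here is to confirm that the dyadic $G$-edges cannot create any additional folding, which the rigidity argument rules out precisely because it pins down the identity map using the $E$-edges alone.
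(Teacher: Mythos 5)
Your proposal is correct, and your program $P_n$ is, up to renaming ($E$ for $R$, $A_i$ for $R_i$), exactly the one the paper uses; the syntactic checks (positive, conjunctive, nonrecursive, clique-frontier-guarding via the $G$-atom, body size $3\times 2=6$, $|P_n|=O(n)$) coincide as well. Where you genuinely diverge is the lower bound. The paper never mentions cores: it shows by induction on Datalog derivations that $R_i(a,b)$ forces an $R$-path of length $2^i$, deduces that the exponential instance $I_n$ (which is precisely your canonical instance $C_n$) satisfies $P_n$ while no strict subinstance does, and then applies the elementary observation (Lemma~\ref{lem:match}) that a match of a CQ has at most as many facts as the query has atoms --- so any equivalent CQ must have all of $I_n$ as a match and hence at least $2^n$ atoms. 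You instead inline $P_n$ into the CQ $Q_n$ (as in the proof of Proposition~\ref{prp:onlysimplicial}), prove that $C_n$ is a core via rigidity of the directed $E$-path, and conclude via Chandra--Merlin together with the standard facts that homomorphically equivalent structures have isomorphic cores and that the core minimizes the number of atoms. The two routes establish the same structural fact in different clothing: for a canonical instance, ``no strict subinstance satisfies the query'' is exactly the property of being a core. Your rigidity argument is arguably crisper than the paper's derivation-based minimality claim (which the paper leaves partly implicit, in particular why each dyadic $G$-fact is indispensable), at the price of invoking core theory and the inlining equivalence $P_n \equiv Q_n$; the paper's match-counting argument is more elementary and self-contained, needing neither.
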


  To prove Proposition~\ref{prp:simplicialconcise}, we recall the
  following classical notion:

  \begin{definition}
    \label{def:match}
  A \emph{match} of a conjunctive query $Q$
  in an instance $I$
  is a subinstance $M$ of~$I$ which is an image of a homomorphism from
  the canonical instance of $Q$ to~$I$, i.e., $M$ witnesses that $I \models Q$,
  in particular $M \models Q$.
  \end{definition}

  Our proof will rely on the following elementary observation:
  \begin{lemma}
    \label{lem:match}
  If a CQ $Q$ has a match $M$
  in an instance $I$, then necessarily $\card{Q} \geq \card{M}$.
  \end{lemma}
  
  \begin{proof}
    As $M$ is
  the image of $Q$ by a homomorphism, it cannot have more facts than~$Q$
    has atoms.
  \end{proof}

  We are now ready to prove Proposition~\ref{prp:simplicialconcise}:

  \begin{proof}[Proof of Proposition~\ref{prp:simplicialconcise}]
  Fix $\sigma$ to contain a binary relation $R$ and a binary relation $G$.
  Consider the rule $\rho_0: R_0(x, y) \leftarrow R(x, y)$ and define
  the following rules, for all $i > 0$:
  \[
    \rho_i: R_i(x, y) \leftarrow G(x, y), R_{i-1}(x, z), R_{i-1}(z, y)
  \]
  For each $i > 0$, we let $P_i$ consist of the rules $\rho_j$ for $0 \leq j
  \leq i$, as well as the rule $\text{Goal}() \leftarrow R_i(x, y)$.
  It is clear that each $P_i$ is positive, conjunctive, and
  nonrecursive; further, the predicate $G$ ensures that it is a CFG-Datalog
  program. The arity is~$2$ and the maximum number of atoms is the body is~$3$,
  so the body size is indeed~$6$.

  We first prove by an immediate induction that, for each $i \geq 0$, considering
  the rules of $P_i$ and the intensional predicate $R_i$, whenever an instance $I$
  satisfies $R_i(a, b)$ for two elements $a, b \in \dom(I)$ then there is an
  $R$-path of length $2^{i}$ from~$a$ to~$b$.
  Now, fixing $i \geq 0$, this
  clearly implies there is
  an instance $I_i$
  of size (number of
  facts) $\geq 2^{i}$, namely, an $R$-path of this length
  with the right set of additional $G$-facts,
  such that $I_i \models P_i$ but any strict subset of~$I_i$ does not satisfy
  $P_i$.

  Now, let us consider a CQ $Q_i$ which is equivalent to $P_i$, and let us show
  the desired size bound. By equivalence, we know that $I_i \models Q_i$, hence
  $Q_i$ has a match $M_i$ in $I_i$, but any strict subset of~$I_i$ does not
  satisfy $Q_i$, which implies that, necessarily, $M_i = I_i$ (indeed,
  otherwise $M_i$ would survive as a match in some strict subset of~$I_i$).
  Now, by Lemma~\ref{lem:match}, we deduce that $\card{Q_i} \geq \card{M_i}$,
  and as $\card{M_i} = \card{I_i} \geq 2^{i}$, we obtain the desired size
  bound, which concludes the proof.
\end{proof}

\subparagraph*{Guarded negation fragments.}
Having explained the connections between CFG-Datalog and CQs, we now study its
connections to the more expressive languages of guarded logics, specifically,
the \emph{guarded negation fragment} (GNF), a fragment of first-order logic
\cite{barany2015guarded}. Indeed, when putting GNF formulas in 
\emph{GN-normal form} \cite{barany2015guarded} or even \emph{weak GN-normal form}
\cite{benedikt2014effective}, we can translate them to
CFG-Datalog, and we can use the \emph{CQ-rank}
parameter \cite{benedikt2014effective} (that measures the maximal number of
atoms in conjunctions) to control the body size parameter.
We first recall from \cite{benedikt2014effective}, Appendix~B.1, the definitions of a weak GN-normal form formulas and of CQ-rank:

\begin{definition}
A formula is in weak GN-normal form if it is a $\phi$-formula in the inductive definition below:
  
  \begin{itemize}
    \item A disjunction of existentially quantified conjunctions of
      $\psi$-formulas is a $\phi$-formula;
    \item An atom is a $\psi$-formula;
    \item The conjunction of a $\phi$-formula and of a guard is a
      $\psi$-formula;
    \item The conjunction of the negation of a $\phi$-formula and of a guard is
      a $\psi$-formula.
  \end{itemize}
The \emph{CQ-rank} of a $\phi$-formula is the overall
  number of conjuncts occurring in the disjunction of existentially quantified
  conjunctions that defines this subformula.
\end{definition}

We can then show:

\begin{proposition}\label{prp:gnf-to-icg2}
  There is a function $f_\sigma$ (depending only on~$\sigma$) such that, for any
  weak GN-normal form GNF query $Q$ of CQ-rank $r$, we can compute in time $O(\card{Q})$ an
  equivalent nonrecursive CFG-Datalog program $P$ of body size $f_\sigma(r)$.
\end{proposition}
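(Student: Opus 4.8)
The plan is to proceed by structural induction on the weak GN-normal form of $Q$, introducing one intensional predicate per $\phi$-subformula and exploiting the crucial feature of this normal form: every $\phi$-subformula other than the whole query occurs immediately under a guard that covers all of its free variables. Concretely, to each $\phi$-subformula $\phi'$ distinct from $Q$ I associate the guard atom $G(\mathbf{w})$ of the enclosing $\psi$-formula ($G \land \phi'$ or $G \land \neg\phi'$), and I create a fresh intensional predicate $P_{\phi'}$ whose head variables $\mathbf{x}$ are exactly the free variables of $\phi'$. Since those free variables are contained in the variables of the single guard atom $G$, of which there are at most $\arity{\sigma}$, the arity of $P_{\phi'}$ is at most $\arity{\sigma}$. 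For the top-level sentence $Q$ I use the $0$-ary predicate $\goal$.

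Next I would define the rules. Writing $\phi'$ as $\bigvee_i \exists \mathbf{y}_i \bigwedge_j \psi_{ij}$, I emit, for every disjunct, one rule $P_{\phi'}(\mathbf{x}) \leftarrow G(\mathbf{w}) \land \bigwedge_j \mathrm{tr}(\psi_{ij})$, where the translation $\mathrm{tr}$ of a $\psi$-formula is: an atom is kept verbatim; a positive guarded subformula $G'' \land \phi''$ becomes the single intensional atom $P_{\phi''}(\cdot)$; and a guarded negation $G'' \land \neg\phi''$ becomes $G''(\mathbf{w}'') \land \neg P_{\phi''}(\cdot)$. The rules for $\goal$ are defined identically but without a guard, which is harmless since their head is $0$-ary. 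By construction $P_{\phi'} \equiv G \land \phi'$, so the positive case translates $G'' \land \phi''$ correctly; the key verification is the negated case, where folding $G''$ into $P_{\phi''}$ does not spoil correctness because in the parent body we conjoin $G''$ again, and $G'' \land \neg(G'' \land \phi'') \equiv G'' \land \neg\phi''$. An induction over subformulas then yields $P \equiv Q$.

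I then check the remaining requirements. The program is clique-frontier-guarded: the $\goal$-rules have no head variables, while in every rule for $P_{\phi'}$ the positive guard atom $G$ lies in the body and contains all head variables, so any two of them co-occur in $G$ (this also ensures safety). It is nonrecursive because the call graph follows the strict subformula nesting and is therefore a DAG, and it is stratifiable because a negated atom $\neg P_{\phi''}$ only ever points to a predicate of a proper subformula, so ordering predicates by subformula depth places every negation target in a strictly lower stratum. For the body size, each disjunct contains at most $r$ formulas $\psi_{ij}$ by definition of CQ-rank, each contributing at most two atoms, plus the single guard atom; combined with $\arity{P} \le \arity{\sigma}$ this gives body size at most $(2r+1)\,\arity{\sigma} \defeq f_\sigma(r)$, a function of $r$ and $\sigma$ only. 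Finally, the construction is a single pass over the formula that emits $O(\card{Q})$ rules whose total size is $O(\card{Q})$ (each $\psi$-formula is processed once, and each guard is copied only once per disjunct of its subformula), so it runs in time $O(\card{Q})$.

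The main obstacle, and the step deserving the most care, is reconciling clique-frontier-guardedness with stratified negation. Defining $P_{\phi'}$ to capture $\phi'$ alone would fail to guard its head variables, whereas folding the enclosing guard $G$ into $P_{\phi'}$ changes its meaning to $G \land \phi'$; the computation above shows that this reinterpretation is exactly compensated by re-conjoining the guard in the parent body in the negated case, which is the delicate semantic point to get right. A secondary point to handle cleanly is the arity bound, which relies crucially on the normal-form guarantee that each guard atom covers all free variables of the $\phi$-subformula beneath it.
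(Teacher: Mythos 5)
Your proposal is correct, and its skeleton matches the paper's proof: one intensional predicate per $\phi$-subformula, one rule per disjunct of its defining disjunction, nonrecursiveness from the subformula nesting, stratification by subformula depth, body size controlled by the CQ-rank times $\arity{\sigma}$, and a single linear-time pass over the syntax tree. The genuine difference is where the guards end up. The paper defines each intensional predicate to code the $\phi$-subformula itself and leaves every guard where it occurs, namely in the body of the \emph{parent} rule (so parent bodies contain $G \land R_{\phi''}$ or $G \land \neg R_{\phi''}$); you instead fold the enclosing guard into the defined predicate, so that $P_{\phi'} \equiv G \land \phi'$, and you compensate in the negated case via the identity $G \land \neg(G \land \phi'') \equiv G \land \neg\phi''$. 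This buys something concrete: in your program, the rules for \emph{inner} subformulas are visibly clique-frontier-guarded (all head variables lie in the guard atom, which now sits in their own bodies), and they are safe even when some disjunct of $\phi'$ omits one of its free variables. Under the paper's literal construction neither point is automatic: the stated definition of weak GN-normal form does not force the free variables of a $\phi$-subformula to co-occur in any atom \emph{inside} that subformula, so a rule such as $R_{\phi'}(x,z) \leftarrow R(x,y) \land R(y,z)$, arising from $\phi'(x,z) = \exists y\,(R(x,y) \land R(y,z))$ guarded by $G(x,z)$ in its parent, would fail clique-frontier-guardedness; the paper's proof only checks that intensional atoms occurring in rule bodies are guarded and leaves the head condition implicit. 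So the guard-folding that you single out as the delicate step is not a stylistic choice: it is exactly what makes the CFG and safety requirements go through with no extra assumption on the normal form (your atom count of $2r+1$ per body, versus the paper's claim of at most $r$, is likewise the more careful bookkeeping, though both are immaterial since $f_\sigma$ is arbitrary).
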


\begin{proof}
  We define $f_\sigma : n \mapsto \arity{\sigma} \times n$.

  We consider an input Boolean GN-normal form formula $Q$ of CQ-rank~$r$, and call $T$ its
  abstract syntax tree. We rewrite $T$ in linear time to inline in
  $\phi$-formulas the definition of their $\psi$-formulas, so all nodes of~$T$
  consist of $\phi$-formulas, in which all subformulas are guarded (but they can
  be used positively or negatively).
 
  We now process $T$ bottom-up. We introduce one intensional Datalog predicate~$R_n$ per node $n$ in~$T$:
  its arity is the number of variables that are free at~$n$.
  We then introduce one rule $\rho_{n,\delta}$ for each disjunct $\delta$ of the disjunction that defines
  $n$ in~$T$: the head of~$\rho_{n, \delta}$ is an $R_n$-atom whose free variables are the variables that are
  free in~$n$, and the
  body of $\rho_{n, \delta}$ is the conjunction that defines~$\delta$, with each subformula replaced by the
  intensional relation that codes it. Of course, we use the predicate $R_r$ for
  the root $r$ of~$T$ as our goal predicate; note that it must be $0$-ary, as
  $Q$ is Boolean so there are no free variables at the root of~$T$. This process defines
  our CFG-Datalog program~$P$: it is clear that this process runs in linear
  time.
 
  We first observe that body size for an intensional predicate $R_n$
  is less than the CQ-rank of the corresponding subformula.
	Hence, as the arity of~$\sigma$ is
  bounded, clearly~$P$ has body size $\leq
  f_\sigma(r)$. We next observe that intentional predicates in the bodies of
  rules of~$P$ are
  always guarded, thanks to the guardedness requirement on~$Q$.
  Further, it is obvious that $P$ is nonrecursive, as it is
  computed from the abstract syntax tree~$T$. Last, it is clear that $P$ is equivalent to the
  original formula $Q$, as we can obtain $Q$ back simply by inlining the
  definition of the intensional predicates.
\end{proof}

In fact, the efficient translation of bounded-CQ-rank normal-form GNF programs
(using the fact that subformulas are ``answer-guarded'', like our guardedness
requirements)
has been used recently (e.g., in~\cite{benedikt2016step}),
to give efficient procedures for GNF \emph{satisfiability}.
The \emph{satisfiability} problem for a logic formally asks, given a sentence in
this logic, whether it is satisfiable (i.e., there is an instance that satisfies
it), and two variants of the problem exist: \emph{finite satisfiability}, where
we ask for the existence of a \emph{finite} instance (as we defined them in this work),
and \emph{unrestricted satisfiability}, where we also allow the satisfying
instance to be infinite.
The decidability of both finite and unrestricted satisfiability for GNF is shown by translating GNF to
automata (for a treewidth which is not fixed, unlike in our context, but
depends on the formula). CFG-Datalog further
allows clique guards (similar to CGNFO~\cite{barany2015guarded}), can reuse
subformulas (similar to the idea of DAG-representations in
\cite{benedikt2014effective}), and supports recursion (similar to
GNFP~\cite{barany2015guarded}, or GN-Datalog~\cite{barany2012queries} but
whose combined complexity is intractable --- P$^\text{NP}$-complete). 
CFG-Datalog also resembles $\mu$CGF~\cite{berwanger2001games}, but recall that
$\mu$CGF
is not a \emph{guarded negation} logic, so, e.g., $\mu$CGF cannot express all CQs, unlike CFG-Datalog or GNF.

Hence, the design of CFG-Datalog, and its translation to automata, has
similarities with guarded logics. However, to our knowledge, the idea of applying
it to query evaluation is new, and CFG-Datalog is designed to support all
relevant features to capture interesting query languages (e.g.,
clique guards are necessary to capture bounded-simplicial-width queries).
Moreover CFG-Datalog is intrinsically more expressive than guarded
negation logics as its satisfiability is undecidable, in contrast with
GNF~\cite{barany2015guarded}, CGNFO~\cite{barany2015guarded},
GNFP~\cite{barany2015guarded}, GN-Datalog~\cite{barany2012queries},
$\mu$CGF~\cite{Gradel02}, the satisfiability of all of which is decidable.

\begin{proposition}
  \label{prop:undecidable}
  Given a signature~$\sigma$ and a CFG-Datalog~$P$ over~$\sigma$,
  determining if~$P$ is satisfiable
  is undecidable, in both the finite and
  unrestricted cases.
\end{proposition}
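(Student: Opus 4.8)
The plan is to reduce from the halting problem for deterministic Turing machines on the empty input, which is undecidable. Recall that Proposition~\ref{prop:undecidable} is the single place where we drop the assumption that $\sigma$ is fixed, so we are free to let the signature depend on the machine $M$. Given $M$, I would build a CFG-Datalog program $P_M$ whose models (i.e.\ the instances $I$ with $I \models P_M$) encode \emph{halting} computation histories of $M$, laid out as a grid whose rows are successive configurations and whose columns are tape cells. Because a halting run is a finite object, I would obtain $P_M$ satisfiable in the finite sense $\iff$ $P_M$ satisfiable in the unrestricted sense $\iff$ $M$ halts on the empty input: if $M$ halts, the honest grid of its run is a finite model, while if $M$ does not halt then there is no history reaching a halting state at all, whether finite or infinite, so $P_M$ is unsatisfiable in both senses. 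A single construction thus handles both cases.

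For the encoding I would take binary relations $H$ and $V$ for the horizontal (space) and vertical (time) successors of the grid, a unary relation marking the origin cell, and unary relations tagging each cell with its tape symbol and with the machine state/head when present. A configuration transition of $M$ is \emph{local}: the label of a cell at time $i+1$ is a fixed function of the labels of that cell and its two horizontal neighbours at time $i$, which is exactly the cellular-automaton style of TM simulation. The key observation for respecting the clique-frontier-guarded restriction is that these checks only ever relate grid-adjacent cells, which are linked by a positive $H$- or $V$-atom that can serve as a clique guard; and that all \emph{global} bookkeeping can be phrased with $0$-ary or unary intensional heads, for which Definition~\ref{def:CFG} is vacuously satisfied (there is no pair of distinct head variables to guard). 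In particular I would never need a binary intensional predicate whose two arguments are not joined by a grid edge.

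The rules themselves use stratified negation. In the low strata I would derive a $0$-ary error flag whenever a local inconsistency is detected: a cell carrying two different symbols or states, a grid square that fails to commute, a pair of consecutive rows whose labels violate the transition function of $M$, a malformed origin or initial row, or --- crucially --- a \emph{hole}, i.e.\ a non-halting cell lacking a horizontal or vertical successor (expressed by rules such as an error on $\mathrm{Cell}(x) \wedge \lnot \mathrm{HasVSucc}(x)$ together with $\mathrm{HasVSucc}(x) \leftarrow V(x,y)$, whose heads are unary and hence unconstrained). In a top stratum I would derive $\goal$ exactly when there is an origin cell carrying the initial configuration, some cell carrying a halting state, and \emph{no} error. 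These purely local constraints force any model to be a complete, hole-free grid whose initial row is the initial configuration and whose rows follow $M$'s transitions; since $M$ is deterministic, such a grid is forced to be the unique run of $M$, and can contain a halting state only if $M$ genuinely halts. This is what lets me avoid any form of reachability or transitive closure, which CFG-Datalog cannot express (as illustrated in the example following Definition~\ref{def:CFG}): global connectivity of the run is replaced by local hole-freeness.

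The main obstacle is \textbf{robustness} of the backward direction: I must ensure that \emph{no} instance, however contrived, can derive $\goal$ unless $M$ halts, which reduces to making the local checks strong enough that every model really contains a faithful halting run as a substructure. The delicate points here are the standard ones for grid encodings --- forcing $H$ and $V$ to behave functionally and the grid squares to commute so that cell identities are well-behaved, which I would phrase in terms of the symbol/state labels rather than raw element equality (equality itself not being expressible), and ruling out spurious identifications or ``folding'' of the grid that might manufacture a halting cell out of an infinite non-halting run. Once these local invariants are in place, determinism of $M$ pins down the labelling, the equivalence $P_M$ satisfiable $\iff M$ halts follows, and we obtain undecidability of both finite and unrestricted satisfiability for CFG-Datalog.
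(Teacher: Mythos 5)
Your reduction is genuinely different from the paper's (which reduces from the implication problem for functional and inclusion dependencies, a problem already known to be undecidable over both finite and unrestricted instances), but it has a gap that is fatal for the unrestricted case. The sentence ``if $M$ does not halt then there is no history reaching a halting state at all, whether finite or infinite'' is exactly what your local constraints cannot certify. All of your checks (label uniqueness, functionality, commutation, transition consistency, hole-freeness, well-formedness of the initial row) are local, and $\mathrm{Goal}$ only asks for \emph{some} origin cell, \emph{some} halting cell, and no error; nothing ties the halting cell to the origin. Connectivity of the grid is precisely the kind of reachability property CFG-Datalog cannot enforce---as you yourself note via the transitive-closure example---so an infinite model can consist of the honest (non-halting, error-free) run of $M$ from the origin together with a \emph{disjoint} pseudo-run: either an isolated halting cell, or, if you add the constraint that every non-initial cell has a vertical predecessor, a backward-infinite chain of configurations, each the legal predecessor of the next, terminating in a halting configuration. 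Such backward-infinite chains exist for standard machines (e.g.\ a machine that scans right over blanks until it reads an $a$ and then halts: take a halting configuration and pull the head back one cell at a time, forever, on a two-way infinite tape), and no local constraint excludes them, because well-foundedness of the time order is not locally axiomatizable. Hence for such $M$ your program is satisfiable over unrestricted instances even though $M$ never halts, and the claimed equivalence fails.

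The finite half is salvageable: with a predecessor constraint on non-initial cells, an error on halting cells that possess a vertical successor, and functionality/commutation enforced via a derived identity predicate (note that deriving $\mathrm{Eq}(x,x)$ from every extensional atom and then using the unguarded negative literal $\lnot \mathrm{Eq}(y,y')$ to express inequality is perfectly legal in CFG-Datalog---the paper's own proof relies on exactly this---so your worry that equality is inexpressible is unfounded), any backward chain from a halting cell in a finite model must either reach the unique well-formed initial row, forcing $M$ to halt by determinism, or close into a cycle, which functionality forces to pass through the halting row itself, triggering an error. But the unrestricted half needs a different argument, e.g.\ a second reduction from \emph{non}-halting: require an origin, declare the appearance of any halting state to be an error, and observe that hole-freeness then forces the forward run to unfold forever, so that this second program is satisfiable over unrestricted instances iff $M$ does not halt. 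You would then prove the two halves of Proposition~\ref{prop:undecidable} by two different reductions, whereas the paper gets both at once because dependency implication is undecidable in both settings. As written, with a single construction resting on the incorrect ``no infinite history'' claim, the proposal does not establish the unrestricted half of the statement.
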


\begin{proof}
  \newcommand\Eq{\mathord{\mathrm{Eq}}}

We reduce from the implication problem for
functional dependencies and inclusion dependencies, a problem
known to be
undecidable~\cite{mitchell1983implication,chandra1985implication} over
both finite and unrestricted instances.
See also~\cite{abiteboul1995foundations} for a general presentation of the
problem and formal definitions and notation for functional dependencies and inclusion
dependencies.

Let $\sigma$ be a relational signature,
  let $d$ be a functional dependency or an inclusion dependency over~$\sigma$, and let $\Delta$ be a set of
functional  dependencies and inclusion dependencies over~$\sigma$.
The problem is to determine if $\Delta$ implies $d$.

We construct a CFG-Datalog program~$P$ over $\sigma$ which is satisfiable over finite
(resp., unrestricted) instances iff $\Delta$ implies $d$ over finite
(resp., unrestricted) instances, which establishes that
CFG-Datalog satisfiability is undecidable.

The intensional signature of the program $P$ is made of:
\begin{itemize}
  \item a binary
relation $\Eq$;
\item a nullary relation $P_{\neg \delta}$ for every dependency
  $\delta \in \Delta \cup \{d\}$;
\item a relation $P_{\Pi_Z(S)}$ whose arity is $|Z|$ whenever there is at least one inclusion
  dependency $R[Y]\subseteq S[Z] \in \Delta \cup \{d\}$;
  \item the nullary relation $\mathrm{Goal}$.
  \end{itemize}

  For every extensional relation~$R$ and for every $1\leq i\leq \arity{R}$,
  we add rules of the form:
  \begin{align*}\Eq(x_i,x_i)\leftarrow
  R(\mathbf x).\end{align*}
Consequently, for
  every instance~$I$ over~$\sigma$, 
  the $\Eq$-facts of~$P(I)$ will be exactly $\{\Eq(v,v) \mid v \in \dom(I)\}$.

  For every functional dependency $\delta$ in $\Delta\cup\{d\}$ with
  $\delta = R[Y] \rightarrow R[Z]$, we add the following rules, for $1 \leq j
  \leq \card{Z}$:
  \begin{align*}
    P_{\neg \delta}() \leftarrow
    R(\mathbf x), R(\mathbf x'), {}&\Eq(y_{1}
    ,y'_{1} ), \dots, \Eq(y_{|Y|},y'_{|Y|}) ,
    \neg \Eq(z_j,z'_j)
\end{align*}
where for each $1\leq i \leq |Y|$, the variables $y_i$~and~$y'_i$ are those 
at the $Y_i$-th position in $R(\mathbf x)$
and $R(\mathbf x')$, respectively; and where 
the variables $z_j$~and~$z'_j$ are those
at the $Z_j$-th position in $R(\mathbf x)$
and $R(\mathbf x')$, respectively.

  For every inclusion dependency $\delta \in \Delta \cup \{d\}$, with $\delta  =
R[Y] \subseteq S[Z]$ we add two rules:
\[
  P_{\Pi_Z(S)}(\mathbf{z}) \leftarrow S(\mathbf{x})\qquad P_{\neg
\delta}() \leftarrow R(\mathbf{x}), \neg P_{\Pi_Z(S)}(\mathbf{y})\] where
$\mathbf z$ are the variables at positions $Z$ within  $S(\mathbf{x})$ and 
$\mathbf y$ are the variables at positions $Y$ within $R(\mathbf{x})$.

Finally, we add one rule for the goal predicate:
\[\mathrm{Goal}() \leftarrow  P_{\neg d}()
,\neg P_{\neg \delta_1}(), \cdots, \neg P_{\neg \delta_k}()\]
where $\Delta=\{\,\delta_1,\dots, \delta_k\,\}$.

Note that all the rules that we have written are clearly in CFG-Datalog.
Now, let $I$ be some instance.
It is clear that for each functional dependency $\delta$,
$P_{\neg \delta}()$ is in $P(I)$ iff $I$ does not satisfy $\delta$.
Similarly,
for each inclusion dependency $\delta$, $P_{\neg
\delta}()$ is in $P(I)$ iff $I$ does not satisfy $\delta$. Therefore, for
each instance $I$, $\mathrm{Goal}()$ is in $P(I)$ iff $I$ satisfies~$\Delta$ and
$I$ does not satisfy $d$. Thus $P$ is satisfiable over finite instances (resp.,
unrestricted instances) iff there exists a finite
instance (resp., a finite or infinite instance) that satisfies $\Delta$ and does not satisfy $d$, i.e., iff $\Delta$
does imply~$d$ over finite instances (resp., over unrestricted instances).
\end{proof}

We point out that the extensional signature is not fixed in this proof,
unlike in the rest of the article. This is simply to establish the
expressiveness of CFG-Datalog, it has no impact on our study of the
combined complexity of query evaluation. 

\subparagraph*{Recursive languages.}
The use of fixpoints in CFG-Datalog, in particular, allows us
to capture the combined tractability of interesting recursive languages. First,
observe that our
guardedness requirement becomes trivial when all intensional predicates are
monadic (arity-one), so our main result implies that 
\emph{monadic Datalog} of bounded body size is tractable in combined
complexity on treelike instances. This is reminiscent of the results of
\cite{gottlob2010monadic}. We show:

\begin{proposition}\label{prp:monadicdl}
  The combined complexity of monadic Datalog query evaluation on
  bounded-treewidth instances is FPT when parameterized by instance treewidth
  and body size (as in Definition~\ref{def:CFG}) of the monadic Datalog program.
\end{proposition}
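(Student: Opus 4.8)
The plan is to reduce monadic Datalog to CFG-Datalog and invoke Theorem~\ref{thm:main}. The key observation, already flagged in the paragraph introducing this proposition, is that the clique-frontier-guardedness requirement becomes \emph{vacuous} when all intensional predicates are monadic. Indeed, the condition of Definition~\ref{def:CFG} only constrains pairs of \emph{distinct} head variables $x_i \neq x_j$; since a monadic intensional predicate has at most one variable in its head, there are no such pairs to guard, and so every rule of a monadic Datalog program is automatically clique-frontier-guarded. Hence any monadic Datalog program is, syntactically and without modification, already a CFG-Datalog program.

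First I would make this identification precise: given a monadic Datalog program $P$ (with stratified negation, matching the setting of Definition~\ref{def:CFG}), observe that its intensional signature $\sigmai$ has arity~$1$, so each rule head $R(\mathbf{x})$ contains a single variable, and the clique-frontier-guardedness condition holds trivially. Thus $P$ is a CFG-Datalog program. Next I would check that the body-size parameter of Definition~\ref{def:CFG} for $P$ viewed as a CFG-Datalog program is exactly the quantity named in the proposition statement (the maximal number of body atoms times the arity), so the parameter transfers directly with no blow-up.

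With this established, I would simply apply Theorem~\ref{thm:main}: checking whether $I \models P$ for an instance $I$ of treewidth $\ki$ and a CFG-Datalog program $P$ of body size $\kp$ is FPT-bilinear in $\card{I}\cdot\card{P}$, parameterized by $\kp$ and $\ki$. Since $P$ is our monadic program, this is in particular FPT when parameterized by instance treewidth and the body size of the monadic program, which is exactly the claim.

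I do not anticipate a genuine obstacle here: the entire content of the proposition is the trivial-guardedness observation, and the quantitative work is entirely subsumed by Theorem~\ref{thm:main}. The only point requiring a moment of care is confirming that nothing in the CFG-Datalog framework (stratification, the treatment of negation, the definition of body size) imposes a constraint on monadic programs beyond what monadic Datalog already satisfies; once that routine check is done, the result follows immediately. The proof is therefore a short corollary rather than a standalone argument.
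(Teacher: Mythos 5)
Your proposal is correct and takes exactly the same route as the paper: the paper's proof is precisely the observation that any monadic Datalog program is already a CFG-Datalog program of the same body size (guardedness being vacuous for single-variable heads), followed by an application of Theorem~\ref{thm:main}. Your more detailed spelling-out of the trivial-guardedness check and the parameter transfer is a faithful expansion of that one-line argument.
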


\begin{proof}
  This is simply by observing that any monadic Datalog program is a CFG-Datalog
  program with the same body size, so we can simply apply
  Theorem~\ref{thm:main}.
\end{proof}

Second, CFG-Datalog can capture \emph{two-way regular path 
queries} (2RPQs), and even \emph{strongly acyclic conjunctions of 2RPQs} (SAC2RPQs) \cite{calvanese2000containment,barcelo2013querying}, a 
well-known query language in the context of  
graph databases and knowledge bases:

\begin{definition}
  We assume that the signature $\sigma$ contains only binary relations.
  A (non-Boolean) \emph{regular path query} (RPQ) $q_L(x,y)$ is defined by a regular language $L$ on the
  alphabet $\Sigma$ of the relation symbols of~$\sigma$. Its semantics is that
  $q_L$ has two free variables $x$ and $y$, and $q_L(a, b)$ holds on an instance
  $I$ for $a, b \in \dom(I)$ precisely when there is a directed path $\pi$ 
  of relations of~$\sigma$ 
  from $a$ to
  $b$ such that the label of~$\pi$ is in $L$. A \emph{two-way regular
  path query} (2RPQ) is an RPQ on the alphabet $\Sigma^\pm \defeq \Sigma \sqcup
  \{R^- \mid R \in \Sigma\}$, which holds whenever 
  there is a path from~$a$ to~$b$ with label in~$L$, with $R^-$ meaning that we traverse an
  $R$-fact in the reverse direction. 
  A C2RPQ $q = \bigwedge_{i=1}^n q_i(z_i,z'_i)$ is a conjunction of 2RPQs, i.e., a conjunctive query made from atoms $q_i(z_i,z'_i)$ that are 2RPQs ($z_i$ and $z'_i$ are not necessarily distinct). 
The \emph{graph} of $q$ is the unlabeled undirected graph having as vertices the variables of $q$ and whose set of edges is $\{\{z_i,z'_i\} \mid 1 \leq i \leq n,~z_i \neq z'_i\}$.
A C2RPQ is \emph{acyclic} if its graph is acyclic. 
A \emph{strongly acyclic C2RPQ} (SAC2RPQ) is an
acyclic C2RPQ that further satisfies: 1) for $1 \leq i \leq n$, we have $z_i \neq z_i'$ (no self-loops); and 2) for $1 \leq i < j \leq n$, we have $\{z_i, z'_i\} \neq \{z_j,z'_j\}$ (no multi-edges). 
  A \emph{Boolean 2RPQ} (resp., \emph{Boolean C2RPQ}) is a 2RPQ (resp., C2RPQ) which is existentially quantified on all its free variables.
\end{definition}

\begin{proposition}[\cite{mendelzon1989finding,barcelo2013querying}]
  \label{prp:2rpq-to-icg}
  2RPQ query evaluation (on arbitrary instances) has linear time combined complexity.
  \end{proposition}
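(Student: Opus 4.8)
The plan is to use the standard product-automaton technique. First, I would represent the regular language $L$ defining $q_L(x,y)$ as a nondeterministic finite automaton $A = (S, s_0, \Delta_A, F)$ over the alphabet $\Sigma^\pm$: if $L$ is presented as a regular expression, Thompson's construction yields such an $A$ of size $O(\card{Q})$ in linear time. Since the 2RPQ is Boolean (both endpoints are existentially quantified), evaluation amounts to deciding whether there exist $a, b \in \dom(I)$ and a (two-way) path from $a$ to $b$ whose label is accepted by $A$.

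Second, I would build the product graph $G$ on the vertex set $\dom(I) \times S$, simulating a synchronized walk over the instance and the automaton. I add an edge from $(u, s)$ to $(v, s')$ whenever either $R(u,v) \in I$ and $s' \in \Delta_A(s, R)$ (a forward traversal), or $R(v,u) \in I$ and $s' \in \Delta_A(s, R^-)$ (a backward traversal, which is exactly where the two-way feature is absorbed). A straightforward induction on path length then shows that $q_L(a,b)$ holds on $I$ if and only if there is a directed path in $G$ from $(a, s_0)$ to some $(b, f)$ with $f \in F$. The Boolean query is therefore true iff some accepting vertex $(b,f)$ is reachable in $G$ from the source set $\{(a, s_0) \mid a \in \dom(I)\}$, which I would decide by a single multi-source BFS or DFS.

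For the complexity bound, $G$ has $\card{\dom(I)} \cdot \card{S} = O(\card{I} \cdot \card{Q})$ vertices, and its edges can be listed by scanning each fact $R(u,v)$ of $I$ once and, through a transition table indexed by label, emitting one edge for every transition of $A$ reading $R$ or $R^-$; this produces $O(\card{I} \cdot \card{Q})$ edges and constructs $G$ within that bound. Since reachability from a set of sources runs in time linear in the number of vertices plus edges, the whole procedure takes $O(\card{I} \cdot \card{Q})$ time, establishing linear combined complexity. There is no genuine obstacle here: the two-way navigation and the existential quantification over both endpoints are handled transparently by the product construction and the multi-source search, and the only point demanding mild care is to index transitions by label, so that building $G$ never inspects transitions irrelevant to a given fact.
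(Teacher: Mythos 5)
Your proof is correct, and it is essentially the paper's own approach in the only sense available: the paper does not prove this proposition but cites it from the literature (Mendelzon--Wood and Barcel\'o et al.), and your product-automaton construction --- NFA via Thompson's construction, product graph on $\dom(I) \times S$ absorbing the inverse letters $R^-$ as reversed edges, and multi-source reachability --- is precisely the classical argument underlying those references. The only cosmetic omission is that Thompson's construction produces $\epsilon$-transitions, which must be added as edges $(u,s) \rightarrow (u,s')$ within the same instance element in the product graph; this contributes $O(\card{I} \cdot \card{Q})$ further edges and does not affect the stated bound.
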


CFG-Datalog allows us to capture this result for Boolean SAC2RPQs on treelike
instances. We can prove the following result, for Boolean 2RPQs and SAC2RPQs, which 
further implies
translatability to automata (and efficient computation of provenance
representations). We do not know whether this extends to the more general classes studied 
in~\cite{barcelo2014does}.

\begin{proposition}\label{prp:rpqcompile}
	Given a Boolean SAC2RPQ $Q$ (where each 2RPQ is given as a regular expression), we can compute in time
  $O(\card{Q})$ an equivalent CFG-Datalog program $P$ of body size~$4$.
\end{proposition}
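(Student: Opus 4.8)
The plan is to compile each constituent 2RPQ into an NFA and simulate it with Datalog rules, exploiting the tree structure of the SAC2RPQ so that \emph{all} intensional predicates can be kept monadic (or nullary), which makes the clique-frontier-guardedness requirement vacuous. First I would convert each 2RPQ $q_L(z_i,z_i')$, given as a regular expression over $\Sigma^\pm$, into an NFA $A_i=(S_i,\delta_i,s_0^i,F_i)$ of size $O(\card{L})$ (e.g.\ by Thompson's construction), so that the total automaton size is $O(\card{Q})$. Since the graph of $Q$ is acyclic with no self-loops and no multi-edges, it is a simple forest; I would root each connected component and orient every edge from parent to child, \emph{reversing} the language of a 2RPQ (again linear-time, by reversing its automaton) whenever its stored direction points from child to parent. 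After this preprocessing, each tree edge carries a single NFA describing a path to be read forward from the parent element to the child element.

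Second, for each oriented edge from a variable $z$ to a child $c$ with automaton $A=(S,\delta,s_0,F)$, I introduce a monadic predicate $B^{z,c}_s$ per state $s\in S$, with intended meaning ``$B^{z,c}_s(u)$ holds iff some path from $u$ towards $c$ drives $A$ from $s$ to an accepting state and ends at a node $w$ with $\mathrm{Sat}_c(w)$'', where $\mathrm{Sat}_c$ is a further monadic predicate asserting that the conjunction of 2RPQs in the subtree rooted at $c$ is satisfiable with $c$ mapped to the given element. These are computed by
\begin{align*}
  B^{z,c}_s(u) &\leftarrow \mathrm{Sat}_c(u) && (s \in F),\\
  B^{z,c}_s(u) &\leftarrow R(u,u'),\, B^{z,c}_{s'}(u') && (s \xrightarrow{R} s'),\\
  B^{z,c}_s(u) &\leftarrow R(u',u),\, B^{z,c}_{s'}(u') && (s \xrightarrow{R^-} s'),
\end{align*}
plus one rule $B^{z,c}_s(u)\leftarrow B^{z,c}_{s'}(u)$ per $\epsilon$-transition. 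Each such rule has at most two body atoms over a binary signature, hence body size $\leq 4$. The predicate $\mathrm{Sat}_z$ conjoins $B^{z,c_1}_{s_0},\dots,B^{z,c_m}_{s_0}$ over the children $c_1,\dots,c_m$ of $z$; to keep bodies small I would \emph{chain} this conjunction through fresh monadic predicates $\mathrm{Sat}_z^{(0)},\dots,\mathrm{Sat}_z^{(m)}=\mathrm{Sat}_z$, writing $\mathrm{Sat}_z^{(i)}(u)\leftarrow \mathrm{Sat}_z^{(i-1)}(u),\, B^{z,c_i}_{s_0}(u)$ and initializing $\mathrm{Sat}_z^{(0)}(x)\leftarrow R(x,y)$ and $\mathrm{Sat}_z^{(0)}(y)\leftarrow R(x,y)$ for each $R\in\sigma$ so that $\mathrm{Sat}_z^{(0)}$ holds of every element (this also gives the correct leaf base case, where $m=0$). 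Finally I would derive $\mathrm{Goal}$ by similarly chaining, over the roots $r_1,\dots,r_t$ of the forest, the nullary facts from $\mathrm{Root}_j()\leftarrow \mathrm{Sat}_{r_j}(u)$. All heads are monadic or nullary, so guardedness is trivial, no rule exceeds two body atoms, so $P$ has body size $4$, and the whole construction is clearly linear in $\card{Q}$.

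The reason the construction stays within this monadic, tree-shaped fragment is precisely the two extra SAC conditions. Without them one would need binary predicates anchored at both endpoints of an edge: a self-loop $q_L(z,z)$ asks for a closed walk returning to the very element assigned to $z$, and two parallel 2RPQs on the same pair $\{z,z'\}$ ask for two distinct paths sharing \emph{both} endpoints. In either case a monadic frontier predicate cannot record the identity of the starting element, so the resulting binary predicates could not be guarded.

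The correctness argument, which I expect to be the main (though routine) obstacle, is a bottom-up induction on each rooted tree establishing the intended meanings above: one shows that $B^{z,c}_{s_0}(v)$ is derived iff some path from $v$ whose label lies in the (possibly reversed) language reaches a node $w$ with $\mathrm{Sat}_c(w)$ derived, and hence that $\mathrm{Sat}_z(v)$ is derived iff the subtree rooted at $z$ admits a satisfying assignment with $z$ mapped to $v$. The base case is the leaves, where $\mathrm{Sat}_c$ holds of every element; the inductive step combines the NFA-simulation invariant with the chaining rules. Summing over the independent components yields $I\models P$ iff $I\models Q$. The delicate points are the handling of edge orientation via language reversal and the treatment of degenerate cases (isolated variables, or an empty instance), but none of these affects the body-size bound or the linear running time.
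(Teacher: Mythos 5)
Your construction is correct and is essentially the paper's own proof: both root each tree of the SAC2RPQ, compile each edge's regular expression into a Thompson NFA (reversing it in linear time when the edge is oriented the wrong way), simulate the NFA along instance paths with one monadic predicate per state, anchor each subtree by a monadic ``subtree match'' predicate per variable, and chain the conjunction over children (and over connected components) through fresh monadic/nullary predicates so that every body has at most two atoms over an arity-$2$ signature, giving body size $4$ with clique-frontier-guardedness vacuous since all intensional predicates are monadic or nullary. The only difference is immaterial: the paper reads each edge from child to parent and propagates prefix states upward from the child's match predicate, whereas you read parent-to-child and propagate backward from accepting states seeded by the child's match predicate --- a mirror image of the same simulation.
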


\begin{proof}
  We first show the result for 2RPQs, and then explain how to extend it to
  SAC2RPQs.

  We first use Thompson's
  construction~\cite{thompson1968programming}
  to compute in linear time an equivalent NFA $A$ (with $\epsilon$-transitions) on
  the alphabet $\Sigma^\pm$. Note that the result of Thompson's
  construction has
  exactly one final state, so we may assume that $A$ has exactly one
  final state.

  We now define the intensional signature of the CFG-Datalog program to consist
  of one unary predicate $P_q$ for each state $q$ of the automaton, in addition
  to $\text{Goal}()$. We add the rule $\text{Goal}() \leftarrow P_{q_\f}(x)$ for the
  final state $q_\f$, and for each extensional relation $R(x, y)$, 
  we add the rules $P_{q_0}(x)
  \leftarrow R(x, y)$ and $P_{q_0}(y)
  \leftarrow R(x, y)$, where $q_0$ is the initial state. We then add
  rules corresponding to automaton transitions:
  \begin{itemize}
    \item for each transition from $q$ to $q'$ labeled with a letter
      $R$, we add the rule $P_{q'}(y) \leftarrow P_q(x), R(x, y)$;
    \item for each transition from $q$ to $q'$ labeled with a negative letter
      $R^-$, we add the rule $P_{q'}(y) \leftarrow P_q(x), R(y, x)$;
    \item for each $\epsilon$-transition from $q$ to $q'$ we add the rule
      $P_{q'}(x) \leftarrow P_q(x)$
  \end{itemize}

  This transformation is clearly in linear time,
  and the result clearly satisfies the desired body size
  bound. Further, as the result is a monadic Datalog program, it is clearly a
  CFG-Datalog program. Now, it is clear that, in any instance~$I$ where $Q$
  holds, from two witnessing elements $a$ and $b$ and a path $\pi: a = c_0, c_1, \ldots,
  c_n = b$ from $a$ to $b$ satisfying $Q$, we can build a derivation tree of the
  Datalog program by deriving $P_{q_0}(a), P_{q_1}(c_1), \ldots, P_{q_n}(c_n)$,
  where $q_0$ is the initial state and $q_n$ is final, to match the accepting
  path in the automaton $A$ that witnesses that $\pi$ is a match of~$Q$.
  Conversely, any derivation tree of the Datalog program $P$ that witnesses that an
  instance satisfies $P$ can clearly be used to extract a path of relations
  which corresponds to an accepting run in the automaton.

  \medskip

  We now extend this argument to SAC2RPQs. 
  We start with a preliminary observation on CFG-Datalog programs: any rule
  of the form (*) $A(x) \leftarrow A_1(x), \ldots, A_n(x)$, where $A$ and each
  $A_i$ is a unary atom, can be rewritten in linear time to rules with bounded
  body size, by creating unary intensional predicates $A_i'$ for $1 \leq i \leq
  n$, writing the rule
  $A_n'(x) \leftarrow A_n(x)$, writing the rule $A_i'(x) \leftarrow A_{i+1}'(x), A_i(x)$ for each
  $1 \leq i < n$, and writing the rule $A(x) \leftarrow A_1'(x)$. Hence, we will write rules of
  the form (*) in the transformation, with unbounded body size, being understood
  that we can finish the process by rewriting out each rule of this form to
  rules of bounded body size.

  Given a SAC2RPQ $Q$, we compute in linear time the undirected graph $G$ on
  variables, and its connected components. Clearly we can rewrite each connected
  component separately, by defining one $\text{Goal}_i()$ 0-ary predicate for
  each connected component $i$, and adding the rule $\text{Goal}() \leftarrow
  \text{Goal}_1(), \ldots, \text{Goal}_n()$: this is a rule of form (*), which we can
  rewrite. Hence, it suffices to consider each connected component separately.

  Hence, assuming that the graph $G$ is connected, we root it at an arbitrary
  vertex to obtain a tree $T$. For each node $n$ of~$T$ (corresponding to a variable of the
  SAC2RPQ), we define a unary intensional predicate $P'_n$ which will intuitively
  hold on elements where there is a match of the sub-SAC2RPQ defined by the
  subtree of~$T$ rooted at~$n$, and one unary intensional predicate $P''_{n,n'}$ for
  all non-root~$n$ and children $n'$ of~$n$ in~$T$ which will hold whenever
  there is a match of the sub-SAC2RPQ rooted at $n$ which removes all children
  of~$n$ except~$n'$. Of course we add the rule $\text{Goal}()
  \leftarrow P'_{n_\r}(x)$, where $n_\r$ is the root of~$T$.
  
  Now, we rewrite the SAC2RPQ to monadic Datalog
  by rewriting each edge of $T$ independently, as in the argument for 2RPQs
  above. Specifically, we assume that the edge when read from bottom to top
  corresponds to a 2RPQ; otherwise, if the edge is oriented in the wrong
  direction, we can clearly compute an automaton for the
  reverse language in linear time from the Thompson automaton, by reversing the
  direction of transitions in the automaton, and swapping the initial state and
  the final state. We modify the
  previous construction by replacing the rule for the initial state $P_{q_0}$ by
  $P_{q_0}(x) \leftarrow P'_{n'}(x)$ where $n'$ is the lower node of the 
  edge that we are rewriting, and the rule for the goal predicate in the head is replaced by a rule
  $P''_{n,n'}(x) \leftarrow P_{q_\f}(x)$, where $n$ is the upper node of the
  edge, and $q_\f$ is the final state of the
  automaton for the edge: this is the rule that defines the $P''_{n,n'}$.
  
  Now, we define each $P'_n$ as follows:

  \begin{itemize}
    \item If $n$ is a leaf node of~$T$, we define $P'_n$ by the same rules that
      we used to define $P_{q_0}$ in the previous construction, so that $P'_n$ holds
      of all elements in the active domain of an input instance.
    \item If $n$ is an internal node of~$T$, we define $P'_n(x) \leftarrow
      P''_{n,n_1}(x), \ldots, P''_{n,n_m}(x)$, where $n_1, \ldots, n_m$ are the
      children of~$n$ in~$T$: this is a rule of form (*).
  \end{itemize}

  Now, given an instance $I$ satisfying the SAC2RPQ, from a match of the SAC2RPQ
  as a rooted tree of paths, it is easy to see by bottom-up induction on the
  tree that we derive $P_v$ with the desired semantics, using the correctness of
  the rewriting of each edge. Conversely, a derivation tree for the rewriting can be
  used to obtain a rooted tree of paths with the correct structure where each
  path satisfies the RPQ corresponding to this edge.
\end{proof}

The rest of the article presents the tools needed for our tractability results (alternating two-way automata and cyclic provenance circuits) and their technical proofs.

\section{Translation to Automata}
\label{sec:compilation}
In this section, we study how we can translate CFG-Datalog queries on treelike
instances to tree automata, to be able to evaluate them efficiently.
As we showed with Propositions~\ref{prp:bntalower} and~\ref{prp:simplicial} (remembering that $\alpha$-acyclic queries have bounded simplicial width), we need more expressive
automata than bNTAs.
Hence, we use instead the formalism of 
\emph{alternating two-way automata} \cite{tata}, i.e., automata that can
navigate in trees in any direction, and can express transitions using Boolean
formulas on states.
Specifically, we introduce for our purposes a variant of these automata, which
are \emph{stratified} (i.e., allow a form of
stratified negation), and \emph{isotropic} (i.e., no direction is privileged,
in particular order is ignored).

As in Section~\ref{sec:treelike}, 
we will define tree automata that run on \emph{$\Gamma$-trees} for some alphabet
$\Gamma$: a $\Gamma$-tree $\la T,
\lambda \ra$ is a
finite rooted ordered
tree with a labeling function $\lambda$ from the nodes of~$T$ to
$\Gamma$.
The \emph{neighborhood} $\neigh(n)$ of a node $n \in T$ is the set which
contains~$n$, all children of~$n$, and the parent of~$n$ if it exists.
\subparagraph*{Stratified isotropic alternating two-way automata.}
To define the transitions of our alternating automata, we write 
$\calB(X)$ the set of propositional formulas (not necessarily monotone)
over a set~$X$ of variables: we will assume without loss of generality that \emph{negations are only applied to variables}, 
which we can always enforce using De Morgan's laws.
A \emph{literal} is a propositional variable $x\in X$ (\emph{positive} literal) or the
negation of a propositional variable~$\lnot x$ (\emph{negative} literal).

A \emph{satisfying assignment} of $\phi
\in \calB(X)$ consists of two \emph{disjoint} sets $P, N \subseteq X$
(for ``positive'' and ``negative'') such that $\phi$ is a tautology when
substituting the variables of $P$ with~$1$ and those of $N$ with $0$,
i.e., when we have $\nu(\phi)=1$ for every valuation $\nu$ of $X$ such that $\nu(x) = 1$ for all $x \in P$ and
$\nu(x) = 0$ for all $x \in N$.
Note that we allow satisfying assignments with $P \sqcup N
\subsetneq X$, which will be useful for our technical results.
We now define our automata:

\begin{definition}
  \label{def:satwa}
A \emph{stratified isotropic alternating two-way automaton} on $\Gamma$-trees
  ($\Gamma$-SATWA) is a tuple $A=(\calQ,q_{\I}, \Delta, \strat)$ with $\calQ$ a finite set of \emph{states}, $q_{\I}$ the \emph{initial state}, 
$\Delta$ the \emph{transition function} from $\calQ \times \Gamma$ to $\calB(\calQ)$, and 
  $\strat$ a \emph{stratification function}, i.e., a surjective function from $\calQ$
  to~$\{0,\ldots,m\}$ for some $m \in \NN$,
  such that for any $q, q' \in \calQ$ and $f \in \Gamma$,
  if $\Delta(q,f)$ contains~$q'$ as a
  positive literal (resp., negative literal), then 
  $\strat(q') \leq \strat(q)$ (resp., $\strat(q') < \strat(q)$).

  We define by induction on $0 \leq i \leq m$ an \emph{$i$-run} of~$A$
  on a $\Gamma$-tree $\la T , \lambda \ra$ as a finite tree $\la T_\r,
  \lambda_\r \ra$, with labels of the
  form $(q,w)$ or $\lnot (q,w)$ for $w\in T$ and $q \in \calQ$ with $\strat(q)
  \leq i$, by the
  following (nested) inductive definition on $T_r$:
  \begin{enumerate}
  \item For $q \in \calQ$ such that $\strat(q) < i$,
    the singleton tree $\la T_\r, \lambda_\r \ra$
    with one node labeled by $(q, w)$ (resp., by~$\neg (q,
    w)$) is an $i$-run if
      there is a $\strat(q)$-run of~$A$ on $\la T , \lambda \ra$ whose root is
      labeled by~$(q, w)$ (resp., if there is no such run);
  \item For $q \in \calQ$ such that $\strat(q) = i$,
    if $\Delta(q,\lambda(w))$ has a satisfying assignment $(P,N)$, if we have an $i$-run
          $T_{q^-}$ for each $q^- \in N$ with root labeled by $\neg (q^-, w)$, and an
          $i$-run $T_{q^+}$ for each $q^+ \in P$ with root labeled by $(q^+, w_{q^+})$ for
          some $w_{q^+}$ in~$\neigh(w)$, then the tree
          $\la T_\r, \lambda_\r \ra$
          whose root is labeled $(q,
          w)$ and has as children all the $T_{q^-}$ and $T_{q^+}$ is an
          $i$-run.
  \end{enumerate}
  A \emph{run} of $A$ starting in a state $q \in \calQ$ at a node $w \in T$ is an
  $m$-run whose root is labeled $(q, w)$.
  We say that $A$ \emph{accepts} $\la T, \lambda \ra$ (written $\la T, \lambda \ra \models A$) if there
  exists a run of $A$ on $\la T, \lambda \ra$ starting in
  the initial state $q_{\I}$ at the root of~$T$.
\end{definition}

Observe that the internal nodes of a run starting in some state $q$ are labeled
by states~$q'$ in the same stratum as~$q$. The leaves of the run may be labeled
by states of a strictly lower stratum or negations thereof, or by states of the
same stratum whose transition function is tautological, i.e., by some $(q', w)$
such that $\Delta(q', \lambda(w))$ has $\emptyset, \emptyset$ as a satisfying
assignment. Intuitively, if we disallow negation in transitions, our automata
amount to
the alternating two-way automata used by \cite{cachat2002two}, with the
simplification that they do not need parity acceptance conditions (because we
only work with finite trees), and that they are \emph{isotropic}:
the run for each positive child state of an internal node may
start indifferently on \emph{any} neighbor of~$w$ in the tree
(its parent, a child, or $w$ itself), no matter the direction. (Note, however,
that the run for negated child states must start on~$w$ itself.)

We will soon explain how the translation of CFG-Datalog is performed, but we
first note that evaluation of $\Gamma$-SATWAs is in linear time.
In fact, this result follows from the definition of provenance cycluits
for SATWAs in the next section, and the claim that these cycluits can be
evaluated in linear time.

\begin{proposition}\label{prp:satwaeval}
  For any alphabet $\Gamma$, given a $\Gamma$-tree $\la T,\lambda\ra$ and a $\Gamma$-SATWA $A$,
  we can determine whether $\la T,\lambda\ra\models A$ in time $O(\card{T} \cdot \card{A})$.
\end{proposition}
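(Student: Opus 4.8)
The plan is to exploit the stratification function $\strat$ to compute, bottom-up over the strata $0,1,\dots,m$, a Boolean table $\text{acc}(q,w)\in\{0,1\}$ indexed by states $q\in\calQ$ and nodes $w\in T$, with the intended meaning that $\text{acc}(q,w)=1$ iff $A$ has a run starting in $q$ at $w$; the final answer for $\la T,\lambda\ra\models A$ is then $\text{acc}(q_{\I},r)$ where $r$ is the root of~$T$. I would process the strata in increasing order so that, when computing stratum $i$, every value $\text{acc}(q',w')$ with $\strat(q')<i$ is already known. This ordering matches the run definition: by case~(1), whenever a run refers through a leaf to a state of a strictly lower stratum (positively or negatively), that reference is resolved against the already-fixed lower strata, and in particular negative literals, which by definition of a SATWA point only to strictly lower strata, have become constants at this point.

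Within a fixed stratum $i$, I would set up a \emph{monotone} Boolean equation system with one variable $Z_{q,w}$ for each node $w$ and each state $q$ with $\strat(q)=i$. The equation for $Z_{q,w}$ is obtained from $\Delta(q,\lambda(w))$ by expanding literals according to the isotropic two-way semantics: a positive literal on a same-stratum state $q^+$ becomes $\bigvee_{w'\in\neigh(w)} Z_{q^+,w'}$ (an isotropic move to any neighbor), a positive literal on a lower-stratum state $q^+$ becomes the constant $\bigvee_{w'\in\neigh(w)}\text{acc}(q^+,w')$, and a negative literal $\lnot q^-$ (necessarily of lower stratum) becomes the constant $\lnot\text{acc}(q^-,w)$ evaluated at $w$ itself. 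Since same-stratum states occur only positively in transitions, the system is monotone in the $Z$ variables, and I would define $\text{acc}(q,w)$ on stratum $i$ to be the value of $Z_{q,w}$ in the \emph{least} fixpoint.

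Correctness then splits into two inductions. The outer induction on the stratum index reduces everything to a single stratum, using the observation above that all lower-stratum references are constants. For a single stratum I would show that the least fixpoint coincides with the existence of a run: a fixpoint solution with $Z_{q,w}=1$ unfolds into a run, where a satisfying assignment $(P,N)$ of $\Delta(q,\lambda(w))$ is read off from the true literals, positive literals witnessed at the chosen neighbor and negative literals witnessed at $w$; conversely every run is a \emph{finite} tree and hence certifies, by induction on its size, that $(q,w)$ lies in the least fixpoint. For the running time I would compute each least fixpoint by the standard linear-time monotone-circuit / AND--OR propagation algorithm. The system for stratum $i$ has size at most $\sum_{w}\sum_{q:\strat(q)=i}\card{\Delta(q,\lambda(w))}\cdot\card{\neigh(w)}$, and summing over all strata (which partition $\calQ$) bounds the total by $\card{A}\cdot\sum_{w}\card{\neigh(w)}=O(\card{A}\cdot\card{T})$, since $\sum_{w}\card{\neigh(w)}$ is linear in $\card{T}$. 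This yields the claimed $O(\card{T}\cdot\card{A})$ bound.

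The step I expect to be the main obstacle is the faithful matching between the monotone equation system and the \emph{run} semantics, which is phrased through satisfying assignments $(P,N)$ with $P$ and $N$ required to be \emph{disjoint}. Expanding positive and negative literals independently (as the monotone system does) is sound because same-stratum states appear purely positively and every negative literal lands in a strictly lower, already-evaluated stratum; the genuinely delicate point is that a single lower-stratum state may occur both positively and negatively in one transition, so I must check that reading off $(P,N)$ from the true literals of a least-fixpoint solution never forces the same state simultaneously into $P$ and $N$. I would resolve this by reasoning with prime implicants of $\Delta(q,\lambda(w))$: a prime implicant never contains a complementary literal pair, hence induces disjoint sets $P$ and $N$, and the monotone evaluation returns true exactly when some prime implicant is witnessed, which is precisely the condition for a run to exist. (Alternatively, one may harmlessly preprocess each transition to delete any state occurring both positively and negatively, which never changes the set of satisfying assignments.)
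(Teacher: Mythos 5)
Your algorithm is the paper's proof, inlined. The paper proves this proposition by building the provenance cycluit of~$A$ on~$\la T,\lambda\ra$ (Theorem~\ref{thm:satwaprov}) and then evaluating that cycluit in linear time (Proposition~\ref{prp:cycluitlinear}); once a valuation is fixed, that cycluit \emph{is} your per-stratum monotone system: one gate per pair $(q,w)$, positive literals expanded as disjunctions over $\neigh(w)$, negative literals (necessarily of strictly lower stratum) resolved at~$w$ itself against already-computed strata, and a least-fixpoint evaluation per stratum by linear-time AND--OR propagation. Your two inductions are exactly the two directions of Lemma~\ref{lem:goodprov}, and your size accounting is sound since $\sum_w \card{\neigh(w)} = O(\card{T})$. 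So, for transitions in which no state occurs in both polarities, your proof is correct and essentially identical to the paper's.

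The gap is in your last paragraph: both of your patches are false. Take $\Delta(q,f) = q' \land \lnot q'$ with $\strat(q') < \strat(q)$, which is a legal SATWA transition. Your monotone system evaluates the gate of $(q,w)$ to $\bigl(\bigvee_{w' \in \neigh(w)} \mathrm{acc}(q',w')\bigr) \land \lnot\, \mathrm{acc}(q',w)$, which is true whenever $q'$ has a run at a strict neighbor of~$w$ but none at~$w$ itself; yet this transition has \emph{no} satisfying assignment whatsoever, since $P$ and $N$ must be disjoint, so Definition~\ref{def:satwa} admits no run from $(q,w)$ through it. Hence ``monotone evaluation true iff some consistent prime implicant is witnessed'' fails (the only implicant here is the complementary pair itself), and your parenthetical deletion claim fails too: in $(q' \land a) \lor (\lnot q' \land b)$, both $(\{q',a\},\emptyset)$ and $(\{b\},\{q'\})$ are satisfying assignments, so the occurrences of~$q'$ cannot be removed, however resolved, without changing the set of satisfying assignments. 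In fairness, you have put your finger on a defect of the paper itself: its cycluit wires the two polarities of a variable to \emph{different} gates (the OR over $\neigh(w)$ versus the gate at~$w$), so it computes exactly your monotone system, and the backward direction of Lemma~\ref{lem:goodprov} reads off a ``valuation of the variables of~$\phi$'' from these per-polarity gates, which need not be consistent. Worse, the strict semantics cannot be captured in linear time at all: if every variable of $\phi = \Delta(q,\lambda(w))$ has a run at a strict neighbor of~$w$ but none at~$w$, then a run from $(q,w)$ exists iff $\phi$ is satisfiable, so a linear-time evaluator faithful to the literal definition would yield a linear-time algorithm for propositional satisfiability. The defensible repairs are either to adopt the independent-literal (monotone) reading as the definition of runs, or to restrict to SATWAs in which no state occurs both positively and negatively in a single transition --- which holds for every automaton the paper actually builds --- but neither of your two claimed justifications works as stated.
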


\begin{proof}
  This proof depends on notions and
  results that are given in the rest of the paper, hence can be skipped at first reading of Section~\ref{sec:compilation}.

  We use Theorem~\ref{thm:satwaprov} to compute a provenance cycluit~$C$ of the
  SATWA (modified to be a $\overline{\Gamma}$-SATWA by simply ignoring the
  second component of the alphabet) in time $O(\card{T}\cdot\card{A})$.
  Then we conclude by evaluating the
  resulting provenance cycluit (for an arbitrary valuation of that
  circuit) in time $O(\card{C})$ using
  Proposition~\ref{prp:cycluitlinear}.

  Note that, intuitively, the fixpoint evaluation of the cycluit can be understood as a
  least fixpoint computation to determine which pairs of states and tree nodes
  (of which there are $O(\card{T} \cdot \card{A})$) are reachable.
\end{proof}

We now give our main translation result: we can efficiently translate any CFG-Datalog program of
bounded body size into a stratified alternating two-way automaton that \emph{tests} it (in the same sense as for
bNTAs). 
For pedagogical purposes, we present the translation for a subclass of CFG-Datalog, namely, \emph{CFG-Datalog with guarded negations} (CFG$^{\text{GN}}$-Datalog), in which
invocations of negative intensional predicates are guarded in rule bodies:

\begin{definition}
  \label{def:CFG-GN}
  Let $P$ be a stratified Datalog program.
  A negative intensional literal $\lnot A(\mathbf{x})$ in a 
  rule body~$\psi$ of~$P$
  is \emph{clique-guarded} if, for any two variables
  $x_i \neq x_j$ of~$\mathbf{x}$,
  it is the case that $x_i$ and $x_j$
  co-occur in some positive atom of~$\psi$.
  A CFG$^\text{GN}$-Datalog program is a CFG-Datalog program such that 
  for any rule
  $R(\mathbf{x}) \leftarrow \psi(\mathbf{x}, \mathbf{y})$, every
  negative intensional literal in $\psi$ is clique-guarded in
  $\psi$.
\end{definition}

We will then prove in Section~\ref{sec:proof} the following translation result, and explain at the end of Section~\ref{sec:proof} how it can be extended to full CFG-Datalog:
\begin{theorem}
\label{thm:maintheorem}
	Given a CFG$^\text{GN}$-Datalog program~$P$ of body size
  $\kp$ and $\ki \in \NN$,
  we can build in FPT-linear time in~$|P|$
  (parameterized by~$\kp, \ki$)
  a SATWA $A_P$ testing~$P$ on instances of treewidth~$\leq \ki$.
\end{theorem}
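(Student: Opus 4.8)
The plan is to build, for each CFG$^{\text{GN}}$-Datalog program $P$, a SATWA $A_P$ that runs on $(\sigma,\ki)$-tree encodings and, started at the root, searches for a derivation of $\goal()$ by navigating the encoding. The conceptual foundation is a guardedness lemma: for any tuple $\mathbf{a}$ that ever enters an intensional relation in $P(\decode{\la E,\lambda\ra})$, the elements of $\mathbf{a}$ form a clique of the primal graph of the instance. This follows by induction on the derivation using clique-frontier-guardedness of the rule deriving $\mathbf{a}$: each pair of head variables co-occurs in a positive body atom, which is either an extensional fact (making the pair adjacent) or a positive intensional atom whose tuple is already a clique by induction. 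By the tree-decomposition property, any clique of the instance sits in a common bag; hence every guarded tuple ``lives'' at a single node of the encoding, where the automaton can inspect it. By the CFG$^{\text{GN}}$ restriction, the same holds for the tuples of negative intensional literals, which is exactly what lets the automaton check them at one location.

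First I would fix the state space. A state of $A_P$ records a bounded amount of control: a pointer into the body of some rule (which atom is being verified, which are discharged) together with a partial assignment from the currently \emph{active} variables of that rule to local names in $\calD_\ki$. Since the body size is $\kp$, each rule has $O(\kp)$ variables and atoms, so the number of (atom-subset, partial-assignment) pairs is bounded by a function of $\kp$ and $\ki$ alone; multiplying by the number of rules gives $O(f(\kp,\ki)\cdot\card{P})$ states. The stratification function $\strat$ of $A_P$ is inherited from that of $P$, placing a state that invokes an intensional predicate $R$ into stratum $\strat(R)$, so that negative transitions descend strictly, matching condition~(iii) of the program stratification. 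The transitions then implement four moves: \emph{(1) invocation}, where from a node whose bag holds a guarded tuple $\mathbf{a}$ bound to the head of an intensional atom $R(\mathbf{a})$, the transition formula disjoins over all rules with head $R$; \emph{(2) positive-atom verification} by navigate-and-carry (described below); \emph{(3) negation}, where for a clique-guarded $\lnot A(\mathbf{z})$ the automaton reaches a node whose bag contains the co-located tuple $\mathbf{z}$ and emits a \emph{negative} literal on a state proving $A(\mathbf{z})$, legal precisely because $A$ lies in a strictly lower stratum; and \emph{(4) acceptance}, where a fully discharged body certifies the head fact.

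The crux, and the step I expect to be the main obstacle, is the navigate-and-carry mechanism together with the proof that it keeps the assignments of variables shared between distinct atoms consistent, even though the isotropic automaton verifies different atoms in independent sub-runs that may wander far apart. To discharge a positive atom whose variables are partly bound, the automaton moves through $\neigh(w)$, translating local names across the bag overlap of adjacent nodes (reusing a name $a_i\in\calD_\ki$ in adjacent nodes denotes the same decoded element), thereby carrying the bound variables toward a node exhibiting the required fact, and binding any new existential variable by reading its value off that fact. The resolution of the consistency problem is to maintain, as an invariant of every reachable state, that the set of active variables is \emph{guarded} (its image is a clique). A guarded set sits in a single bag, and the set of bags containing it is a nonempty connected subtree (an intersection of the per-element connected subtrees); hence the automaton can always move within this region from the node where the shared variables were fixed to any node carrying the next atom, never losing a bound value, and two sub-runs for atoms sharing a guarded subset both start where that subset was bound and are thus forced to agree on it. Establishing the invariant requires processing each body along an order that always exposes a guarded frontier, which is possible because the body has bounded size with guarded head and guarded negative literals, so it can be traversed as a sequence of guarded steps, branching by alternation at variables shared among several atoms.

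Finally I would prove correctness by a two-way induction and then read off the complexity. For soundness, from any accepting run one extracts, stratum by stratum along the run tree, a derivation tree witnessing $\decode{\la E,\lambda\ra}\models P$, the stratified structure ensuring that negated states correspond to genuinely underivable lower-stratum facts. For completeness, any derivation of $\goal()$ is turned into a run by placing each derived guarded fact at a bag containing its tuple (which exists by the guardedness lemma) and linking consecutive steps by the navigate-and-carry moves justified above. The FPT-linear time bound follows from the state count together with the fact that each transition formula is computed from a single rule of $P$ in time depending only on $\kp$ and $\ki$.
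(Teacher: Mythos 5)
Your overall architecture matches the paper's: states indexed by a rule, a subset of its body literals, and a partial valuation into $\calD_{\ki}$; navigation that keeps bound values inside their occurrence subtrees; negated atoms checked at a single bag (legal by clique-guardedness) via negative literals into a strictly lower stratum; and correctness by a two-way induction on strata. But the step you yourself single out as the crux has a genuine gap. Your consistency mechanism rests on the invariant that the set of currently bound (``active'') variables always has a clique image, maintained by ``traversing the body as a sequence of guarded steps.'' This invariant is unachievable for general CFG$^{\text{GN}}$ bodies: only the head variables and the variables of negated atoms are clique-guarded, while the positive part of a body can be an arbitrary CQ. Take the rule $\goal \leftarrow R(x,y) \land S(y,z) \land T(z,w) \land U(w,x)$ (a four-cycle; a legal rule of body size $8$, since the head is $0$-ary the CFG condition is vacuous) on an instance where the images $\mu(x)$ and $\mu(z)$ co-occur in no fact. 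Any way of distributing the four atoms among independent sub-runs either uses a non-clique separator ($\{x,z\}$ or $\{y,w\}$ for the balanced splits), or, if you peel off one atom at a time ($\{R\}$, then $\{S\}$, then $\{T\}$ versus $\{U\}$), forces the sub-run responsible for $\{T,U\}$ to carry the two previously fixed values $\mu(x)$ and $\mu(z)$ simultaneously --- a non-clique set. So some reachable state must hold a bound set that is not guarded; your ``connected region'' argument for that set breaks down, and with it both the navigation argument and the claim that sibling sub-runs are forced to agree on shared variables.

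The paper's construction resolves this differently, and the difference is essential. It never requires the bound set to be a clique of the instance: it only enforces, via transitions to $\false$, that every value bound in the current state remain present in the currently visited bag (the ``allowed subtree'' discipline), which is what makes decoding consistent across sub-runs. Completeness is then proved not from cliqueness but from connectivity of occurrence subtrees: when the witnesses of two groups of atoms lie in different directions from the current node, any variable shared by the two groups has its witnessing element occurring in bags in both directions, hence --- its occurrence set being a connected subtree --- in the current bag itself, so it can be valued exactly at the split node; and a value carried toward its remaining witnesses stays present in every bag along the path, by the same argument. This is what you need in place of the ``guarded frontier'' traversal, and it works for non-chordal bodies. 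A second, smaller gap concerns FPT-linearity: your claim that ``each transition formula is computed from a single rule of $P$'' fails for invocation transitions, which disjoin over all rules with head $R$. With invocation states indexed by atom occurrences (as in your state space), this yields a quadratic-size automaton in the worst case (one relation defined by $n$ rules and invoked in $n$ bodies). The paper handles this by collapsing invocation states to $q_{R(\mathbf{a})}$ with $\mathbf{a}\in\calD_{\ki}^{\arity{R}}$ --- constantly many per relation --- so that each rule appears in the transitions of only constantly many states; without this (or an equivalent trick) you obtain FPT-quadratic, not FPT-linear, construction time.
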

\begin{proof}[Proof sketch]
For every relational symbol $R$, we introduce states of the form 
$q_{R(\mathbf{x})}^{\nu}$, where $\nu$ is a partial valuation of $\mathbf{x}$.
  The semantics is that we can start a run at state $q_{R(\mathbf{x})}^{\nu}$
  iff
        we can navigate the tree encoding
        to build a total valuation $\nu'$ that extends $\nu$ and
	such that $R(\nu'(\mathbf{x}))$ holds. Once we have built~$\nu'$, if~$R$
        is an extensional relation, we just check that $R(\nu'(\mathbf{x}))$ appears in the
        tree encoding. If $R$ 
        is intensional, we use the clique-guardedness condition to argue that
        the elements of $\nu'(\mathbf{x})$ can be found together in a bag. We
        then choose 
        a rule~$r$ with head relation~$R$, instantiate its head variables
        according to~$\nu'$, and inductively check all literals of the body
        of~$r$.
	The fact that the automaton is isotropic relieves us from the syntactic
        burden of dealing with directions in the tree, as one usually has to do with alternating two-way automata.
\end{proof}

\section{Provenance Cycluits}
\label{sec:provenance}
In the previous section, we have seen how CFG-Datalog programs could be translated
efficiently to tree automata that test them on treelike instances. To
show that SATWAs can be evaluated in linear time
(stated earlier as Proposition~\ref{prp:satwaeval}), we will
introduce an operational semantics for SATWAs based on the notion of
\emph{cyclic circuits}, or \emph{cycluits} for short.

We will also use
these cycluits as a new powerful tool to 
compute (Boolean) \emph{provenance
information}, i.e., a representation of how the query result depends on the
input data:

\begin{definition}
  \label{def:provenance}
  A (Boolean) \emph{valuation} of a set $S$ is a function $\nu: S \to \{0, 1\}$.
  A \emph{Boolean function} $\phi$ on variables~$S$ is a mapping that associates
  to each valuation $\nu$ of~$S$ a Boolean value in $\{0, 1\}$ called the
  \emph{evaluation} of~$\phi$ according to~$\nu$; for consistency with further
  notation, we write it $\nu(\phi)$.
  The \emph{provenance} of a query $Q$ on an
  instance~$I$ is the Boolean function $\phi$ whose variables are the facts
  of~$I$, which is defined as follows:
  for any valuation $\nu$ of the facts of $I$, we have $\nu(\phi)
  = 1$ iff the subinstance $\{F \in I \mid \nu(F) = 1\}$ satisfies~$Q$.
\end{definition}

We can represent Boolean provenance as
Boolean formulas
\cite{ImielinskiL84,green2007provenance}, or (more recently) as
Boolean circuits \cite{deutch2014circuits,amarilli2015provenance}.
In this section, we first introduce \emph{monotone cycluits} 
(monotone Boolean circuits with cycles), for which we define a semantics
(in terms of the
Boolean function that they express);
we also show that cycluits can be
evaluated in linear time, given a valuation. 
Second, we extend them to \emph{stratified cycluits}, allowing a form of stratified
negation.
We conclude the section by showing how to construct the \emph{provenance}
of a SATWA as a cycluit, in FPT-bilinear time. Together with
Theorem~\ref{thm:maintheorem}, this claim implies our main provenance result:

\begin{theorem}
  \label{thm:mainprov}
  Given a CFG-Datalog program~$P$ of body size
  $\kp$ and a relational instance~$I$ of treewidth $\ki$,
  we can construct in FPT-bilinear time in~$|I|\cdot|P|$
  (parameterized by~$\kp$ and~$\ki$)
  a representation of the provenance
  of $P$ on $I$ as a stratified cycluit. 
\end{theorem}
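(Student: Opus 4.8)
The plan is to derive Theorem~\ref{thm:mainprov} by composing the translation of CFG-Datalog into automata with the forthcoming construction of SATWA provenance as a cycluit. First, given the CFG-Datalog program $P$, I would invoke Theorem~\ref{thm:maintheorem} to build, in FPT-linear time in $|P|$ (parameterized by $\kp$ and $\ki$), a SATWA $A_P$ that tests $P$ on instances of treewidth $\leq \ki$; for a program that is only CFG-Datalog and not CFG$^{\text{GN}}$-Datalog, I would instead appeal to the extension of Theorem~\ref{thm:maintheorem} to full CFG-Datalog announced at the end of Section~\ref{sec:proof}. In particular $|A_P| = O(g(\kp,\ki)\cdot|P|)$ for some computable $g$.

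Second, I would use Lemma~\ref{lem:getencoding} to compute, in FPT-linear time in $|I|$ (parameterized by $\ki$), a tree encoding $\la E,\lambda\ra$ of $I$, whose size is $O(f(\ki)\cdot|I|)$ and which, being produced by the construction of that lemma, codes each fact of $I$ at exactly one node of $E$; this yields a bijection between the facts of $I$ and the fact-carrying nodes of $\la E,\lambda\ra$. Third, I would apply the SATWA-provenance result (Theorem~\ref{thm:satwaprov}) to $A_P$ and $\la E,\lambda\ra$ to obtain, in FPT-bilinear time in $|E|\cdot|A_P|$, a stratified cycluit $C$ whose provenance, as a Boolean function over the fact-carrying nodes of $E$, captures exactly the set of sub-encodings of $\la E,\lambda\ra$ that $A_P$ accepts. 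Transporting the input gates of $C$ along the above bijection turns $C$ into a stratified cycluit over the facts of $I$. Multiplying the two FPT-linear bounds, the whole construction runs in time $O(h(\kp,\ki)\cdot|I|\cdot|P|)$, i.e.\ FPT-bilinear, as required.

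The crux is then the correctness argument identifying the Boolean function computed by $C$ with the provenance of $P$ on $I$ in the sense of Definition~\ref{def:provenance}. Fix any valuation $\nu$ of the facts of $I$ and let $I_\nu \defeq \{F \in I \mid \nu(F)=1\}$. Since $I_\nu$ is a subinstance of $I$ and treewidth is monotone under taking subinstances, $I_\nu$ also has treewidth $\leq \ki$; moreover, switching off in $\la E,\lambda\ra$ the fact at every node whose fact is not selected by $\nu$ produces a valid tree encoding of $I_\nu$ (one simply replaces the relevant $s_n$ by the empty fact, which preserves validity of the encoding and decodes to $I_\nu$). Because $A_P$ tests $P$ on all instances of treewidth $\leq \ki$, it accepts this sub-encoding iff $I_\nu \models P$; by the semantics of the provenance cycluit, this is in turn equivalent to $\nu(C)=1$. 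Hence $\nu(C)=1$ iff $I_\nu\models P$ for every $\nu$, which is exactly the provenance of $P$ on $I$.

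The main obstacle I anticipate is this last correctness step, and within it the careful alignment of provenance variables: one must ensure that the encoding of Lemma~\ref{lem:getencoding} admits a bijection between facts and fact-nodes (so that a single provenance variable controls each fact), and that ``switching off'' a node's fact in the encoding corresponds precisely to deleting that fact from the decoded instance. Everything else is a routine composition of the two FPT-linear bounds, together with the monotonicity of treewidth under subinstances, which guarantees that $A_P$, built once for treewidth $\leq \ki$, remains correct on every subinstance $I_\nu$ arising from a valuation.
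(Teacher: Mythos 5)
Your overall route is the same as the paper's: translate $P$ to a SATWA via Theorem~\ref{thm:maintheorem}, encode $I$ via Lemma~\ref{lem:getencoding}, build a provenance cycluit via Theorem~\ref{thm:satwaprov}, re-index its inputs by the facts of~$I$, and verify correctness through the chain ``valuation of facts $\leftrightarrow$ modified encoding $\leftrightarrow$ acceptance $\leftrightarrow$ satisfaction''. However, there is a genuine gap in your third step: Theorem~\ref{thm:satwaprov} applies to a $\overline{\Gamma}$-SATWA running on a $\Gamma$-tree, where $\overline{\Gamma} = \Gamma \times \{0,1\}$, and indeed the very notion of SATWA provenance (Definition~\ref{def:satwaprov}) is only defined for automata over the \emph{annotated} alphabet, since a valuation $\nu$ of the tree yields a $\overline{\Gamma}$-tree $\nu(\calT)$ on which the automaton is run. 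The automaton $A_P$ produced by Theorem~\ref{thm:maintheorem} is a $\Gamma^{\ki}_\sigma$-SATWA: it runs on plain tree encodings and cannot read the $\{0,1\}$ annotations, so ``the provenance of $A_P$ on $\la E,\lambda\ra$'' is not defined, and Theorem~\ref{thm:satwaprov} cannot be invoked on it as stated. What you attribute to that theorem --- that the cycluit captures which \emph{sub-encodings} (facts switched off) are accepted by $A_P$ --- is not what it says; it is exactly what you obtain after composing it with the paper's lifting lemma (Lemma~\ref{lem:lifting}), which transforms $A_P$ in linear time into a $\overline{\Gamma^{\ki}_\sigma}$-SATWA $A'$ such that $A'$ accepts an annotated tree $E$ iff $A_P$ accepts $\epsilon(E)$, where $\epsilon$ erases the facts of $0$-annotated nodes. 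This is the missing step; once it is inserted, your correctness argument (switching off unselected facts yields a valid encoding of $I_\nu$, which $A_P$ tests correctly) goes through exactly as in the paper.

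A second, smaller omission: transporting the input gates ``along the bijection'' only accounts for the fact-carrying nodes. The cycluit produced by Theorem~\ref{thm:satwaprov} has one input gate per node of $E$, including nodes that carry no fact; these must be eliminated --- the paper replaces each such gate by a $0$-gate, i.e., an OR-gate with no inputs --- since otherwise the result is not a Boolean function over the facts of $I$ alone, and one must still argue its value does not depend on these spurious inputs. With these two repairs, your proposal coincides with the paper's proof.
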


Of course, this result implies the analogous claims for query languages that are
captured by CFG-Datalog parameterized by the body size, as we studied in
Section~\ref{sec:CFG}. When combined with the fact that cycluits can
be tractably evaluated, it yields our main result,
Theorem~\ref{thm:main}. The rest of this section formally introduces
cycluits and proves Theorem~\ref{thm:mainprov}.

\subparagraph*{Cycluits.}
We coin the term \emph{cycluits} for Boolean circuits without the
acyclicity requirement. This is the same kind of objects studied 
in~\cite{riedel2012cyclic}. To avoid the problem of feedback loops, however,
we first study \emph{monotone cycluits}, and then cycluits with stratified
negation.

\begin{definition}
  \label{def:cycluits}
    A \emph{monotone Boolean cycluit} is a directed graph $C = (G,W,g_0,\mu)$ 
    where $G$ is the set of \emph{gates},
    $W\subseteq G^2$ is the set of directed edges called \emph{wires} (and written $g
    \rightarrow g'$),
    $g_0 \in G$ is the \emph{output gate}, and
    $\mu$ is the \emph{type} function mapping each gate
    $g \in G$ to one of $\inp$ (input gate, with no incoming wire in $W$), $\land$
  (AND gate) or $\lor$ (OR gate).
\end{definition}

We now define the semantics of monotone cycluits. 
A (Boolean) \emph{valuation} of~$C$ is a function $\nu : C_{\inp} \to \{0,1\}$ 
indicating the value of the input gates.
As for standard monotone circuits, a valuation yields 
an \emph{evaluation} $\nu' : C \to \{0,1\}$, that we will define shortly,
indicating the value of each gate under the valuation~$\nu$: we abuse notation
and
write
$\nu(C) \in \{0, 1\}$ for the \emph{evaluation result}, i.e.,
$\nu'(g_0)$ where 
$g_0$ is the
output gate of~$C$. The Boolean function \emph{captured} by
a cycluit $C$ is thus 
the Boolean
function $\phi$ on~$C_\inp$ defined by $\nu(\phi) \colonequals \nu(C)$ for each
valuation
$\nu$ of~$C_\inp$. 
We define the evaluation~$\nu'$ from~$\nu$ by a least fixed-point computation: we
set all input gates to their value by~$\nu$, and other gates to $0$.
We then iterate until the evaluation no longer changes, by evaluating OR-gates
to~$1$ whenever some input evaluates to~$1$, and AND-gates to~$1$
whenever all their inputs evaluate to~$1$.
Formally, the semantics of monotone cycluits is defined by
Algorithm~\ref{alg:semantics-monotone}.
\begin{algorithm}
\DontPrintSemicolon
\KwIn{Monotone cycluit $C=(G,W,g_0,\mu)$, valuation $\nu: C_{\inp} \to
  \{0,1\}$}
        \KwOut{$\{g \in C \mid \nu'(g)=1\}$}
	$S_0 \defeq \{ g \in C_\inp \mid \nu(g)=1 \}$\;
	$i \defeq 0$\;
	\Do{$S_i \neq S_{i-1}$}{
		$i$++\;
                $S_{i} \defeq S_{i-1} \cup\Big\{g \in C \mid (\mu(g) =
        {\lor}), \exists g^\prime \in S_{i-1}, g^\prime \rightarrow
      g\in W\Big\} \cup{}$\\$\quad \Big\{g \in C \mid (\mu(g) =
        {\land}), \{g^\prime \mid g^\prime \rightarrow g\in W\} \subseteq
    S_{i-1}\Big\}$}
	\Return $S_i$\;
	\caption{Semantics of monotone cycluits}
	\label{alg:semantics-monotone}
\end{algorithm}

The Knaster--Tarski
theorem~\cite{tarski1955lattice} gives an equivalent characterization:

\begin{proposition}\label{prp:altersem}
  For any monotone cycluit $C$ and Boolean valuation $\nu$ of $C$,
	letting $\nu'$ be the evaluation (as defined by Algorithm~\ref{alg:semantics-monotone}),
  the set $S \defeq \{g \in C \mid \nu'(g) = 1\}$ is \emph{the} minimal set of
  gates (under inclusion) such that:
	\begin{compactenum}[(i)]
    \item $S$ contains the true input gates, i.e., it contains $\{g \in C_\inp \mid
  \nu(g) = 1\}$;
\item for any $g$ such that $\mu(g) = \lor$, if some input gate of $g$ is
  in $S$, then $g$ is in~$S$;
\item for any $g$ such that $\mu(g) = \land$, if all input gates of $g$ are
  in $S$, then $g$ is in~$S$.
\end{compactenum}
\end{proposition}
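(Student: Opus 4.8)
The plan is to recognize conditions (i)--(iii) as the single statement that $S$ is closed under the immediate-consequence operator implicit in Algorithm~\ref{alg:semantics-monotone}, and then to run the standard least-fixpoint argument. Concretely, I would define a monotone operator $F : 2^G \to 2^G$ on the powerset lattice by letting $F(X)$ be the union of the true input gates $\{g \in C_\inp \mid \nu(g) = 1\}$, the $\lor$-gates having at least one input in $X$, and the $\land$-gates all of whose inputs lie in $X$. A set $X$ satisfies (i)--(iii) exactly when $F(X) \subseteq X$, i.e., when $X$ is a pre-fixpoint of $F$; and $F$ is clearly monotone, so the Knaster--Tarski theorem guarantees a least pre-fixpoint, which coincides with the least fixpoint and with the limit of the ascending chain $\emptyset \subseteq F(\emptyset) \subseteq F^2(\emptyset) \subseteq \cdots$. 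Since $G$ is finite this chain stabilizes, and it is essentially the chain $S_0 \subseteq S_1 \subseteq \cdots$ produced by the algorithm. This already shows that the algorithm terminates and that $S$ is well defined.

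First I would verify that $S$ itself satisfies (i)--(iii). Condition (i) is immediate because $S \supseteq S_0 = \{g \in C_\inp \mid \nu(g) = 1\}$. For (ii) and (iii), I use the loop's stopping criterion: at termination we have $S_i = S_{i-1}$, which means the last application of the update rule added no new gate. Hence every $\lor$-gate with an input in $S$ is already in $S$, and every $\land$-gate with all inputs in $S$ is already in $S$, which is precisely (ii) and (iii).

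Next I would prove minimality: for every set $T$ satisfying (i)--(iii), we have $S \subseteq T$. I would show $S_i \subseteq T$ by induction on $i$. The base case $S_0 \subseteq T$ holds by (i). For the inductive step, assume $S_{i-1} \subseteq T$ and take any gate $g$ added at step $i$. Then either $\mu(g) = \lor$ and some input $g'$ of $g$ lies in $S_{i-1} \subseteq T$, so $g \in T$ by (ii); or $\mu(g) = \land$ and all inputs of $g$ lie in $S_{i-1} \subseteq T$, so $g \in T$ by (iii). Thus $S_i \subseteq T$, and taking the union over $i$ gives $S \subseteq T$. Combined with the previous paragraph, $S$ is the minimum set satisfying (i)--(iii).

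The argument is essentially routine, and in fact the direct induction above already handles $\land$-gates with no incoming wires (empty conjunctions) automatically, since for such a gate condition (iii) applies vacuously and places it in every $T$. The only point requiring a moment's care is the alignment between the operator $F$ and the algorithm in the Knaster--Tarski framing: $F(\emptyset)$ fires these no-input $\land$-gates immediately, whereas the algorithm seeds $S_0$ with the true inputs only. This is harmless, because both ascending chains reach the same limit; I would either note this explicitly or simply bypass the operator formulation and rely on the self-contained induction, for which finiteness of $G$ alone guarantees termination and no separate monotonicity bookkeeping is needed.
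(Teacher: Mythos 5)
Your proof is correct and follows essentially the same route as the paper: the paper's entire proof is a one-line appeal to the monotonicity of the algorithm's operator plus the Knaster--Tarski theorem, identifying the algorithm's output with the intersection of all sets of gates satisfying (i)--(iii). Your version supplies the details the paper leaves implicit — the explicit correspondence between conditions (i)--(iii) and pre-fixpoints of the operator $F$, the two-part induction establishing both closure (via the loop's stopping criterion) and minimality (by induction on the iterations $S_i$), and the correctly resolved mismatch between the algorithm's seeding of $S_0$ with the true inputs and the Kleene chain $F^n(\emptyset)$ — so it is, if anything, more complete than the paper's argument.
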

\begin{proof}
  The operator used in Algorithm~\ref{alg:semantics-monotone} is clearly
  monotone, so by the Knaster--Tarski theorem, the outcome of the
  computation is the intersection of all sets of gates satisfying the
  conditions in Proposition~\ref{prp:altersem}.
\end{proof}

Algorithm~\ref{alg:semantics-monotone} is a naive fixpoint algorithm running in quadratic time, but we show that the same output can be computed in linear time with Algorithm~\ref{alg:linear-time-cycluits}.
\begin{algorithm}
\DontPrintSemicolon
\KwIn{Monotone cycluit $C=(G,W,g_0,\mu)$, valuation $\nu: C_{\inp} \to \{0,1\}$}
        \KwOut{$\{g \in C \mid \nu'(g)=1\}$}
        ~\tcc{Precompute the in-degree of $\land$ gates}
        \For{$g \in C$ s.t.\ $\mu(g) = \land$}{
          $M[g] \defeq \card{\{g' \in C \mid g' \rightarrow g\}}$\;
        }
        $Q\defeq\{g \in C_\inp \mid \nu(g)=1\} \cup \{g \in C \mid (\mu(C) =
        \land) \land M[g] = 0\}$ \tcc*{as a stack} $S\defeq \emptyset$ \tcc*{as a bit array}
	\While{$Q \neq \emptyset$}{
          pop $g$ from $Q$\;
                        \If{$g \notin S$}{
                        add $g$ to~$S$\;
                        \For{$g^\prime \in C \mid g \to g^\prime$}{
                                \If{$\mu(g^\prime)=\lor$}{
                                        push $g^\prime$ into $Q$\;
                                }
                                \If{$\mu(g^\prime)=\land$}{
                                        $M[g^\prime]\defeq M[g^\prime]-1$\;
                                        \If{$M[g^\prime]=0$}{
                                        push $g^\prime$ into $Q$\;
                                        }
                                }
                        }
                      }
                        
        }
	\Return $S$
        \caption{Linear-time evaluation of monotone cycluits}
	\label{alg:linear-time-cycluits}
\end{algorithm}

\begin{proposition}\label{prp:moncycluitlinear}
  Given any monotone cycluit $C$ and Boolean valuation $\nu\!$ of~$C$, we can compute the
  evaluation $\nu'\!$ of~$C$ in linear time.
\end{proposition}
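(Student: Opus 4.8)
The plan is to prove that Algorithm~\ref{alg:linear-time-cycluits} computes the evaluation $\nu'$ correctly and in linear time, using the fixpoint characterization of Proposition~\ref{prp:altersem}. Write $S$ for the set returned by the algorithm, and $S^* \defeq \{g \in C \mid \nu'(g) = 1\}$ for the target set, which by Proposition~\ref{prp:altersem} is the minimal set (under inclusion) satisfying conditions (i)--(iii). I would establish $S = S^*$ by proving the two inclusions separately, and then bound the running time by an amortized argument.

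For \emph{soundness} ($S \subseteq S^*$), I would argue by induction on the order in which gates are added to $S$ (i.e., popped from $Q$ while not already in $S$) that every such gate lies in $S^*$. A gate enters $Q$ in one of three ways: as a true input, hence in $S^*$ by~(i); as an $\lor$-gate pushed when some predecessor was just added to $S$, in which case that predecessor is in $S^*$ by induction, so the gate is in $S^*$ by~(ii); or as an $\land$-gate whose counter $M$ reached~$0$. For the last case I would rely on the invariant that $M[g]$ always equals the number of predecessors of $g$ not yet added to~$S$: since the guard ``\emph{if} $g\notin S$'' ensures each gate is finalized into $S$ at most once, and since $W$ is a set of edges (no parallel wires), each predecessor decrements $M[g]$ exactly once, precisely when it enters $S$. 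Thus $M[g]=0$ means all predecessors already lie in $S$, hence in $S^*$ by induction, so $g\in S^*$ by~(iii); the $\land$-gates of in-degree $0$ pushed at initialization are in $S^*$ by the vacuous case of~(iii).

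For \emph{completeness} ($S^* \subseteq S$), I would instead verify that the returned set $S$ itself satisfies conditions (i)--(iii), so that minimality of $S^*$ forces $S^* \subseteq S$. Condition~(i) holds because all true inputs are pushed at initialization. Condition~(ii) holds because, whenever a gate is added to $S$, all its $\lor$-successors are pushed into $Q$ and hence eventually added to $S$. Condition~(iii) holds because, by the same counter invariant, an $\land$-gate all of whose predecessors eventually lie in $S$ has its counter decremented to~$0$, triggering a push and hence its addition to $S$. Combining the two inclusions yields $S = S^*$, establishing correctness.

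For the running-time bound, I would observe that precomputing $M$ costs $O(\card{G}+\card{W})$; that the guard ensures the body scanning outgoing wires runs at most once per gate, for a total of $O(\card{W})$ wire traversals; and that each traversal causes at most one push, so the total number of pushes, and hence pops, is $O(\card{G}+\card{W})$, each handled in $O(1)$ using $S$ as a bit array and $M$ as an array. This gives overall time $O(\card{G}+\card{W}) = O(\card{C})$, as desired. The step I would be most careful about is the counter invariant for $\land$-gates: its validity depends precisely on $W$ having no multi-edges and on each gate being finalized into $S$ exactly once, which must be checked against the exact guards of the algorithm; this is the only genuine subtlety, as the rest is a routine worklist propagation.
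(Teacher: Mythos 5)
Your proposal is correct and follows essentially the same route as the paper: both prove correctness of Algorithm~\ref{alg:linear-time-cycluits} via the fixpoint characterization of Proposition~\ref{prp:altersem}, establishing one inclusion by checking that the returned set satisfies conditions (i)--(iii) and the other by reasoning over the order in which gates are added (your induction is just the paper's minimal-counterexample argument rephrased), with the same counter invariant for $\land$-gates and the same linear-time accounting. The only difference is presentational: you make the counter invariant and the amortized wire-traversal count explicit, which the paper leaves implicit.
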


\begin{proof}
	We use Algorithm~\ref{alg:linear-time-cycluits}.
  We first prove the claim about the running time. The preprocessing to compute
  $M$ is in linear-time in~$C$ (we enumerate at most once every wire), and the rest of the algorithm is clearly
  in linear time as it is a variant of a DFS traversal of the graph, with the added
  refinement that we only visit nodes that evaluate to~$1$ (i.e., OR-gates
  with some input that evaluates to~$1$, and AND-gates where all inputs
  evaluate to~$1$).

  We now prove correctness. We use the characterization of
  Proposition~\ref{prp:altersem}. We first check that $S$ satisfies the
  properties:
  
	\begin{compactenum}[(i)]
    \item $S$ contains the true input gates by construction.
    \item Whenever an OR-gate $g'$ has an input gate $g$ in~$S$, then, when
      we added $g$ to~$S$, we have necessarily followed the wire $g \rightarrow
      g'$ and added $g'$ to~$Q$, and later added it to~$S$. 
    \item Whenever an AND-gate $g'$ has
  all its input gates $g$ in~$S$, there are two cases. The first case is when
      $g$ has no input gates at all, in which case $S$ contains it by
      construction. The second case is when $g'$ has input gates: in this
      case, observe that $M[g']$ was initially equal to the fan-in of~$g'$, and
      that we decrement it for each input gate $g$ of~$g'$ that we add to~$S$.
      Hence, considering the last input gate $g$ of~$g'$ that we add to~$S$, it
      must be the case that $M[g']$ reaches zero when we decrement it,
      and then we add $g'$ to~$Q$, and later to~$S$.
  \end{compactenum}
  
  Second, we check that $S$ is minimal. Assume by contradiction that it
  is not the case, and consider the first gate $g$ which is
  added to $S$ while not being in the minimal Boolean valuation $S'$.  It cannot
  be the case that $g$ was added when initializing $S$, as we initialize $S$ to
  contain true input gates and AND-gates with no inputs, which must be true
  also in~$S'$ by the characterization of Proposition~\ref{prp:altersem}. Hence,
  we added $g$ to~$S$ in a later step of the algorithm. However,
  we notice that we must have added $g$ to $S$ because of the value of its input
  gates. By minimality of~$g$, these input gates have the same value in~$S$ and
  in~$S'$. This yields a contradiction, because the gates that we add to~$S$ are
  added following the characterization of Proposition~\ref{prp:altersem}.
  This concludes the proof.
\end{proof}

Another way to evaluate cycluits in linear time is by a
rewriting of the circuit to a Horn formula, whose minimal model can be
computed in linear time~\cite{dowling1984linear} and corresponds to the cycluit
evaluation.

\subparagraph*{Stratified cycluits.}
We now move from monotone cycluits to general cycluits featuring negation.
However, allowing arbitrary negation would make it difficult to define a
proper
semantics, because of possible cycles of negations. Hence, we focus on
\emph{stratified cycluits}:

\begin{definition}
  A \emph{Boolean cycluit} $C$ is defined like a \emph{monotone cycluit},
  but further allows NOT-gates ($\mu(g) = \neg$), which are required to have a
  single input. It is 
  \emph{stratified} if there exists a surjective
  \emph{stratification function}~$\strat$ mapping its gates
  to $\{0,\ldots,m\}$ for some $m \in \NN$ such that $\strat(g) = 0$ iff $g
  \in C_\inp$, and
  $\strat(g)
  \leq \strat(g')$ for each wire $g \rightarrow g'$, the inequality being strict
  if $\mu(g') = \neg$.
\end{definition}

This notion of stratification is similar to that of stratification of
Datalog programs or that of
stratification of Horn formulas \cite{dantsin2001complexity}.

Equivalently, we can show that $C$ is stratified if and only if it contains no cycle of gates involving a $\neg$-gate. Moreover if $C$ is
stratified we can
compute a stratification function in linear time, from a topological sort of its strongly connected components:

\begin{definition}
A \emph{strongly connected component} (SCC) of a directed graph $G=(V,E)$ is a subset $S \subseteq V$ that is maximal by inclusion and which ensures that for any $x,y \in S$ with $x \neq y$, there is a directed path from $x$ to $y$ in $G$. Observe that the SCCs of $G$ are disjoint.
A \emph{topological sort} of the SCCs of $(G,W)$ is a linear ordering $(S_1,\ldots,S_k)$ of all the SCCs of $G$ such that for any $1 \leq i < j \leq k$ and $x \in S_i$ and $y \in S_j$, there is no directed path
from $y$ to $x$ in $G$.
\end{definition}

Such a topological sort always exists and can be computed in linear time from $G$~\cite{tarjan1972depth}. We can then show:

\begin{proposition}\label{prp:stratifun}
  Any Boolean cycluit $C$ is stratified iff it it contains no cycle of gates involving a $\neg$-gate.
Moreover, a stratification function can be computed in linear time from~$C$.
\end{proposition}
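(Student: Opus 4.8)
The plan is to prove both directions of the equivalence, folding the linear-time construction into the proof of the harder (backward) direction. For the \emph{forward} direction, suppose $C$ is stratified via some $\strat$ and, for contradiction, that there is a cycle $g_1 \to g_2 \to \cdots \to g_\ell \to g_1$ of gates passing through at least one $\neg$-gate. Along each wire the stratum is non-decreasing, and it \emph{strictly} increases whenever the target is a $\neg$-gate; chaining these inequalities around the cycle yields $\strat(g_1) < \strat(g_1)$ (strict because of the $\neg$-gate lying on the cycle), a contradiction. Hence a stratified cycluit has no cycle through a $\neg$-gate.

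For the \emph{backward} direction, assume $C$ has no cycle through a $\neg$-gate; I would build a stratification function in linear time. First I would compute the strongly connected components of the gate graph together with a topological sort $(S_1,\ldots,S_k)$ of them, which by \cite{tarjan1972depth} takes linear time. The key observation is that every $\neg$-gate lies in a \emph{trivial} SCC (a singleton with no self-loop): otherwise it would sit on a directed cycle, which is forbidden. Consequently no wire \emph{internal} to an SCC ever enters a $\neg$-gate, so requiring $\strat$ to be constant on each SCC never conflicts with the strictness condition.

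I would then pass to the condensation DAG, assigning to each of its edges $S \to S'$ a weight $w(S,S')$ equal to $1$ if $S'$ is a $\neg$-gate singleton and $0$ otherwise, and computing for every SCC $S$ the maximal weight $d(S)$ of a directed condensation path ending at $S$; as the condensation is acyclic this is a single longest-path pass over the topological order, in linear time. I set $\strat(g) \defeq 0$ if $g \in C_\inp$ and $\strat(g) \defeq d(S)+1$ otherwise, where $S$ is the SCC of $g$. The conditions are then routine to verify: inputs receive exactly stratum $0$ and all other gates a stratum $\geq 1$, so $\strat(g)=0$ iff $g \in C_\inp$; and for a wire $g \to g'$, either $g,g'$ share an SCC (so $\strat(g)=\strat(g')$, and $g'$ is then not a $\neg$-gate by the observation above), or a condensation edge $S \to S'$ gives $d(S') \geq d(S)+w(S,S')$, hence $\strat(g') \geq \strat(g)+w(S,S')$, which is exactly the required inequality, strict precisely when $g'$ is a $\neg$-gate; the subcase where $g$ is an input is immediate since $\strat(g)=0$ is minimal. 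Finally, to make $\strat$ \emph{surjective} onto a contiguous range $\{0,\ldots,m\}$ (assuming, as holds for the cycluits we build, that $C$ has at least one input gate, so that $0$ is genuinely attained), I would compose it with the unique increasing bijection from its set of attained values onto an initial segment of $\NN$; this relabeling fixes $0$ and preserves equalities as well as strict and non-strict inequalities, so all conditions survive, and it is computed in linear time.

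The main obstacle is the backward direction, specifically meeting three constraints at once---stratum $0$ exactly on inputs, non-strict increase along wires with \emph{strict} increase into $\neg$-gates, and surjectivity onto a contiguous range---while staying linear. Reducing to the condensation DAG, which is legitimate because $\neg$-gates occupy only trivial SCCs, is what makes the weighted longest-path definition well-posed and the strictness condition automatic, and the closing order-preserving relabeling cleanly isolates the surjectivity bookkeeping from the correctness of the inequalities.
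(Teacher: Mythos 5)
Your proof is correct and takes essentially the same route as the paper: the forward direction is the identical chaining-of-inequalities argument around a cycle, and the backward direction rests on the same two ingredients --- a linear-time SCC decomposition with a topological sort of the condensation, and the observation that $\neg$-gates cannot lie inside non-trivial SCCs, so no intra-SCC wire enters a $\neg$-gate. The only real difference is the final assignment: the paper simply takes $\strat(g)$ to be the index of the SCC of~$g$ in the topological sort, after merging all input-gate SCCs into a single block placed first, which makes surjectivity onto a contiguous range and the strictness condition at $\neg$-gates immediate (any wire into a $\neg$-gate crosses SCCs, hence strictly increases the index); your weighted longest-path computation over the condensation yields the same inequalities in the same linear time, at the cost of the extra order-preserving relabeling pass (and the explicit assumption that $C_\inp \neq \emptyset$) needed to restore surjectivity --- bookkeeping that the paper's index-based definition avoids.
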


\begin{proof}
  To see why a stratified Boolean cycluit $C$ cannot contain a cycle of gates involving a $\neg$-gate,
  assume by contradiction that it has such a cycle $g_1 \rightarrow g_2
  \rightarrow \cdots \rightarrow g_n \rightarrow g_1$. As $C$ is stratified, there exists a stratification function $\strat$.
  From the properties of a stratification function, we know that $\strat(g_1) \leq \strat(g_2) \leq \cdots \leq
  \strat(g_1)$, so that we must have $\strat(g_1) = \cdots = \strat(g_n)$.
  However, letting $g_i$ be such that $\mu(g_i) = \neg$, we know that
  $\strat(g_{i-1}) < \strat(g_i)$ (or, if $i = 1$, $\strat(g_n) < \strat(g_1)$),
  so we have a contradiction.

  We now prove the converse direction of the claim, i.e., that any 
  Boolean cycluit which does not contain a cycle of gates involving a $\neg$-gate must have a stratification function, 
  and show how to compute such a function in linear time.
  Compute in linear time the strongly connected components (SCCs) of~$C$, and a
  topological sort of the SCCs. As the input gates of~$C$ do not themselves have
  inputs, each of them must have their own SCC, and each such SCC must be a
  leaf, so we can modify the topological sort by merging these SCCs
  corresponding to input gates, and putting them first in the topological sort.
  We define the function
  $\strat$ to map each gate
  of~$C$ to the index number of its SCC in the topological sort, which ensures
  in particular that the input gates of~$C$ are exactly the gates assigned
  to~$0$ by~$\strat$.
  This can be performed in linear time. Let us show that the result $\strat$ is
  a stratification function:
  \begin{itemize}
    \item For any edge $g \rightarrow g'$, we have $\strat(g) \leq \strat(g')$.
      Indeed, either $g$ and $g'$ are in the same strongly connected component and we
      have $\strat(g) = \strat(g')$, or they are not and in this case the edge
      $g \rightarrow g'$ witnesses that the SCC of $g$ precedes that of~$g'$,
      whence, by definition of a topological sort, it follows that $\strat(g) <
      \strat(g')$.
    \item For any edge $g \rightarrow g'$ where $\mu(g') = \lnot$, we have $\strat(g) <
      \strat(g')$. Indeed, by adapting the reasoning of the previous bullet
      point, it suffices to show that $g$ and $g'$ cannot be in the same
      SCC. Indeed, assuming by contradiction that they are, by
      definition of a SCC, there must be a path from $g'$ to $g$, and combining
      this with the edge $g \rightarrow g'$ yields a cycle involving a
      $\lnot$-gate,
      contradicting our assumption on~$C$.\qedhere
  \end{itemize}
\end{proof}

We can then use any stratification function to define
the evaluation of~$C$ (which will be independent of the choice of stratification function):

\begin{definition}
  \label{def:strateval}
  Let $C$ be a stratified cycluit with stratification function $\strat: C
  \rightarrow \{0,\ldots,m\}$,
  and let $\nu$ be a Boolean valuation of~$C$. We inductively define the
  \emph{$i$-th stratum evaluation} $\nu_i$, for $i$ in the range of~$\strat$,
  by setting $\nu_0 \colonequals \nu$, and letting $\nu_{i}$ extend the $\nu_j$ ($j < i$) as
  follows:
  \begin{enumerate}
  \item For $g$ such that $\strat(g) = i$ with $\mu(g) = \neg$, set $\nu_i(g)
    \defeq \neg \nu_{\strat(g')}(g')$ for its one input~$g'$.
  \item Evaluate all other $g$ with $\strat(g) = i$ as for monotone cycluits,
    considering the $\neg$-gates of point 
    1.\ and all gates of 
    stratum $< i$ as input gates fixed to their value in~$\nu_{i-1}$.
  \end{enumerate}
  Letting $g_0$ be the output gate of~$C$,
  the Boolean function $\phi$ \emph{captured} by~$C$ is then defined
  as $\nu(\phi) \colonequals \nu_m(g_0)$
  for each valuation $\nu$ of $C_\inp$.
\end{definition}

\begin{proposition}\label{prp:cycluitlinear}
  We can compute $\nu(C)$ in linear time in the stratified cycluit $C$ and in~$\nu$.
  Moreover, the result is independent of the chosen stratification function.
\end{proposition}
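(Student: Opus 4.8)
The plan is to handle the two assertions separately, relying on the stratum-by-stratum evaluation of Definition~\ref{def:strateval} for the running time, and on a canonical SCC-based reformulation for the independence claim.

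For the \emph{linear-time} bound, I would first invoke Proposition~\ref{prp:stratifun} to compute a stratification function $\strat\colon C \to \{0,\dots,m\}$ in linear time, and then evaluate the strata $0,1,\dots,m$ in order exactly as in Definition~\ref{def:strateval}. At stratum $i$, each $\neg$-gate has its single input in a strictly lower stratum, whose value is already fixed, so it is evaluated in constant time. The remaining stratum-$i$ gates (the $\land$/$\lor$ gates) are then evaluated with the linear-time monotone procedure of Algorithm~\ref{alg:linear-time-cycluits} (Proposition~\ref{prp:moncycluitlinear}) applied to the monotone sub-cycluit induced by these gates, treating all lower-stratum gates and all stratum-$i$ $\neg$-gates as input gates fixed to their already-computed values. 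The key accounting observation is that every wire $g \to g'$ is inspected only while processing stratum $\strat(g')$ (either to set up the in-degree counters and OR-seeds of the monotone pass, or to read the input of a $\neg$-gate $g'$), and every gate is the ``internal'' gate of exactly one stratum; hence, summing over all strata, each wire and each gate is touched a constant number of times, giving total time $O(\card{C})$ (plus the time to read~$\nu$).

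For \emph{independence} of the stratification function, I would introduce a canonical evaluation making no reference to~$\strat$. Consider the SCCs of~$C$ together with a topological sort of them. The crucial structural fact, which follows from Proposition~\ref{prp:stratifun}, is that no SCC contains a $\neg$-gate together with its input: if a $\neg$-gate $g'$ lay in the same SCC as its input~$g$, then combining a path from $g'$ to~$g$ inside the SCC with the wire $g \to g'$ would produce a cycle through a $\neg$-gate, which is forbidden. Consequently, within any single SCC the dependencies are purely monotone. I then define the canonical value of each gate by processing the SCCs in topological order: a $\neg$-gate in an SCC reads its input, which by the previous fact lies in a strictly earlier SCC and is thus already determined, while the $\land$/$\lor$ gates of the SCC are assigned the unique minimal fixpoint given the (already fixed) values entering the SCC, which exists and is unique by Knaster--Tarski (Proposition~\ref{prp:altersem}). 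This yields a well-formed evaluation depending only on~$C$.

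It then remains to check that, for every stratification function~$\strat$, the evaluation $\nu_m$ of Definition~\ref{def:strateval} coincides with this canonical evaluation; I expect this matching to be the main obstacle. The argument is an induction on the stratum index. Any stratification assigns the same value to all gates of a common SCC (they lie on a cycle, along which strata are non-decreasing and hence constant) and is non-decreasing along wires, strictly so into $\neg$-gates; thus each SCC carries a single stratum, and within stratum~$i$ the stratum-$i$ SCCs form a sub-DAG all of whose other incoming wires originate in strictly lower strata. Assuming by induction that $\nu_{i-1}$ agrees with the canonical values on all gates of stratum ${<}\,i$, the stratum-$i$ computation of Definition~\ref{def:strateval} is a single monotone least-fixpoint over the stratum-$i$ $\land$/$\lor$ gates with those lower values (and the stratum-$i$ $\neg$-gates) fixed as inputs; by uniqueness of the monotone least fixpoint (Proposition~\ref{prp:altersem}), this equals the value obtained by processing the stratum-$i$ SCCs in topological order, i.e.\ the canonical value. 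Hence $\nu_m$ agrees with the canonical evaluation everywhere, and in particular $\nu(C) = \nu_m(g_0)$ is independent of~$\strat$. (This mirrors the independence of stratified Datalog from its stratification.)
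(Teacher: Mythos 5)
Your proposal is correct and follows essentially the same route as the paper: linear time via Proposition~\ref{prp:stratifun} plus the stratum-by-stratum evaluation of Definition~\ref{def:strateval} using the monotone algorithm, and independence via the observation that any stratification function is constant on SCCs, so the evaluation is determined by the SCC decomposition alone. If anything, your canonical SCC evaluation and the induction matching it against an arbitrary stratification make rigorous what the paper dispatches informally (``the order in which we evaluate the SCCs makes no difference''), so the argument is sound and slightly more detailed than the original.
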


\begin{proof}
	Compute in linear time a stratification function $\strat$ of $C$ 
	using Proposition~\ref{prp:stratifun}, and compute the evaluation following Definition~\ref{def:strateval}.
	This can be performed in linear time.
        To see why this evaluation is independent from the choice of
        stratification, observe that any stratification function must clearly assign the
        same value to all gates in an SCC. Hence,
        choosing a stratification function amounts to choosing
        the stratum that we assign to each SCC. Further,
        when an SCC $S$ precedes another SCC $S'$, the stratum of $S$ must be no
        higher than the stratum of~$S'$. So in fact the only freedom that we
        have is to choose a topological sort of the SCCs, and optionally to
        assign the same stratum to consecutive SCCs in the topological sort:
        this amounts to ``merging'' some SCCs, and is only possible when there
        are no $\lnot$-gates between them. Now, in the evaluation, it is clear
        that the order in which we evaluate the SCCs makes no difference, nor
        does it matter if some SCCs are evaluated simultaneously. Hence, the
        evaluation of a stratified cycluit is well-defined.
\end{proof}

\subparagraph*{Building provenance cycluits.}
Having defined cycluits as our provenance representation, we compute 
the provenance of a query on an instance as the 
\emph{provenance} of its SATWA on a tree encoding.
To do so, we must give a general definition of the provenance of SATWAs.
Consider a $\Gamma$-tree $\calT \defeq \la T, \lambda \ra$ for some
alphabet $\Gamma$, as in 
Section~\ref{sec:compilation}.
We define
a (Boolean) \emph{valuation}~$\nu$ of $\calT$ 
as a mapping from the nodes of~$T$ to $\{0, 1\}$.
Writing $\overline{\Gamma} \colonequals \Gamma \times \{0, 1\}$, 
each valuation~$\nu$ then defines a $\overline{\Gamma}$-tree
$\nu(\calT) \defeq \la T, (\lambda \times \nu) \ra$, obtained by annotating each
node of~$\calT$ by its $\nu$-image.
As in~\cite{amarilli2015provenance},
we define the provenance of a $\overline{\Gamma}$-SATWA $A$ on~$\calT$, which
intuitively captures all possible results of evaluating $A$ on possible
valuations of~$\calT$:

\begin{definition}
  \label{def:satwaprov}
  The \emph{provenance} of a $\overline{\Gamma}$-SATWA $A$ on a $\Gamma$-tree $\calT$
  is the Boolean function $\phi$ defined on the nodes of~$T$ such
  that, for any valuation $\nu$ of~$\calT$, 
  $\nu(\phi) = 1$ iff $A$
  accepts $\nu(\calT)$.
\end{definition}

We then show that we can efficiently build provenance representations of
SATWAs on trees as stratified cycluits:

\begin{theorem}\label{thm:satwaprov}
  For any fixed alphabet $\Gamma$, given a $\overline{\Gamma}$-SATWA $A$ and a
  $\Gamma$-tree $\calT = \la T, \lambda \ra$, we can build 
  a stratified cycluit capturing the provenance of~$A$
  on~$\calT$
  in time $O(\card{A} \cdot \card{\calT})$.
\end{theorem}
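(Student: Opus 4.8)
The plan is to build a stratified cycluit whose input gates are the nodes of $T$ and which contains, for every state $q \in \calQ$ and every node $w$ of $T$, a gate $g_{q,w}$ designed to capture the statement ``$A$ admits a run starting in state $q$ at node $w$ on the valuated tree.'' The output gate is then $g_{q_{\I}, \text{root}}$, so that by Definition~\ref{def:satwaprov} I only need to show that, for every valuation $\nu$, the stratified evaluation of the cycluit sends $g_{q,w}$ to $1$ iff $A$ has a run from $(q,w)$ on $\nu(\calT)$, mirroring the inductive definition of runs.

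First I would describe the wiring. Each node $w$ yields one input gate $x_w$ (the provenance variable for $w$). For each $q$, each $w$, and each bit $b \in \{0,1\}$, I would realize the transition formula $\Delta(q, (\lambda(w), b)) \in \calB(\calQ)$ by an acyclic gadget of $\land/\lor$ gates following the syntax tree of the formula (using that negations are only applied to variables): every positive leaf $q'$ is wired to the disjunction $\bigvee_{w' \in \neigh(w)} g_{q',w'}$ --- this is where isotropy is used, since a positive child may start on any neighbour --- and every negative leaf $\lnot q'$ is wired to a NOT-gate over $g_{q',w}$ (the negated sub-goal stays on $w$, as in the run definition). Finally I would let $g_{q,w}$ select the correct gadget according to the bit at $w$, i.e. $g_{q,w} \defeq (x_w \land D^1_{q,w}) \lor (\lnot x_w \land D^0_{q,w})$, where $D^b_{q,w}$ is the output of the gadget for bit $b$. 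For the size bound, summing the formula gadgets over $q$ and the two $\overline{\Gamma}$-labels $(\lambda(w),0),(\lambda(w),1)$ costs $O(\card{A})$ per node, while the neighbour-disjunctions contribute $O(\card{A}) \cdot \sum_{w} \card{\neigh(w)} = O(\card{A}\cdot\card{\calT})$ in total, using $\sum_{w} \card{\neigh(w)} = O(\card{T})$. Altogether the cycluit has size $O(\card{A}\cdot\card{\calT})$ and is produced in linear time in its size.

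For stratification, I would reuse the SATWA stratification $\strat$. Rather than exhibiting an explicit function, it is cleaner to invoke Proposition~\ref{prp:stratifun} and merely check that the cycluit has no cycle through a NOT-gate. This follows from a monotonicity observation: by the constraints on $\Delta$, every wire goes from a gate attached to some state $q'$ to one attached to a state $q$ with $\strat(q') \leq \strat(q)$, and the inequality is strict exactly when the wire enters a NOT-gate (negative literals force $\strat(q') < \strat(q)$, and the input-negations $\lnot x_w$ leave level $0$). Hence the state-level is non-decreasing along wires and strictly increases across every NOT-gate, so no cycle can contain one; Proposition~\ref{prp:stratifun} then yields a stratification function in linear time.

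The core of the argument, and the step I expect to be the main obstacle, is correctness: proving by induction on the stratum $i$ (from $0$ to $m$) that the $i$-th stratum evaluation $\nu_i$ of the cycluit sets $g_{q,w}$ to $1$ exactly when $A$ has an $i$-run from $(q,w)$ on $\nu(\calT)$. For negative literals the induction hypothesis applies because they reference strictly lower strata, whose gate values are already fixed, so $\lnot g_{q',w}$ correctly witnesses the absence of a $\strat(q')$-run at $w$; this matches point~1 of the run definition. Within stratum $i$ the remaining computation is monotone, so by Proposition~\ref{prp:altersem} the evaluation is the least fixpoint, and I would show this least fixpoint coincides with the set of pairs $(q,w)$ admitting a \emph{finite} $i$-run: one direction reads off a finite run from the iteration at which $g_{q,w}$ first becomes $1$, and the converse sets $g_{q,w}$ to $1$ by induction on the height of a witnessing run, the neighbour-disjunctions accounting for the isotropic choice of $w_{q^+}\in\neigh(w)$. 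The delicate points to handle carefully are the faithful correspondence between a propositional satisfying assignment $(P,N)$ of $\Delta$ and the gadget's Boolean evaluation, and the fact that least fixpoints capture precisely the finite runs demanded by the inductive definition. Evaluating at the top stratum on the output gate $g_{q_{\I},\text{root}}$ then gives acceptance, establishing that the cycluit captures the provenance of $A$.
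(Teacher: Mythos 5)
Your proposal matches the paper's proof essentially exactly: the same gate-per-(state,\,node) construction with per-bit transition gadgets, neighbour disjunctions substituted for positive literals, NOT-gates over same-node gates for negative literals, the selector $(x_w \land D^1_{q,w}) \lor (\lnot x_w \land D^0_{q,w})$, and the same correctness scheme (outer induction on strata, with one direction by induction on run structure and the other by induction on the fixpoint-iteration step at which a gate turns true). The only cosmetic difference is that you derive stratifiedness by checking that no cycle crosses a NOT-gate and invoking Proposition~\ref{prp:stratifun}, whereas the paper exhibits the stratification function explicitly (essentially $g \mapsto \strat(q)+1$ for gates attached to state $q$); both are valid and rest on the same monotonicity-of-strata observation.
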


The construction generalizes Proposition~3.1
of~\cite{amarilli2015provenance} from bNTAs and circuits to SATWAs and cycluits. 
The reason why we need cycluits rather than circuits is because two-way automata may loop
back on previously visited nodes.
To prove Theorem~\ref{thm:satwaprov}, we construct a cycluit $C^A_{\calT}$ as follows. 
For each node $w$ of $T$, we create an input node $g_w^{\gi}$, a $\lnot$-gate
$g_w^{\lnot \gi}$ defined as~$\NOT(g_w^{\gi})$, and an OR-gate $g_w^q$ for each state $q \in
Q$. Now for each $g_w^q$, for $b \in \{0, 1\}$, we consider the propositional
formula $\Delta(q, (\lambda(w),b))$, and we express it as a circuit that
captures this formula:
we let
$g_w^{q,b}$ be the output gate of that circuit, we
replace each variable~$q'$ occurring positively by an
OR-gate $\bigvee_{w^\prime \in \neigh(w)} g_{w^\prime}^{q'}$, 
and we replace each variable $q'$ occurring negatively by the gate~$g_w^{q'}$.
We then define $g^q_w$ as
  $\OR(\AND(g_w^{\gi},g_w^{q,1}), \AND(g_w^{\lnot
  \gi},g_w^{q,0}))$. Finally, we let the output gate of $C$ be $ g_r^{q_{\I}}$, where $r$ is the root of $T$, and $q_{\I}$ is the initial state of $A$.

It is clear that this process runs in linear time in $\card{A} \cdot \card{\calT}$.
The proof of Theorem~\ref{thm:satwaprov} then results from the following claim:

 \begin{lemma}\label{lem:goodprov}
	  The cycluit $C^A_{\calT}$ is a stratified cycluit capturing the provenance of $A$
  on~$\calT$.
 \end{lemma}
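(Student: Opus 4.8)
The plan is to prove the two assertions separately: first that $C^A_{\calT}$ is a bona fide stratified cycluit, and then that the Boolean function it captures is exactly the provenance of $A$ on $\calT$.

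For \emph{stratification}, I would invoke Proposition~\ref{prp:stratifun}, so that it suffices to show that $C^A_{\calT}$ contains no cycle of gates through a $\neg$-gate. There are two kinds of NOT-gates. The gates $g_w^{\lnot\gi}$ each have a single input $g_w^{\gi}$, which is an input gate with no incoming wire, so they lie on no cycle. The remaining NOT-gates arise from a negative literal $\lnot q'$ in some $\Delta(q,(\lambda(w),b))$; such a gate sits inside the gadget computing $g_w^q$, takes $g_w^{q'}$ as input, and satisfies $\strat(q') < \strat(q)$ by the SATWA stratification condition. The key structural observation is that wires cross from one state-gadget to another \emph{only} through state-gates $g_w^q$ (the substituted inputs are precisely such gates), while each gadget is internally the acyclic circuit of a propositional formula. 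Hence any cycle traverses a cyclic sequence of state-gates $g_{w_1}^{q_1}, \ldots, g_{w_n}^{q_n}, g_{w_1}^{q_1}$, and at each gadget boundary $q_i$ feeds the gadget of $q_{i+1}$ either positively, giving $\strat(q_i) \leq \strat(q_{i+1})$, or negatively through a NOT-gate, giving $\strat(q_i) < \strat(q_{i+1})$. The strata are therefore non-decreasing around the cycle and must return to their starting value, forcing all of them equal; a NOT-gate crossing would force a strict increase somewhere, a contradiction. So no cycle meets a NOT-gate, and Proposition~\ref{prp:stratifun} yields that $C^A_{\calT}$ is stratified.

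For the \emph{provenance} claim, fix a valuation $\nu$ of $\calT$; then setting $g_w^{\gi} \defeq \nu(w)$ corresponds to the $\overline{\Gamma}$-tree $\nu(\calT)$ whose label at $w$ is $(\lambda(w),\nu(w))$. Let $\nu'$ be the stratified evaluation from Definition~\ref{def:strateval}. The heart of the argument is the equivalence, for every state $q$ and node $w$: $\nu'(g_w^q)=1$ iff $A$ admits a $\strat(q)$-run on $\nu(\calT)$ rooted at $(q,w)$. I would prove this by induction on the stratum $i=\strat(q)$. For the inductive step, assume the equivalence for all states of stratum $<i$. Since $g_w^q = \OR(\AND(g_w^{\gi}, g_w^{q,1}), \AND(g_w^{\lnot\gi}, g_w^{q,0}))$ and exactly one of $g_w^{\gi}, g_w^{\lnot\gi}$ is true under $\nu$, the value $\nu'(g_w^q)$ equals that of the circuit for $\Delta(q,(\lambda(w),\nu(w)))$ in which a positive literal $q'$ reads as $\bigvee_{w'\in\neigh(w)} \nu'(g_{w'}^{q'})$ and a negative literal $\lnot q'$ reads as $\lnot\,\nu'(g_w^{q'})$. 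As negated states have stratum $<i$, these negated values are already fixed and, by the induction hypothesis, record exactly the absence of a run at $(q',w)$. This is precisely the condition of Definition~\ref{def:satwa}(2): $\Delta(q,(\lambda(w),\nu(w)))$ has a satisfying assignment $(P,N)$ with each $q^+\in P$ carried by a run at some neighbour of $w$ and each $q^-\in N$ having no run at $w$. Matching positive literals over $\neigh(w)$ mirrors the isotropy of the automaton, while negated literals stay at $w$ and drop a stratum, exactly as in the run definition.

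The one genuinely delicate point, and the main obstacle, is the intra-stratum recursion: the positive states $q^+\in P$ may themselves have stratum $i$, so the stratum-$i$ equivalence is not a plain induction but a coincidence of two least fixpoints on the set of pairs $(q,w)$ with $\strat(q)=i$. On the cycluit side this fixpoint is characterized by Proposition~\ref{prp:altersem} (the monotone evaluation of stratum $i$, with lower strata and $\neg$-gates frozen as inputs); on the automaton side, since $i$-runs are \emph{finite} trees, a run rooted at $(q,w)$ exists iff $(q,w)$ lies in the least fixpoint of the corresponding derivability operator. These two operators are literally the same monotone operator under the dictionary $\nu'(g_w^q)=1 \leftrightarrow (q,w)\in R$, so their least fixpoints coincide, completing the inductive step. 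Finally, applying the equivalence to the output gate $g_r^{q_{\I}}$ and comparing with Definitions~\ref{def:satwa} and~\ref{def:satwaprov}, we get $\nu(C^A_{\calT})=\nu'(g_r^{q_{\I}})=1$ iff $A$ accepts $\nu(\calT)$; as $\nu$ was arbitrary, $C^A_{\calT}$ captures the provenance of $A$ on $\calT$, and together with the first part it is a stratified cycluit, as claimed.
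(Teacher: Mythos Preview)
Your proposal is correct, and both halves follow a somewhat different route from the paper.

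For stratification, the paper constructs an explicit stratification function $\strat'$ (sending $g_w^q$ and all gates of its gadget to $\strat(q)+1$, input gates to~$0$, and $g_w^{\lnot\gi}$ to~$1$) and verifies each wire type by a case analysis. Your approach via Proposition~\ref{prp:stratifun} is more economical: the structural observation that any cycle must thread a cyclic sequence of state-gates, along which the automaton strata are non-decreasing and strictly increase at NOT-crossings, dispenses with the wire-by-wire bookkeeping. For the provenance half, the paper proves the key equivalence within each stratum by two separate inductions (forward by structural induction on the run tree, backward by induction on the iteration count of Algorithm~\ref{alg:semantics-monotone}); your phrasing as a coincidence of least fixpoints of the same monotone operator is more conceptual and equally valid. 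One small point worth making explicit: the cycluit's monotone fixpoint within a stratum is taken over \emph{all} gates of that stratum, including the intermediate formula-circuit gates and the neighbourhood OR-gates, not just the state-gates $g_w^q$; identifying it with the automaton's derivability operator on pairs $(q,w)$ tacitly uses that these intermediate gates are acyclically determined by the state-gates, so the fixpoint restricted to state-gates coincides with that of the collapsed operator. This is routine, but it is the one place where ``literally the same operator'' wants a sentence of justification.
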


\begin{proof}
	We first show that $C \colonequals C^A_{\calT}$ is a stratified cycluit.
  Let $\strat$ be the stratification function
  of the $\overline{\Gamma}$-SATWA $A$ and let $\{0, \ldots, m\}$ be its range.
  We use $\strat$ to define $\strat'$ as the following function from the gates
  of~$C$ to $\{0, \ldots, m+1\}$:
\begin{itemize}
	\item For any input gate $g_w^\gi$, we set
          $\strat^\prime(g_w^\gi)\colonequals 0$ and
          $\strat^\prime(g_w^{\lnot \gi})\colonequals 1$.
        \item For an OR gate $g \colonequals \bigvee_{w^\prime \in \neigh(w)} g_{w^\prime}^{q'}$, we set
          $\strat'(g) \colonequals \strat(q') + 1$.
	\item For any gate $g_w^q$, we set $\strat^\prime(g_w^q) \colonequals \strat(q) +
          1$, and we set $\strat^\prime$ to the same value for the intermediate
          AND-gates used in the
          definition of~$g_w^q$, as well as for the gates in the two circuits that capture the
          transitions $\Delta(q, (\lambda(w), b))$ for $b \in \{0, 1\}$, except
          for the input gates of that circuit (i.e., gates of the form $\bigvee_{w^\prime \in \neigh(w)}
          g_{w^\prime}^{q'}$, which are covered by the previous point, or the
          gates of the form
          $g^{q'}_w$, which are covered by another application of that point).
\end{itemize}

Let us show that $\strat'$ is indeed a stratification function for~$C$. We first
  observe that it is the case that the gates in stratum zero are precisely the
  input gates. We then check the condition for the various possible wires:

\begin{itemize}
    \item $g^\gi_w \rightarrow g^{\lnot\gi}_w$: by construction, we have
      $\strat(g^\gi_w) < \strat'(g^{\lnot\gi}_w)$.
    \item $g \rightarrow g'$ where $g'$ is a gate of the form $g_w^q$ and $g$ is an
      intermediate AND-gate in the definition of a gate of the form $g_w^q$: by construction we
      have $\strat'(g) = \strat'(g')$, so in particular $\strat'(g) \leq
      \strat'(g')$.
    \item $g \rightarrow g'$ where $g'$ is an intermediate AND-gate in the
      definition of a gate of the form $g_w^q$, and $g$ is $g^\gi_w$ or
      $g^{\lnot\gi}_w$: by construction we have $\strat'(g) \in \{0, 1\}$ and
      $\strat'(g') \geq 1$, so $\strat'(g) \leq \strat'(g')$.
    \item $g \rightarrow g'$ where $g$ is a gate in a circuit capturing
      the propositional formula of some transition of $\Delta(q, \cdot)$ without
      being an input gate or a NOT-gate of this circuit, and $g'$ is also such a
      gate, or is an intermediate AND-gate in the definition of~$g_w^q$: then
      $g'$ cannot be a NOT-gate (remembering
      that the propositional formulas of transitions only have negations on
      literals), and
      by construction we have $\strat'(g) = \strat'(g')$.
    \item $g \rightarrow g'$ where $g$ is of the
	 form $\bigvee_{w^\prime \in \neigh(w)} g_{w^\prime}^{q}$, and $g'$ is
         a gate in a circuit describing $\Delta(q', \cdot)$ or an intermediate
         gate in the definition of $g^{q'}_w$.
         Then we have $\strat'(g) = \strat(q)$ and $\strat'(g') = \strat(q')$,
         and as $q$ occurs as a positive literal in a transition of~$q'$, by
         definition of $\strat$ being a transition function, we have $\strat(q)
         \leq \strat(q')$. Now we have $\strat'(g) = \strat(q)$ and $\strat'(g')
         = \strat'(q')$ by definition of~$\strat'$, so we deduce that
         $\strat'(g) \leq \strat'(g')$.
    \item $g \rightarrow g'$ where $g'$ is of the
	 form $\bigvee_{w^\prime \in \neigh(w)} g_{w^\prime}^{q'}$, and $g$ is
         one of the $g_{w'}^{q'}$. Then by definition of~$\strat'$ we have
         $\strat'(g) = \strat(q')$ and $\strat'(g') = \strat(q')$, so in
         particular $\strat'(g) \leq \strat'(g')$.
    \item $g \rightarrow g'$ where $g$ is a NOT-gate in a circuit
      capturing a propositional formula $\Delta(q',
      (\lambda(w), b))$, and $g$ is then necessarily a gate of the form
      $g^{q}_w$: then clearly $q'$ was
      negated in $\phi$ so we had $\strat(q) < \strat(q')$, and as by
      construction we have $\strat'(g) = \strat(q)$ and $\strat'(g') =
      \strat(q')$, 
      we deduce that $\strat'(g) < \strat'(g')$.
\end{itemize}

We now show that $C$ indeed captures the provenance of~$A$ on~$\la T,\lambda\ra$.
Let $\nu : T \to \{0,1\}$ be a Boolean valuation of the inputs of $C$, that we
extend to an evaluation $\nu' : C \to \{0,1\}$ of~$C$.
We claim the following \textbf{equivalence}: for all $q$ and $w$, there exists a run $\rho$ of $A$ on $\nu(T)$ \emph{starting at $w$} in state $q$ if and only if $\nu'(g_w^q)=1$.

We prove this claim by induction on the stratum $i = \strat(q)$ of~$q$. Up to adding
  an empty first stratum, we can make sure that the base case is vacuous.
For the induction step, we prove each implication separately.

  \subparagraph*{\qquad Forward direction.}
First, suppose that there exists a run $\rho = \la T_r, \lambda_r \ra$ starting at $w$ in state $q$, and
let us show that $\nu'(g^q_w) = 1$. We show by induction on the run (from
bottom to top) that for each node $y$ of the run labeled by a \emph{positive} state
$(q', w')$ we have $\nu'(g^{q'}_{w'}) = 1$, and for every node $y$ of the run
labeled by a \emph{negative} state $\lnot (q', w')$ we have $\nu'(g^{q'}_{w'})
= 0$. The base case concerns the leaves, where there are three possible
subcases:

\begin{itemize}
  \item We may have $\lambda_r(y)=(q^\prime, w')$ with $\strat(q^\prime) = i$, so
  that $\Delta(q^\prime,(\lambda(w'),\nu(w')))$ is tautological. In
    this case, $g_{w'}^{q^\prime}$ is defined as
    $\OR(\AND(g_{w'}^{\gi},g_{w'}^{q',1}),\AND(g_{w'}^{\lnot
    \gi},g_{w'}^{q',0}))$. Hence, we know that $\nu(g_{w'}^{q^\prime,\nu(w)}) =
    1$ because the circuit is also tautological, and depending on whether
    $\nu(w)$ is~$0$ or~$1$ 
    we know that $\nu(g_{w'}^{\lnot\gi})=1$ or $\nu(g_{w'}^{\gi})=1$, so this
    proves the claim.
		
  \item We may have $\lambda_r(y)=(q^\prime, {w'})$ with $\strat(q^\prime) = j$ for $j < i$. 
		By definition of the run $\rho$, this implies that there exists
                a run starting at ${w'}$ in state $q^\prime$. 
		But then, by the induction on the strata (using the forward
                direction of the \textbf{equivalence}), we must have $\nu(g_{w'}^{q^\prime})=1$.

  \item We may have $\lambda_r(y)=\lnot (q^\prime, {w'})$ with $\strat(q^\prime)
                = j$ for $j < i$. 
                Then by definition there exists no run starting at ${w'}$ in state $q^\prime$.
                Hence again by induction on the strata (using the backward
                direction of the \textbf{equivalence}), we have that $\nu(g_{w'}^{q^\prime})=0$.
\end{itemize}

For the induction case on the run, where $y$ is an internal node,
by definition of a run there is a subset $S = \{q_{P_1},\cdots,q_{P_n}\}$ of
positive literals and a subset $N = \{ \lnot q_{N_1}, \cdots, \lnot q_{N_m}\}$
of negative literals that satisfy $\phi_{\nu(w')} \defeq \Delta(q^\prime,(\lambda(w'),\nu(w')))$ such
that:

\begin{itemize}
  \item For all $q_{P_k} \in P$, there exists a child $y_k$ of~$y$ with $\lambda_r(y_k)=(q_{P_k}, {w'}_k)$
    where ${w'}_k \in \neigh({w'})$;
  \item For all $\lnot q_{N_k}
\in N$ there is a child $y'_k$
of $y$ with
$\lambda_r(y'_k)=\lnot (q_{N_k}, {w'})$.
\end{itemize}
    
    Then, by induction on the run, we
know that for all $q_{P_k}$ we have $\nu(g_{{w'}_k}^{q_{P_k}})=1$ and for all
$\lnot q_{N_k}$ we have
$\nu(g_{{w'}}^{q_{N_k}})=0$. 
Let us show that we have
$\nu(g_{w'}^{q'})=1$, which would finish the induction case on the run.
There are two cases:
either $\nu(w') = 1$ or $\nu(w') = 0$. In the
first case, remember that the first input of the OR-gate $g_{w'}^{q'}$ is an AND-gate of
$g^{\gi}_{w'}$ and the output gate $g_{w'}^{q',1}$ of a circuit coding $\phi_{1}$ on inputs including
the $g^{q_{P_k}}_{w'_k}$ and $g^{q_{N_k}}_{w'}$. We have $\nu(g^{\gi}_{w'}) = 1$ because $\nu(w') = 1$,
and the second gate ($g_{w'}^{q',1}$) evaluates to~$1$ by construction of the circuit, as
witnessed by the Boolean valuation of the $g^{q_{P_k}}_{w'_k}$ and $g^{q_{N_k}}_{w'}$.
In the second case we follow the same reasoning but with the second input
of~$g^{q'}_{w'}$ instead, which is an AND-gate on
$g^{\lnot \gi}_{w'}$ and a circuit coding $\phi_{0}$.

By induction on the run, the claim is proven, and applying it to the root of the
run concludes the proof of the first direction of the \textbf{equivalence} (for the
induction step of the induction on strata).

  \subparagraph*{\qquad Backward direction.}
We now prove the converse implication for the induction step of the induction on
strata, i.e., letting $i$ be the current stratum, for every node $w$ and state
$q$ with $\strat(q) = i$, if $\nu(g^q_w) = 1$ then there exists a run $\rho$
of~$A$ starting at~$w$. From the definition of the stratification
function~$\strat'$ of the cycluit from~$\strat$, we have $\strat'(g^q_w) =
\strat(q)+1$, so as $\nu(g^q_w) = 1$ we know that $\nu_{i+1}(g^q_w) = 1$, where
$\nu_{i+1}$ is the $i+1$-th
stratum evaluation of~$C$ (remember Definition~\ref{def:strateval}).
By induction hypothesis on the strata, we know from the \textbf{equivalence} that, for any $j \leq i$, for any
gate $g^{q''}_{w''}$ of $C$ with $\strat(g^{q''}_{w''}) = j$, we have $\nu_j(g^{q''}_{w''}) = 1$ iff 
there exists a run $\rho$ of $A$ on $\nu(T)$ \emph{starting at $w''$}
in state $q''$.

Recall that the definition of $\nu_{i+1}$ according to
Definition~\ref{def:strateval} proceeds in three steps. Initially, we
fix the value in $\nu_{i+1}$ of gates of
lower strata, so we can then conclude by induction hypothesis on the strata. We
then set the value of all NOT-gates in~$\nu_{i+1}$, but these cannot be of the form
$g^{q'}_{w'}$ so there is nothing to show. Last, we evaluate all other gates
with Algorithm~\ref{alg:semantics-monotone}. We then show our claim by an
induction on the iteration in the application of
Algorithm~\ref{alg:semantics-monotone} for $\nu_{i+1}$ where the gate $g^q_w$ was
set to~$1$. The base case, where $g^q_w$ was initially true, was covered in the
beginning of this paragraph.

For the induction step on the application of
Algorithm~\ref{alg:semantics-monotone},
when a gate $g^{q'}_{w'}$ is set to true by $\nu_{i+1}$, as $g^{q'}_{w'}$ is an
OR-gate by construction, from the workings of Algorithm~\ref{alg:semantics-monotone}, there are two possibilities:
either its input AND-gate that includes $g^{\gi}_{w'}$ was true,
or its input AND-gate that includes $g^{\neg\gi}_{w'}$ was true. We prove the
first case, the second being analogous. From the fact that $g^{\gi}_{w'}$ is
true, we know that $\nu(w') = 1$. Consider the other input gate to that AND
gate, which is the output gate of a circuit $C'$ reflecting 
$\phi \defeq \Delta(q^\prime,(\lambda(w'),\nu(w')))$, with the input gates
adequately substituted. We consider the value by $\nu_{i+1}$ of the gates that
are used as input gates of $C'$ in the construction of~$C$ (i.e., OR-gates, in
the case of variables that occur positively, or directly $g^{q''}_{w'}$-gates,
in the case of variables that occur negatively). By construction of
$C'$, the corresponding Boolean valuation $\nu'$ is a witness to the satisfaction of $\phi$. By induction hypothesis on
the strata (for the negated inputs to $C'$; and for the non-negated inputs to
$C'$ which are in a lower stratum) and on the step at which the gate was
set to true by Algorithm~\ref{alg:semantics-monotone} (for the inputs in the same stratum, which must be positive), the
valuation of these inputs reflects the existence of the corresponding runs.
Hence, we can assemble these (i.e., a leaf node in the first two cases, a run in
the third case) to obtain a run starting at $w'$ for state $q'$
using the Boolean valuation $\nu'$ of the variables of~$\phi$; this valuation satisfies
$\phi$ as we have argued.

This concludes the two inductions of the proof of the \textbf{equivalence} for the
induction step of the induction on strata, which concludes the proof of Theorem~\ref{thm:satwaprov}.
\end{proof}

Note that the proof can be easily modified to make it work for standard
alternating two-way automata rather than our isotropic automata.

\subparagraph*{Proving Theorem~\ref{thm:mainprov}.}
We are now ready to conclude the proof of our main provenance construction result, i.e., Theorem~\ref{thm:mainprov}.
We do so by explaining how our
provenance construction for $\overline{\Gamma}$-SATWAs can be used to compute
the provenance of a CFG-Datalog query on a treelike instance. This is again
similar to~\cite{amarilli2015provenance}.

Recall the definition of tree encodings from
Section~\ref{sec:existing}, and the
definition of the alphabet~$\Gamma^k_\sigma$. To represent the dependency of
automaton runs on the presence of individual facts, we will be working with
$\overline{\Gamma^k_\sigma}$-trees, where the Boolean annotation on a node~$n$ indicates
whether the fact coded by~$n$ (if any) is present or absent. The semantics
is that we map back the result to $\Gamma^k_\sigma$ as follows:

\begin{definition}
  We define the mapping $\epsilon$ from $\overline{\Gamma^k_\sigma}$ to
  $\Gamma^k_\sigma$ by:
  \begin{itemize}
    \item $\epsilon((d, s), 1)$ is just $(d, s)$, indicating that the fact
      of~$s$ (if any) is kept;
    \item $\epsilon((d, s), 0)$ is $(d, \emptyset)$, indicating that the fact of
      $s$ (if any) is removed.
  \end{itemize}

  We abuse notation and also see $\epsilon$ as a mapping from
  $\overline{\Gamma^k_\sigma}$-trees to $\Gamma^k_\sigma$-trees by applying it
  to each node of the tree.
\end{definition}

As our construction of provenance applies to automata on
$\overline{\Gamma^k_\sigma}$, we show the following easy \emph{lifting lemma}
(generalizing Lemma~3.3.4 of~\cite{amarilli2016leveraging}):

\begin{lemma}\label{lem:lifting}
  For any $\Gamma^k_\sigma$-SATWA $A$, we can compute in linear time a
  $\overline{\Gamma^k_\sigma}$-SATWA $A'$ such that, for any
  $\overline{\Gamma^k_\sigma}$-tree $E$, we have that $A'$ accepts $E$ iff $A$
  accepts $\epsilon(E)$.
\end{lemma}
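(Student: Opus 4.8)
The plan is to build $A'$ from $A$ by leaving the state set, initial state, and stratification function untouched, and only relabeling the transition function so that reading a Boolean-annotated letter of $\overline{\Gamma^k_\sigma}$ triggers exactly the transition that $A$ would take on its $\epsilon$-image. Concretely, writing $A = (\calQ, q_{\I}, \Delta, \strat)$, I would set $A' \defeq (\calQ, q_{\I}, \Delta', \strat)$ where $\Delta' : \calQ \times \overline{\Gamma^k_\sigma} \to \calB(\calQ)$ is defined by $\Delta'(q, \gamma) \defeq \Delta(q, \epsilon(\gamma))$ for every state $q$ and letter $\gamma \in \overline{\Gamma^k_\sigma}$; that is, $\Delta'(q, ((d,s),1)) = \Delta(q,(d,s))$ and $\Delta'(q, ((d,s),0)) = \Delta(q,(d,\emptyset))$.

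First I would check that $A'$ is a well-formed SATWA. Since every formula $\Delta'(q, \gamma)$ is literally one of the formulas $\Delta(q, f)$ already appearing in $A$, and since we keep the same $\strat$, the stratification condition of Definition~\ref{def:satwa} (positive literals do not increase the stratum, negative literals strictly decrease it) holds for $A'$ precisely because it holds for $A$. For the running time, note that $\overline{\Gamma^k_\sigma} = \Gamma^k_\sigma \times \{0,1\}$ has size only twice that of the fixed alphabet $\Gamma^k_\sigma$, so building the table of $\Delta'$ amounts to copying each formula of $\Delta$ at most twice; hence $\card{A'} = O(\card{A})$ and the construction is linear.

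The correctness claim is that for every $\overline{\Gamma^k_\sigma}$-tree $E = \la T, \mu \ra$, the automaton $A'$ accepts $E$ iff $A$ accepts $\epsilon(E)$. The key observation is that $\epsilon(E) = \la T, \epsilon \circ \mu \ra$ shares the very same underlying tree $T$ as $E$, hence the same node neighborhoods $\neigh(w)$. I would then prove, by induction on the stratum $i$ following the nested inductive definition of $i$-runs, that the $i$-runs of $A'$ on $E$ are exactly the $i$-runs of $A$ on $\epsilon(E)$: at each node $w$ the satisfying assignments of $\Delta'(q, \mu(w))$ and of $\Delta(q, \epsilon(\mu(w)))$ coincide by construction, while all other ingredients of the definition (states, strata, the choice of neighbor for positive child states) are identical. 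Applying this at stratum $m$ and at the root in state $q_{\I}$ yields the equivalence of acceptance.

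I do not expect a genuine obstacle here: the construction is a pure relabeling of the input alphabet, so the only thing requiring care is to confirm that the definition of $i$-runs transfers verbatim once the transition formulas are shown to agree node by node; this is a routine induction rather than a substantive difficulty.
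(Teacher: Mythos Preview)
Your proposal is correct and matches what the paper intends: the paper does not give any details here, simply noting that the proof is ``exactly analogous'' to Lemma~3.3.4 of~\cite{amarilli2016leveraging}, and your explicit relabeling construction $\Delta'(q,\gamma)\defeq\Delta(q,\epsilon(\gamma))$ is precisely the standard argument underlying that reference. There is no gap.
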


\begin{proof}
  The proof is exactly analogous to that of  Lemma~3.3.4
  of~\cite{amarilli2016leveraging}.
\end{proof}

We are now ready to conclude the proof of our main provenance result
(Theorem~\ref{thm:mainprov}):

\begin{proof}[Proof of Theorem~\ref{thm:mainprov}]
  Given the program $P$ and instance $I$, use
Theorem~\ref{thm:maintheorem} to compute in
FPT-linear time in $\card{P}$ a $\Gamma^k_\sigma$-SATWA $A$ that tests $P$ on instances of treewidth $\leq k_{\text{I}}$, for $k_{\text{I}}$ the treewidth bound.
Compute also in FPT-linear time a tree encoding $\la E, \lambda \ra$ of the instance $I$ (i.e., a
$\Gamma^k_\sigma$-tree), using Theorem~\ref{lem:getencoding}.
Lift the  $\Gamma^k_\sigma$-SATWA $A$ in linear time using
Lemma~\ref{lem:lifting} to a $\overline{\Gamma^k_\sigma}$-SATWA~$A'$, and use Theorem~\ref{thm:satwaprov} on $A'$ and $\la E, \lambda \ra$ to compute in
FPT-bilinear time a stratified cycluit $C'$ that captures the provenance of~$A'$
on~$\la E, \lambda \ra$: the inputs of~$C'$ correspond to the nodes of~$E$. 
  Let $C$ be obtained
from $C'$ in linear time by changing the inputs of $C'$ as follows: those which
correspond to nodes $n$ of~$\la E, \lambda \ra$ containing a fact (i.e., with label $(d, s)$ for
$\card{s} = 1$) are renamed to be an input gate that stands for the fact of~$I$
coded in this node; the nodes $n$ of~$\la E, \lambda \ra$ containing no fact are replaced by a
0-gate, i.e., an OR-gate with no inputs. Clearly, $C$ is still a stratified
Boolean cycluit,
and $C_\inp$ is exactly the set of facts of~$I$.

All that remains to show is that $C$ captures the provenance of $P$ on~$I$ in
  the sense of Definition~\ref{def:provenance}. 
To see why this is the case, consider an arbitrary
Boolean valuation $\nu$ mapping the facts of~$I$ to $\{0, 1\}$, and call $\nu(I) \defeq
\{F \in I \mid \nu(F) = 1\}$. We must show that $\nu(I)$ satisfies $P$ iff
$\nu(C) = 1$. By construction of~$C$, it is obvious that $\nu(C) = 1$ iff
$\nu'(C') = 1$, where $\nu'$ is the Boolean valuation of $C_\inp$ defined by $\nu'(n) =
\nu(F)$ when $n$ codes some fact $F$ in $\la E, \lambda \ra$, and $\nu'(n) = 0$ otherwise. By
definition of the provenance of $A'$ on~$\la E, \lambda \ra$, we have $\nu'(C') = 1$ iff $A'$
accepts $\nu'(\la E, \lambda \ra)$, that is, by definition of lifting, iff $A$ accepts
$\epsilon(\nu'(\la E, \lambda \ra))$.
Now all that remains to observe is that $\epsilon(\nu'(\la E, \lambda \ra))$ is precisely a tree
encoding of the instance $\nu(I)$: this is by definition of $\nu'$ from $\nu$,
and by definition of our tree encoding scheme.
Hence, by definition of $A$ testing $P$, the tree
$\epsilon(\nu'(\la E, \lambda \ra))$ is accepted by $A$ iff $\nu(I)$ satisfies $P$. This finishes
the chain of equivalences, and concludes the proof of
Theorem~\ref{thm:mainprov}.
\end{proof}

This concludes the presentation of our provenance results.

\section{Proof of Translation}
\label{sec:proof}
In this section, we prove our main technical theorem, Theorem~\ref{thm:maintheorem}, which we recall here:
\makeatletter
\begin{axp@theoremrp}[\ref{thm:maintheorem}]
	Given a \cfggnd program~$P$ of body size
  $\kp$ and $\ki \in \NN$,
  we can build in FPT-linear time in~$|P|$
  (parameterized by~$\kp, \ki$)
  a SATWA $A_P$ testing $P$ on instances of treewidth $\leq \ki$.
\end{axp@theoremrp}
\makeatother
We then explain at the end of the section how this can be extended to full CFG-Datalog (i.e., with negative intensional predicates not being necessarily guarded in rule bodies).

\subsection{Guarded-Negation Case}	
	First, we introduce some useful notations to deal with
        valuations of variables as constants of the encoding alphabet.
        Recall that $\calD_{\ki}$ is the domain of elements
        for treewidth $\ki$, used to define the alphabet $\Gamma_{\sigmae}^{\ki}$ of tree
        encodings of width~$\ki$.

	\begin{definition}
          \label{def:partialval}
          Given a tuple $\mathbf{x}$ of variables, a \emph{partial valuation} of~$\mathbf{x}$ is a function $\nu$ from~$\mathbf{x}$ to $\calD_{\ki} \sqcup \{?\}$.
		The set of \emph{undefined} variables of $\nu$ is
                $U(\nu)=\{x_j \mid \nu(x_j)=\mathord{?}\}$: we say that the variables of~$U(\nu)$ are \emph{not defined} by~$\nu$, and the other variables are \emph{defined} by~$\nu$.

		A \emph{total valuation} of $\mathbf{x}$ is a partial valuation $\nu$ of $\mathbf{x}$ such that $U(\nu) = \emptyset$.
                We say that a valuation $\nu'$ \emph{extends} another
                valuation~$\nu$ if the domain of~$\nu'$ is a superset of that
                of~$\nu$, and if all variables defined by $\nu$ are defined by~$\nu'$
                and are mapped to the same value.
                For $\mathbf{y} \subseteq \mathbf{x}$, we say that $\nu$ is \emph{total on~$\mathbf{y}$} if its restriction to~$\mathbf{y}$ is a total valuation.

                For any two partial valuations $\nu$ of $\mathbf{x}$ and $\nu'$
                of $\mathbf{y}$,
                if we have $\nu(z) = \nu'(z)$ for all $z$ in $(\mathbf{x} \cap
                \mathbf{y})
                \setminus (U(\nu) \cup U(\nu'))$, then we write $\nu \cup \nu'$ for
                the valuation on $\mathbf{x} \cup \mathbf{y}$ that maps every $z$ to $\nu(z)$ or
                $\nu'(z)$ if one is defined, and to ``?'' otherwise.

		When $\nu$ is a partial valuation of $\mathbf{x}$ with $\mathbf{x} \subseteq \mathbf{x^\prime}$ 
		and we define a partial valuation $\nu^\prime$ of
                $\mathbf{x^\prime}$ with $\nu^\prime \colonequals \nu$, we mean that $\nu^\prime$
		is defined like~$\nu$ on $\mathbf{x}$ and is undefined on $\mathbf{x^\prime} \setminus \mathbf{x}$.
	\end{definition}
	\begin{definition}
                \label{def:hom}
		Let $\mathbf{x}$ and $\mathbf{y}$ be two tuples of
                variables of same arity (note that some variables
                of~$\mathbf{x}$ may be repeated, and likewise for~$\mathbf{y}$).
                Let $\nu : \mathbf{x} \to \calD_{\ki}$
		be a total valuation of~$\mathbf{x}$. We define $\mathrm{Hom}_{\mathbf{y},\mathbf{x}}(\nu)$ to be 
                the (unique) homomorphism from the tuple
                $\mathbf{y}$ to the tuple $\nu(\mathbf{x})$, if
                such a homomorphism exists; otherwise,
                $\mathrm{Hom}_{\mathbf{y},\mathbf{x}}(\nu)$ is~$\mnull$.
	\end{definition}
	
  The rest of this section proves Theorem~\ref{thm:maintheorem}
    in two steps. First, we build a SATWA~$A'_P$ and we prove that $A'_P$~tests~$P$ on instances of treewidth $\leq \ki$; however,
    the construction of $A'_P$ that we present is not FPT-linear. Second, we
    explain how to modify the construction to construct an equivalent
    SATWA~$A_P$ while respecting the FPT-linear time bound.

  \subparagraph*{Construction of ${A_P^\prime}$.} We formally construct the SATWA $A'_P$
    by describing its states and transitions.
	First, for every extensional atom $S(\mathbf{x})$ appearing in the body
        of a rule of $P$ and for every partial valuation~$\nu$ of 
	$\mathbf{x}$, we introduce a state $q_{S(\mathbf{x})}^{\nu}$.
        For every node~$n$, we want
        $A_P^\prime$ to have a run starting at node~$n$ in state
        $q_{S(\mathbf{x})}^{\nu}$
        iff
        we can start at node~$n$, navigate the tree encoding 
        while building a total valuation $\nu'$ that extends $\nu$, and reach a
        node~$n'$ where $S(\nu'(\mathbf{x}))$ holds. However, remember that the
        same element name in the tree encoding may refer to different elements
        in the instance. Hence, we must ensure that the elements in the image
        of~$\nu$ still decode to the same element in~$n'$ as they did in~$n$. To
        ensure this, we forbid $A^\prime_P$ from leaving the occurrence
	subtree of the values in the image of~$\nu$, which we call the \emph{allowed subtree}.
        We now define the transitions
        needed to implement this.

	Let $(d,s) \in \Gamma_{\sigmae}^{\ki}$ be a symbol; we have the
        following transitions:

	\begin{itemize}
		\item If there is an $x_j \in \mathbf{x}$ such that $\nu(x_j)\neq\mathord{?}$ (i.e., $x_j$ is defined by $\nu$) and $\nu(x_j) \notin d$, then $\Delta(q_{S(\mathbf{x})}^{\nu},(d,s)) \colonequals \false$.
			This is to prevent the automaton from leaving the
                        allowed subtree.
		\item Else if $\nu$ is not total, then
		$\Delta(q_{S(\mathbf{x})}^{\nu},(d,s)) \colonequals
		q_{S(\mathbf{x})}^{\nu} \lor \bigvee\limits_{a \in d, x_j \in U(\nu)} q_{S(\mathbf{x})}^{\nu \cup \{x_j\mapsto a\}}$.
		That is, either we continue navigating in the same state (but remember that the automaton may move to any neighbor node), or we guess a value for some undefined variable.
              \item Else if $\nu$ is total but $s \neq S(\nu(\mathbf{x}))$,
                  then
		$\Delta(q_{S(\mathbf{x})}^{\nu},(d,s)) \colonequals q_{S(\mathbf{x})}^{\nu}$: if the fact $s$ of the node is not a match, then we continue searching. 
              \item Else, the only remaining possibility is that $\nu$ is total and that $s = S(\nu(\mathbf{x}))$, in which case we set  $\Delta(q_{S(\mathbf{x})}^{\nu},(d,s)) \colonequals \true$, i.e., we have found a node containing the desired fact.
	\end{itemize}
	
Let $r$ be a rule of $P$ and $\calA$ be a subset of the literals in the body of $r$. 
	We write $\vars(\calA)$ the set of variables that appear in some atom of $\calA$.
	For every rule $r$ of $P$, for every subset $\calA$ of the literals in the
	body of $r$, and for every partial valuation $\nu$ of $\vars(\calA)$ that defines all the variables that are also in the head of $r$,
	we introduce a state $q_r^{\nu,\calA}$. 
	This state is intended to verify the literals in $\calA$ with the partial valuation $\nu$.
	We will describe the transitions for those states later.
	
	For every intensional predicate $R(\mathbf{x})$ appearing in a rule of $P$ and partial valuation $\nu$ of $\mathbf{x}$,
	we have a state $q_{R(\mathbf{x})}^{\nu}$. This state is intended to
        verify $R(\mathbf{x})$ with a total extension of $\nu$.
	Let $(d,s) \in \Gamma_{\sigmae}^{\ki}$ be a symbol; we have
        the following transitions:

	\begin{itemize}
		\item If there is a $j$ such that $x_j$ is defined by $\nu$ and $\nu(x_j) \notin d$,
                  then $\Delta(q_{R(\mathbf{x})}^{\nu},(d,s)) \colonequals \false$.
                  This is again in order to prevent the automaton from leaving the
                  allowed subtree.
		\item Else if $\nu$ is not total, then
		$\Delta(q_{R(\mathbf{x})}^{\nu},(d,s)) \colonequals
		q_{R(\mathbf{x})}^{\nu} \lor \bigvee\limits_{a \in d, x_j \in U(\nu)} q_{R(\mathbf{x})}^{\nu \cup \{x_j\mapsto a\}}$.
		Again, either we continue navigating in the same state, or we guess a value for some undefined variable.
		\item Else (in this case $\nu$ is total), $\Delta(q_{R(\mathbf{x})}^{\nu},(d,s))$ is defined as the
                  disjunction of all the $q_r^{\nu^\prime,\calA}$ for
                  each rule~$r$ such that
		the head of~$r$ is $R(\mathbf{y})$, $\nu^\prime \colonequals
                \mathrm{Hom}_{\mathbf{y},\mathbf{x}}(\nu)$ is not~$\mnull$ and
		$\calA$ is the set of all literals in the body of $r$. Notice that because $\nu$ is total on~$\mathbf{x}$, $\nu'$ is also total on~$\mathbf{y}$.
		This transition simply means that we need to chose an
                appropriate rule to derive $R(\mathbf{x})$.
		We point out here that these transitions are the ones that make the construction quadratic instead of linear in $\card{P}$, but this will
		be handled later.
	\end{itemize}

	It is now time to describe transitions for the states
        $q_r^{\nu,\calA}$. Let $(d,s) \in \Gamma_{\sigmae}^{\ki}$, then:

	\begin{itemize}
		\item If there is a variable $z$ in $\calA$ such that $z$ is defined by $\nu$ and $\nu(z) \notin d$, then $\Delta(q_r^{\nu,\calA},(d,s)) \colonequals \false$.
			Again, this is to prevent the automaton from leaving the
                	allowed subtree.
		\item Else, if $\calA$ contains at least two literals, then $\Delta(q_r^{\nu,\calA},(d,s))$ is defined as a disjunction of $q_r^{\nu,\calA}$ and of $\Bigg[$~a disjunction over
		all the non-empty sets $\calA_1,\calA_2$ that partition $\calA$ of $\bigg[$a disjunction over all the total valuations $\nu^\prime$
                of $U(\nu)\cap \vars(\calA_1) \cap \vars(\calA_2)$ with values in $d$ of
		$\big[q_r^{\nu \cup \nu^\prime,\calA_1} \land q_r^{\nu \cup \nu^\prime,\calA_2}\big]\bigg]\Bigg]$.
		This transition means that we allow to partition in two
                the literals that need to be verified, and for each class of the partition we launch one run
		that will have to prove the literals of that class. 
		In doing so, we
                have to take care that the two runs will build valuations
                that are consistent.
		This is why we fix the value of the variables that they have in common with a total valuation $\nu'$.
		\item Else, if $\calA = \{T(\mathbf{y})\}$ where $T$ is an
                  extensional or an intensional relation, then $\Delta(q_r^{\nu,\calA},(d,s)) \colonequals q_{T(\mathbf{y})}^{\nu}$.
		\item Else, if $\calA = \{\lnot
                  R^\prime(\mathbf{y})\}$ where $R^\prime$ is an intensional
                  relation, and if $|\mathbf{y}| = 1$, and if $\nu(y)$ is undefined (where we write $y$ the one element of~$\mathbf{y}$), then 
			$\Delta(q_r^{\nu,\calA},(d,s)) \colonequals q_r^{\nu,\calA} \lor \bigvee_{a \in d} q_r^{\nu \cup \{y\mapsto a\},\calA}$.
		\item Else, if $\calA = \{\lnot R^\prime(\mathbf{y})\}$ where $R^\prime$ is an intensional relation, 
		then we will only define the transitions in the case where $\nu$ is total on~$\mathbf{y}$, in which case we set
		$\Delta(q_r^{\nu,\calA},(d,s)) \colonequals \lnot q_{R^\prime(\mathbf{y})}^{\nu}$.
                It is sufficient to define the transitions in this case, because $q_r^{\nu,\{\lnot R^\prime(\mathbf{y})\}}$ can only be reached
                if $\nu$ is total on~$\mathbf{y}$. Indeed, if $|\mathbf{y}| = 1$, then $\nu$ must be
                total on~$\mathbf{y}$ because we would have applied the previous bullet
                point otherwise. If $|\mathbf{y}| > 1$, the only way we could have reached the state $q_r^{\nu,\{\lnot R^\prime(\mathbf{y})\}}$
                is by a sequence of transitions involving $q_r^{\nu_0,\calA_0},\ldots,q_r^{\nu_m,\calA_m}$, where $\calA_0$ are all the literals in the body of $r$,
		$\calA_m$ is $\calA$ and $\nu_m$ is $\nu$.
		We can then see that, during the partitioning process,
		$\lnot R^\prime(\mathbf{y})$ must have been separated from all
                the (positive) atoms that formed its guard (recall the
                definition of \cfggnd), hence
		all its variables have been assigned a valuation.
	\end{itemize}

Finally, the initial state of $A_P^\prime$ is
$q_{\text{Goal}}^{\emptyset}$.

\medskip

We describe the stratification function $\strat^\prime$ of $A^\prime_P$. Let $\strat$ be that of $P$.
Observe that we can assume without loss of generality that the first stratum of $\strat$ (i.e., relations $R$ with $\strat(R)=1$) contains exactly all the extensional relations.
For any state $q$ of the form $q_{T(\mathbf{x})}^\nu$ or
$q_r^{\nu,\calA}$ with~$r$ having as head relation $T$ ($T$ begin extensional or intensional), then $\strat^\prime(q)$ is 
defined to be $\strat(T) -1$.
Notice that this definition ensures that only the states corresponding to extensional relations are in the first stratum of $\strat'$.
It is then clear from the transitions that $\strat^\prime$ is a valid stratification function for $A^\prime_P$.

\medskip

    As previously mentioned, the construction of $A'_P$ is not
    FPT-linear, but we will explain at the end of the proof how to construct in FPT-linear time a SATWA
    $A_P$ equivalent to~$A'_P$.

\subparagraph*{${A^\prime_P}$ tests ${P}$ on instances of treewidth
${\leq \ki}$.}
To show this claim, let $\la T, \lambda_E \ra$ be a $(\sigmae,\ki)$-tree encoding. 
Let $I$ be the instance obtained by decoding $\la T, \lambda_E \ra$;
we know that $I$ has treewidth $\leq \ki$ and that we can define from $\la T, \lambda_E \ra$ a tree 
decomposition $\la T, \dom \ra$ of~$I$ whose underlying tree is also~$T$.
For each node $n \in T$, let $\dec_n : \calD_{\ki} \to \dom(n)$ be the function that decodes the 
elements in node~$n$ of the encoding to the elements of~$I$ that are in the corresponding bag of the tree decomposition,
and let $\enc_n : \dom(n) \to \calD_{\ki}$ be the inverse function that encodes back the elements,
so that we have $\dec_n \circ \enc_n = \enc_n \circ \dec_n =
\mathrm{Id}$.
We will denote elements of $\calD_{\ki}$ by $a$ and elements in the domain of~$I$ by $c$.

We recall some properties of tree decompositions and tree encodings:
\begin{property}
\label{propTreeEnc1}
Let $n_1,n_2$ be nodes of $T$ and $a \in \calD_{\ki}$ be an (encoded)
  element that appears in the $\lambda_E$-image of $n_1$ and $n_2$. Then the element $a$ appears in the $\lambda_E$-image of every node in the path from $n_1$ to $n_2$ if and only if $\dec_{n_1}(a) = \dec_{n_2}(a)$.
\end{property}

\begin{property}
\label{propTreeEnc2}
	Let $n_1,n_2$ be nodes of $T$ and $c$ be an element of $I$ that appears
        in $\dom(n_1) \cap \dom(n_2)$.
	Then for every node $n'$ on the path from $n_1$ to $n_2$, $c$ is also in
        $\dom(n')$, and moreover $\enc_{n'}(c)=\enc_{n_1}(c)$.
\end{property}

We start with the following lemma about extensional facts:
\begin{lemma}
\label{lem:extensional}
  For every extensional relation $S$, node $n \in T$, variables $\mathbf{y}$, and partial valuation $\nu$ of $\mathbf{y}$, there exists a run $\rho$ of $A^\prime_P$ starting at node $n$
in state~$q_{S(\mathbf{y})}^{\nu}$ if and only if there exists
a fact $S(\mathbf{c})$ in~$I$ 
such that we have $\dec_n(\nu(y_j))=c_j$
  for every $y_j$ defined by $\nu$.
  We call this a match $\mathbf{c}$ of~$S(\mathbf{y})$ in~$I$ that is \emph{compatible with $\nu$ at node $n$}.
		
\end{lemma}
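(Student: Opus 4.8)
The plan is to prove the two directions separately, relying on one structural observation. Since every extensional relation sits in the first stratum of $P$, all states $q_{S(\mathbf{y})}^{\nu}$ lie in stratum $0$ of $A'_P$, and all of their transitions (the four cases above) are \emph{positive}: each is $\false$, $\true$, a single positive literal, or a disjunction of positive literals, with no negation. Hence every satisfying assignment has empty negative part, so a run starting in such a state has no negated leaves and makes no appeal to lower strata; its only leaves are case-4 nodes, where the transition is the tautology $\true$. Following any root-to-leaf branch of such a run therefore traces a walk $n = w_0, w_1, \ldots, w_\ell = n'$ in $T$ (with $w_{i+1} \in \neigh(w_i)$), along which the partial valuation only grows on undefined variables and ends as a total valuation $\nu'$ with $s_{n'} = S(\nu'(\mathbf{y}))$. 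The guard that keeps the run honest is case 1: a $\false$ transition has no satisfying assignment and so cannot occur in a run, which means that at every visited node each variable defined by the current valuation has its value in that node's domain.

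For the forward direction, I would pick any leaf of the given run, yielding a node $n'$ and total valuation $\nu'$ where case 4 fired; decoding $s_{n'} = S(\nu'(\mathbf{y}))$ gives a fact $S(\mathbf{c}) \in I$ with $c_j = \dec_{n'}(\nu'(y_j))$. It remains to check that $\dec_n(\nu(y_j)) = c_j$ for each $y_j$ defined by $\nu$. Fix such a $y_j$ and set $a_j = \nu(y_j)$; since defined variables are never reassigned, $\nu'(y_j) = a_j$ as well. By the case-1 guard, $a_j$ lies in the label domain of every node on the root-to-leaf walk; as $T$ is a tree, the unique simple path from $n$ to $n'$ is contained in that walk, so $a_j$ occurs in the label of every node along it. Property~\ref{propTreeEnc1} then yields $\dec_n(a_j) = \dec_{n'}(a_j)$, whence $\dec_n(\nu(y_j)) = \dec_{n'}(\nu'(y_j)) = c_j$, as required.

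For the backward direction, starting from a fact $S(\mathbf{c}) \in I$ compatible with $\nu$ at $n$, I would first locate the node $n'$ of the encoding whose fact codes $S(\mathbf{c})$, so that $\dec_{n'}(\enc_{n'}(c_j)) = c_j$ for all $j$. For each $y_j$ defined by $\nu$ we have $c_j \in \dom(n) \cap \dom(n')$ (compatibility gives $c_j \in \dom(n)$, and $n'$ carries the whole fact), so Property~\ref{propTreeEnc2} shows that $c_j$ lies in every node on the path from $n$ to $n'$ and that $\enc_{n'}(c_j) = \enc_n(c_j) = a_j$; in particular the defined values never leave the node domains along this path, so case 1 never blocks us there. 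I would then exhibit the run explicitly: walk from $n$ to $n'$ using the ``continue in the same state'' disjunct of case 2 (or case 3 if $\nu$ is already total), keeping $\nu$ fixed; then, still at $n'$, repeatedly apply the second disjunct of case 2 to guess $\nu'(y_k) \defeq \enc_{n'}(c_k) \in d_{n'}$ for each undefined $y_k$. The resulting total valuation satisfies $\nu'(y_j) = \enc_{n'}(c_j)$ for \emph{all} $j$, so $s_{n'} = S(\enc_{n'}(\mathbf{c})) = S(\nu'(\mathbf{y}))$ and case 4 fires, closing the run.

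The main obstacle is the bookkeeping that ties together the decoded elements at the two distinct nodes $n$ and $n'$: a priori the same encoded symbol may decode to different instance elements at different nodes, so the whole argument hinges on confining the run to a region where decoding is consistent. This is exactly what the case-1 $\false$-guard buys, and the two directions use the two halves of this consistency — Property~\ref{propTreeEnc1} (equal decodings along a connected occurrence subtree) for the forward direction, and Property~\ref{propTreeEnc2} (equal encodings along a shared-element path) for the backward direction. The only additional technical care is the elementary tree fact that a walk between two nodes of $T$ must traverse the simple path between them, which is what lets me pass from ``the run is confined to a walk'' to ``the path from $n$ to $n'$ lies in the allowed subtree''.
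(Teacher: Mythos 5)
Your proof is correct, and it rests on the same pillars as the paper's: the case-1 $\false$-guard confining the run to the allowed subtree, Property~\ref{propTreeEnc1} for the forward direction, and Property~\ref{propTreeEnc2} for the backward direction. The backward direction is essentially identical to the paper's construction (walk from $n$ to the node $n'$ encoding the fact, with Property~\ref{propTreeEnc2} guaranteeing the guard never fires along the path, then saturate the valuation at $n'$ and fire the tautological transition). The forward direction is where you genuinely depart: the paper proceeds by bottom-up induction on the run, showing that \emph{every} run node labeled $(q^{\nu'}_{S(\mathbf{y})}, m)$ admits a compatible match, applying Property~\ref{propTreeEnc1} between adjacent tree nodes at each inductive step; you instead pick a single root-to-leaf branch, extract the fact at its leaf, and transport compatibility from $n'$ back to $n$ in one application of Property~\ref{propTreeEnc1}, using the elementary fact that a walk between two nodes of a tree must cover the simple path between them. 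Your version is shorter and avoids the induction; the paper's node-by-node induction is the pattern that scales to the main equivalence lemma (Lemma~\ref{lem:correctness}), where runs genuinely branch into several states and a single-branch argument no longer suffices.

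One quibble: your opening claim that ``every satisfying assignment has empty negative part'' is not literally true under the paper's definition --- for a disjunction $q_1 \lor q_2$, the pair $(\{q_1\},\{q_2\})$ is a satisfying assignment. The conclusion you want (runs from these states contain no negated labels and only case-4 leaves) still holds, but for a different reason: negated labels can only be produced by rule 1 of the run definition, which requires a state of strictly lower stratum, and the states $q^{\nu}_{S(\mathbf{y})}$ all sit in the lowest stratum, so a satisfying assignment with nonempty negative part can never be completed into a run. This is a one-line repair and does not affect the rest of your argument.
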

\begin{proof}
  We prove each direction in turn.

  \subparagraph*{\qquad Forward direction.}
	Suppose there exists a run $\rho$ of $A^\prime_P$ starting at node $n$
in state~$q_{S(\mathbf{y})}^{\nu}$. First, notice that by design of the
  transitions starting in a state of that form, 
  states appearing in the labeling of the run can only be of the form
  $q_{S(\mathbf{y})}^{\nu\prime}$ for an extension $\nu'$ of~$\nu$.
We will show by induction on the run that for every node $\pi$ of the run labeled by $(q_{S(\mathbf{y})}^{\nu\prime},m)$, 	
there exists $\mathbf{c'}$ such that $S(\mathbf{c'}) \in I$ and
  $\mathbf{c'}$ is compatible with $\nu'$ at node $m$. This will conclude
  the proof of the forward part of the lemma, by taking $m=n$.

The base case is when $\pi$ is a leaf of $\rho$.
	The node $\pi$ is then labeled by $(q_{S(\mathbf{y})}^{\nu^\prime}, m)$ such that $\Delta(q_{S(\mathbf{y})}^{\nu^\prime},\lambda_E(m)) = \true$.
	Let $(d,s) = \lambda_E(m)$. 
	By construction of the automaton we have that $\nu^\prime$ is total and $s = S(\nu^\prime(\mathbf{y}))$.
	We take $\mathbf{c'}$ to be $\dec_m(\nu^\prime(\mathbf{y}))$, which satisfies the compatibility condition by definition
	and is such that $S(\mathbf{c'}) =
        S(\dec_m(\nu^\prime(\mathbf{y}))) = \dec_m(s) \in I$.

	When $\pi$ is an internal node of $\rho$, its label is written,
        again, as 
        $(q_{S(\mathbf{y})}^{\nu^\prime}, m)$.
        By definition of the
        transitions of the automaton, we have
	\[\Delta(q_{S(\mathbf{y})}^{\nu^\prime},(d,s)) =
	q_{S(\mathbf{y})}^{\nu^\prime} \lor \bigvee\limits_{a \in d, y_j \in U(\nu^\prime)} q_{S(\mathbf{y})}^{\nu^\prime \cup \{y_j\mapsto a\}}\]
        Hence, the node $\pi$ has at least one child $\pi'$, the second component
        of the label of $\pi'$ is some $m' \in \neigh(m)$, and we have two
        cases depending on the first component of its label (i.e., the state):

	\begin{itemize}
          \item $\pi'$ may be labeled by $(q_{S(\mathbf{y})}^{\nu^\prime}, m^\prime)$.
                Then by induction on the run there exists $\mathbf{c''}$ such that $S(\mathbf{c''}) \in I$ and
                $\mathbf{c''}$ is compatible with $\nu'$ at node $m'$.
                We take $\mathbf{c'}$ to be $\mathbf{c''}$, so
                that we only need to check the compatibility
                condition, i.e., that for every $y_j$ defined
                by $\nu'$, we have $\dec_m(\nu'(y_j)) = c_j = \dec_{m'}(\nu'(y_j))$.
                This is true by Property~\ref{propTreeEnc1}. Indeed, 
                for every $y_j$ defined by $\nu'$, we must have $\nu'(y_j) \in
                m'$, otherwise $\pi'$ would have a label that cannot occur in a
                run (because this would mean that we have escaped the allowed subtree).
        \item $\pi^\prime$ is labeled by $(q_{S(\mathbf{y})}^{\nu^\prime \cup \{y_j\mapsto a\}}, m^\prime)$
          for some $a \in d$ and for some $y_j \in U(\nu^\prime)$. 
                Then by induction on the run there exists $\mathbf{c''}$ such that $S(\mathbf{c''}) \in I$ and
                $\mathbf{c''}$ is compatible with $\nu^\prime \cup \{y_j\mapsto a\}$ at node $m'$.
                We take $\mathbf{c'}$ to be $\mathbf{c''}$, which again satisfies the compatibility condition thanks to Property~\ref{propTreeEnc1}.	
\end{itemize}

  \subparagraph*{\qquad Backward direction.}
Now, suppose that there exists $\mathbf{c}$ such that
$S(\mathbf{c}) \in I$ and $\mathbf{c}$ is compatible with $\nu$ at node~$n$.
The fact $S(\mathbf{c})$ is encoded somewhere in
   $\la T, \lambda_E \ra$, so there exists a node $m$ such that, letting $(d,s)$ be $\lambda_E(m)$, we have $\dec_m(s)=S(\mathbf{c})$.
Let $n = m_1, m_2,\ldots,m_p = m$ be the nodes on the path from $n$ to $m$, and $(d_i,s_i)$ be $\lambda_E(m_i)$ for $1 \leq i \leq p$.
By compatibility, for every $y_j$ defined by $\nu$ we have $\dec_n(\nu(y_j)) = c_j$. But $\dec_n(\nu(y_j)) \in \dom(n)$ and $c_j \in \dom(m)$ so by Property~\ref{propTreeEnc2}, for every $1 \leq i \leq p$ we have
$c_j \in \dom(m_i)$ and $\enc_{m_i}(c_j) =\enc_{n}(c_j) = \enc_{n}(\dec_n(\nu(y_j))) = \nu(y_j)$, so that $\nu(y_j) \in d_i$.
We can then construct a run $\rho$ starting at node $n$ in state
  $q_{S(\mathbf{y})}^{\nu}$ as follows. The root $\pi_1$ is labeled by $(q_{S(\mathbf{y})}^{\nu},n)$, and
for every $2 \leq i \leq p$, $\pi_i$ is the unique child of $\pi_{i-1}$ and is labeled by $(q_{S(\mathbf{y})}^{\nu}, m_i)$. 
This part is valid because we just proved that for every $i$, there is no $j$ such that $y_j$ is defined by $\nu$ and $\nu(y_j) \notin d_j$.
Now from $\pi_m$, we continue the run by staying at node $m$ and building up the
valuation, until we reach a total valuation $\nu_{\f}$ such that $\nu_\f(\mathbf{y}) = \enc_m(\mathbf{c})$.
Then we have $s = S(\nu_\f(\mathbf{y}))$ and the transition is $\true$, which
completes the definition of the run.
\end{proof}

The preceding lemma concerns the base case of extensional relations. We
now prove a similar \emph{equivalence lemma} for all relations (extensional or intensional).
This lemma allows us to conclude the correctness proof, by applying it to the
$\mathrm{Goal}()$ predicate and to the root of the tree-encoding.

\begin{lemma}
  \label{lem:correctness}
For every relation $R$, node $n \in T$ and partial valuation $\nu$ of $\mathbf{x}$, there exists a run $\rho$ of $A^\prime_P$ starting at node $n$
in state $q_{R(\mathbf{x})}^{\nu}$ if and only if there exists $\mathbf{c}$ such that $R(\mathbf{c}) \in P(I)$ and $\mathbf{c}$ is compatible
with $\nu$ at node $n$ (i.e., we have $\dec_n(\nu(x_j)) = c_j$ for every $x_j$ defined by $\nu$). 
\end{lemma}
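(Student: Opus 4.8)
The plan is to prove the equivalence by a \textbf{main induction on the stratum} $i = \strat(R)$, recalling that the automaton assigns $\strat^\prime(q_{R(\mathbf{x})}^{\nu}) = \strat(R) - 1$. The \textbf{base case} is when $R$ is extensional, i.e.\ $\strat(R) = 1$: there $R(\mathbf{c}) \in P(I)$ iff $R(\mathbf{c}) \in I$, so the statement is exactly Lemma~\ref{lem:extensional}, already established. For the \textbf{inductive step} I fix an intensional relation $R$ with $\strat(R) = i$ and assume the equivalence for every relation $T$ with $\strat(T) < i$. Note that the negative-literal transitions $\lnot q_{R^\prime(\mathbf{y})}^{\nu}$ only reach states of strictly lower stratum (by condition~(iii) of the stratification of $P$ and the definition of $\strat^\prime$), so the induction hypothesis covers all negative guards.

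Before treating $R$ itself, I would establish an auxiliary claim about the bookkeeping states $q_r^{\nu,\calA}$: a run starting at node $n$ in state $q_r^{\nu,\calA}$ exists iff $\nu$ extends, compatibly at $n$, to a valuation under which every literal of $\calA$ is satisfied in $P(I)$ (positive atoms mapping to facts of $P(I)$, negative intensional atoms to facts absent from $P(I)$). This is proved by a nested induction on $\card{\calA}$, following the transition structure: the partition transitions split $\calA$ into $\calA_1 \sqcup \calA_2$ and fix the shared variables via a common total valuation $\nu^\prime$, so the two subruns automatically agree and the assembled valuation is consistent. The singleton cases bottom out either in $q_{T(\mathbf{y})}^{\nu}$ (positive: extensional by Lemma~\ref{lem:extensional}, intensional of stratum $\leq i$ by the fixpoint sub-induction below) or in $\lnot q_{R^\prime(\mathbf{y})}^{\nu}$. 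The crucial point in the negative case is that, by Definition~\ref{def:CFG-GN}, the negated literal is clique-guarded; hence when the partitioning isolates $\lnot R^\prime(\mathbf{y})$ from its positive guard, all variables of $\mathbf{y}$ have already been valued, so $\nu$ is total on $\mathbf{y}$, the transition is well defined, and by the main induction (strictly lower stratum) the negated subrun exists iff $R^\prime(\dec_n(\nu(\mathbf{y}))) \notin P(I)$.

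With the auxiliary claim in hand, the two directions for $R$ follow. For the \textbf{forward direction} I argue by induction on the run: a run from $q_{R(\mathbf{x})}^{\nu}$ first navigates the allowed subtree to complete $\nu$ into a total $\nu^\prime$ (each guess staying in the subtree, so Property~\ref{propTreeEnc1} keeps the decoded values consistent), then fires a transition selecting a rule $r$ with head $R(\mathbf{y})$ and moving to $q_r^{\nu^{\prime\prime},\calA}$ with $\calA$ the whole body; the auxiliary claim then yields a witnessing body valuation, and applying the rule produces $R(\mathbf{c}) \in P(I)$ for $\mathbf{c} = \dec_n(\nu^\prime(\mathbf{x}))$. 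For the \textbf{backward direction} I use a sub-induction on the stage of the stratum-$i$ fixpoint at which $R(\mathbf{c})$ is first derived: a firing of a rule $r$ gives body values satisfying all literals; I invoke the auxiliary claim (same-stratum positives handled by a strictly earlier fixpoint stage, everything else by the lower-stratum main induction) to build the subrun, and prepend the navigation phase. Here the clique-frontier-guardedness of Definition~\ref{def:CFG} is essential: the head elements $\mathbf{c}$ pairwise co-occur in positive body facts, hence form a clique of $I$ and so appear together in a single bag of the tree decomposition; Property~\ref{propTreeEnc2} then lets the automaton reach that bag inside the allowed subtree and fix $\nu^\prime$ with $\nu^\prime(\mathbf{x}) = \enc_{n^\prime}(\mathbf{c})$.

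I expect the \textbf{main obstacle} to be the careful interleaving of the three inductions --- the outer one on strata, the fixpoint-stage induction within a stratum (needed because same-stratum relations may be mutually recursive), and the structural induction on runs and on $\card{\calA}$ --- while keeping the compatibility invariant $\dec_n(\nu(x_j)) = c_j$ synchronised across the tree via Properties~\ref{propTreeEnc1} and~\ref{propTreeEnc2}. The guardedness accounting is where the \cfggnd hypotheses are spent (positive clique-guards to locate $\mathbf{c}$ in one bag, negated clique-guards to force negative literals to be fully valued before evaluation), and verifying that the partition transitions never evaluate a negative literal prematurely is the most delicate bookkeeping step.
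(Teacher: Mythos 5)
Your proposal is correct and takes essentially the same route as the paper's own proof: an outer induction on strata with Lemma~\ref{lem:extensional} as base case, a forward direction by structural induction on the run, a backward direction by induction on the fixpoint stage within the stratum, with clique-guardedness forcing negated atoms to be fully valued before evaluation and clique-frontier-guardedness (via the clique-implies-bag property of tree decompositions) locating the head tuple in a single bag reachable inside the allowed subtree. Your packaging of the $q_r^{\nu,\calA}$ analysis as a standalone auxiliary claim is only a cosmetic difference, since---as you yourself note---it cannot be proved by induction on $\card{\calA}$ alone and must be interleaved with the run and fixpoint-stage inductions, which is exactly how the paper organizes the argument.
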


\begin{proof}
We will prove this equivalence by induction on the stratum $\strat(R)$ of
  the relation $R$. 
The base case ($\strat(R)=0$, so $R$ is an extensional relation) was shown in Lemma~\ref{lem:extensional}.
For the inductive case, where $R$ is an intensional relation, we prove each direction separately.

\subparagraph*{\qquad Forward direction.}
First, suppose that there exists a run $\rho$ of $A^\prime_P$ starting at node $n$
in state $q_{R(\mathbf{x})}^{\nu}$. We show by induction on the run (from bottom to top) that for every node $\pi$ of the run the following implications hold:

	\begin{compactenum}[(i)]
\item \label{case:positive-atom} 
  If $\pi$ is labeled with $(q_{R^\prime(\mathbf{y})}^{\nu^\prime}, m)$, then there exists $\mathbf{c}$ such that
		$R^\prime(\mathbf{c}) \in P(I)$ and $\mathbf{c}$ is
                compatible with $\nu'$ at node~$m$.
              \item \label{case:negative-atom}
                If $\pi$ is labeled with $\lnot (q_{R^\prime(\mathbf{y})}^{\nu^\prime}, m)$, then
		$R^\prime(\dec_m(\nu^\prime(\mathbf{y}))) \notin P(I)$ (remembering that in this case $\nu'$ must be total, thanks to the fact that negations are guarded in rule bodies).
              \item \label{case:rule} If $\pi$ is labeled with $(q_r^{\nu^\prime,\calA},m)$, then
		      there exists a mapping $\mu : \vars(\calA) \to \mathrm{Dom}(I)$
		      that is compatible with $\restr{\nu'}{\vars(\calA)}$ at node~$m$ and such that:
		\begin{itemize}
			\item For every positive literal $S(\mathbf{z})$ in $\calA$, then $S(\mu(\mathbf{z})) \in P(I)$.
			\item For every negative literal $\lnot S(\mathbf{z})$ in $\calA$, then $S(\mu(\mathbf{z})) \notin P(I)$.
		\end{itemize}
\end{compactenum}

The base case is when $\pi$ is a leaf. Notice that in this case, and by
  construction of $A^\prime_P$, the node $\pi$ cannot be labeled by states corresponding
  to rules of~$P$: indeed, there
are no transitions for these states leading to a tautology, and all
  transitions to such a state are from a state in the same stratum, so $\pi$
  could not be a leaf. Thus,
  we have three subcases:
\begin{itemize}
	\item $\pi$ may be labeled by
          $(q_{R^\prime(\mathbf{y})}^{\nu^\prime}, m)$, where $R^\prime$ is
		extensional. We must show~(\ref{case:positive-atom}), but this follows
          from Lemma~\ref{lem:extensional}.
	\item $\pi$ may be labeled by $(q_{R^\prime(\mathbf{y})}^{\nu^\prime},
          m)$, where $R^\prime$ is intensional
	and verifies $\strat(R^\prime) < i$.
	Again we
		need to show~(\ref{case:positive-atom}).
	By definition of the run $\rho$, this implies that there exists a
        run of~$A^\prime_P$ starting at $m$ in state
        $q_{R^\prime(\mathbf{y})}^{\nu'}$.
		But then~(\ref{case:positive-atom}) follows from the induction hypothesis on the strata
        (using the forward direction of the equivalence lemma).
	\item $\pi$ may be labeled by $\lnot
          (q_{R^\prime(\mathbf{y})}^{\nu^\prime}, m)$, where $R^\prime$ is intensional 
		and verifies $\strat(R^\prime) < i$. Observe that by construction of the automaton, $\nu^\prime$ is total (because negations are guarded in rule bodies). We
		need to show~(\ref{case:negative-atom}).
		By definition of the run $\rho$ there exists no run of
                $A^\prime_P$ starting at $m$ in state
                $q_{R^\prime(\mathbf{y})}^{\nu'}$. 
		Hence by induction on the strata
                we have (using the backward direction of the
                equivalence lemma) that
                $R^\prime(\dec_m(\nu^\prime(\mathbf{y}))) \notin P(I)$, which is
                what we needed to show.
\end{itemize}

For the induction case, where $\pi$ is an internal node, we let $(d,s)$ be $\lambda_E(m)$ in what follows, and we distinguish five subcases:
\begin{itemize}
\item $\pi$ may be labeled by $(q_{R^\prime(\mathbf{y})}^{\nu^\prime}, m)$ with $R^\prime$ intensional. 
	We need to prove~(\ref{case:positive-atom}). We distinguish two subsubcases:
  \begin{itemize}
	\item $\nu'$ is not total. In that case, given the definition of $\Delta(q_{R^\prime(\mathbf{y})}^{\nu^\prime},(d,s))$ and of the run,
	there exists a child $\pi^\prime$ of~$\pi$ labeled by
	$(q_{R^\prime(\mathbf{y})}^{\nu''},m')$, where $m^\prime \in \neigh(m)$ and $\nu''$ is either $\nu'$ or is $\nu' \cup \{x_j \mapsto a\}$ for some $x_j$ undefined by $\nu'$ and
	$a \in d$. Hence by induction on the run there exists $\mathbf{c'}$ such that $R'(\mathbf{c'}) \in P(I)$ and $\mathbf{c'}$ is compatible with $\nu''$ at node $m'$. 
	We then take $\mathbf{c}$
	to be $\mathbf{c'}$, and one can check that the compatibility condition holds.
	\item $\nu'$ is total. In that case, given the definition of $\Delta(q_{R^\prime(\mathbf{y})}^{\nu^\prime},(d,s))$ and of the run,
        there exists a child $\pi^\prime$ of~$\pi$ labeled by
        $(q_r^{\nu'',\calA},m')$, where $m^\prime \in \neigh(m)$, where $r$ is a
        rule with head $R^\prime(\mathbf{z})$, where
	$\nu'' = \mathrm{Hom}_{\mathbf{z},\mathbf{y}}(\nu')$ is a partial
        valuation which is not~$\mnull$, and where
	$\calA$ is the set of literals of $r$. 
        Then, by induction on the run, there exists
	a mapping $\mu : \vars(\calA) \to \mathrm{Dom}(I)$ that
		  verifies~(\ref{case:rule}).
	Thus by definition of the semantics of $P$ we have that
        $R'(\mu(\mathbf{z})) \in P(I)$, and we take $\mathbf{c}$
	to be $\mu(\mathbf{z})$. What is left to check is that the compatibility condition holds.
	We need to prove that $\dec_m(\nu'(\mathbf{y})) = \mathbf{c}$,
        i.e., that $\dec_m(\nu'(\mathbf{y})) = \mu(\mathbf{z})$.
	We know, by definition of $\mu$, that $\dec_{m'}(\nu''(\mathbf{z})) = \mu(\mathbf{z})$. 
	So our goal is to prove $\dec_m(\nu'(\mathbf{y})) =
        \dec_{m'}(\nu''(\mathbf{z}))$, i.e., by definition of $\nu''$
	we want $\dec_m(\nu'(\mathbf{y})) = \dec_{m'}(\mathrm{Hom}_{\mathbf{z},\mathbf{y}}(\nu')(\mathbf{z}))$.
        By definition of
        $\mathrm{Hom}_{\mathbf{z},\mathbf{y}}(\nu')$,
        we know that $\nu'(\mathbf{y}) =
        \mathrm{Hom}_{\mathbf{z},\mathbf{y}}(\nu')(\mathbf{z})$, and this
        implies the desired equality by applying
        Property~\ref{propTreeEnc1} to~$m$ and~$m'$.
  \end{itemize}

\item $\pi$ may be labeled by $(q_r^{\nu',\calA},m)$, where $\calA =
   \{\lnot R''(\mathbf{y})\}$, where $|\mathbf{y}| = 1$,
    and where, writing $y$ the one element of~$\mathbf{y}$, $y$ is undefined by $\nu'$.
		We need to prove~(\ref{case:rule}).
	By construction we have $\Delta(q_r^{\nu',\calA},(d,s)) = q_r^{\nu',\calA} \lor \bigvee_{a \in d} q_r^{\nu' \cup \{y\mapsto a\},\calA}$.
	So by definition of a run there is $m' \in \neigh(m)$ and a child $\pi'$ of $\pi$ such that $\pi'$ is labeled by $( q_r^{\nu',\calA},m')$ or by
	$( q_r^{\nu' \cup \{y\mapsto a\},\calA},m')$ for some $a\in d$. 
	In both cases it is easily seen that we can define an appropriate $\mu$ from the mapping~$\mu'$ that we obtain
        by induction on the run (more details are given in the next bullet point).
\item $\pi$ may be labeled by $(q_r^{\nu',\calA},m)$ with $\calA =
	\{R''(\mathbf{y})\}$.
	We need to
		prove~(\ref{case:rule}).
	By construction we have $\Delta(q_r^{\nu',\calA},(d,s)) = q_{R''(\mathbf{y})}^{\nu^\prime}$,
	so that by definition of the run there is $m' \in \neigh(m)$ and a child $\pi'$ of $\pi$ such that $\pi'$
	is labeled by $(q_{R''(\mathbf{y})}^{\nu^\prime},m')$.
	Thus by induction on the run there exists $\mathbf{c}$ such that
        $R''(\mathbf{c}) \in P(I)$ and $\mathbf{c}$ compatible with~$\nu'$ at
        node~$m'$. By Property~\ref{propTreeEnc1}, $\mathbf{c}$ is also
        compatible with~$\nu'$ at node~$m$.
	We define $\mu$ by $\mu(\mathbf{y}) \defeq \mathbf{c}$, which effectively defines it because in this case $\vars(\calA)=\mathbf{y}$, and this choice satisfies the required properties.
\item $\pi$ may be labeled by $(q_r^{\nu',\calA},m)$, with $\calA =
	\{\lnot R''(\mathbf{y})\}$ and $\nu'$ total on~$\mathbf{y}$.
	We again need to prove~(\ref{case:rule}).
	By construction we have $\Delta(q_r^{\nu',\calA},(d,s)) = \lnot q_{R''(\mathbf{y})}^{\nu^\prime}$ and then by definition of the automaton
	there exists a child $\pi'$ of $\pi$ labeled by $\lnot (q_{R''(\mathbf{y})}^{\nu^\prime},m)$ with $\strat(R'') < i$ and
	there exists no run starting at node $m$ in state $q_{R''(\mathbf{y})}^{\nu^\prime}$.
	So by using~(\ref{case:negative-atom}) of the induction on the strata we have $R''(\dec_m(\nu'(\mathbf{y})))
        \notin P(I)$.
	We define $\mu$ by $\mu(\mathbf{y}) = \dec_m(\nu'(\mathbf{y}))$, which effectively defines it because $\vars(\calA)=\mathbf{y}$, and the compatibility conditions are satisfied.

\item $\pi$ may be labeled by $(q_r^{\nu',\calA},m)$, with $\card{\calA}
	\geq 2$. We need to prove~(\ref{case:rule}).
	Given the definition of $\Delta(q_r^{\nu',\calA},(d,s))$ and by
        definition of the run,
	one of the following holds:
	\begin{itemize}
		\item There exists $m' \in \neigh(m)$ and a child $\pi'$ of $\pi$ such that $\pi'$ is labeled by $(q_r^{\nu',\calA},m')$.
			By induction there exists $\mu' : \vars(\calA)
			\to \mathrm{Dom}(I)$ satisfying~(\ref{case:rule})
                        for node $m'$. We can take $\mu$ to be $\mu'$, which satisfies the required properties.
		\item 
	There exist $(m_1, m_2) \in \neigh(m) \times \neigh(m)$ and $\pi_1, \pi_2$ children of $\pi$ and non-empty sets $\calA_1,\calA_2$ 
	that partition $\calA$ and a total valuation $\nu''$ of
            $\vars(\calA_1) \cap \vars(\calA_2)$ with values in $d$
	such that $\pi_1$ is labeled by $(q_r^{\nu' \cup \nu'',\calA_1},m_1)$ 
	and $\pi_2$ is labeled by $(q_r^{\nu' \cup \nu'',\calA_2},m_2)$.
	By induction there exists $\mu_1 : \vars(\calA_1) \to
			\mathrm{Dom}(I)$ and similarly $\mu_2$ that satisfy~(\ref{case:rule}).
	Thanks to the compatibility conditions for $\mu_1$ and $\mu_2$ and to Property~\ref{propTreeEnc1} applied to~$m_1$ and~$m_2$ via~$m$, we can define $\mu : \vars(\calA) \to \mathrm{Dom}(I)$
	with $\mu \defeq \mu_1 \cup \mu_2$. One can check that $\mu$ satisfies the required properties.
	\end{itemize}
\end{itemize}

Hence, the forward direction of our equivalence lemma is proven.

\subparagraph*{\qquad Backward direction.}
We now prove the backward direction of the induction on strata of
our main equivalence lemma (Lemma~\ref{lem:correctness}).
From the induction hypothesis on strata,
we know that, for every relation $R$ with $\strat(R) \leq i-1$, for every node $n \in T$ and partial valuation $\nu$ of $\mathbf{x}$, 
there exists a run $\rho$ of $A^\prime_P$ starting at node $n$
in state $q_{R(\mathbf{x})}^{\nu}$ if and only if
there exists $\mathbf{c}$ such that $R(\mathbf{c}) \in P(I)$ and $\mathbf{c}$ is compatible
with $\nu$ at node $n$. 
Let~$R$ be a relation with $\strat(R)=i$, let $n \in T$ be a node and let $\nu$ be a partial valuation of $\mathbf{x}$ such that
there exists $\mathbf{c}$ such that $R(\mathbf{c}) \in P(I)$ and $\mathbf{c}$ is compatible
with $\nu$ at node $n$. 
We need to show that there exists a run $\rho$ of $A^\prime_P$ starting at node $n$
in state $q_{R(\mathbf{x})}^{\nu}$.
We will prove this by induction on the smallest $j \in \mathbb{N}$ such
that $R(\mathbf{c}) \in \Xi^j_{P}(P_{i-1}(I))$, where
$\Xi^j_{P}$ is the $j$-th
application of the immediate consequence operator for the program~$P$ and $P_{i-1}$ is the restriction of
$P$ with only the rules up to strata~$i-1$.
The base case, when $j=0$, is in fact vacuous since 
$R(\mathbf{c}) \in \Xi^0_{P}(P_{i-1}(I))=P_{i-1}(I)$ implies
that $\strat(R)\leq i-1$, whereas we assumed $\strat(R)=i$.
For the inductive case ($j \geq 1$), we have $R(\mathbf{c})
\in \Xi_P^j(P_{i-1}(I))$ so by definition of the semantics of $P$, there is a rule
$r$ of the form $R(\mathbf{z}) \leftarrow L_1(\mathbf{y}_1) \ldots
L_t(\mathbf{y}_t)$ of $P$ and a mapping $\mu :
\mathbf{y}_1\cup\dots\cup\mathbf{y}_t\to \mathrm{Dom}(I)$ such that
$\mu(\mathbf{z}) = \mathbf{c}$ and, for every literal $L_l$ in the body of $r$:
\begin{itemize}
  \item If $L_l(\mathbf{y}_l) = R_l(\mathbf{y}_l)$ is a positive
    literal, then $R_l(\mu(\mathbf{y}_l)) \in
          \Xi_P^{j-1}(P_{i-1}(I))$
        \item If $L_l(\mathbf{y}_l) = \lnot R_l(\mathbf{y}_l)$ is a
          negative literal, then $R_l(\mu(\mathbf{y}_l)) \notin
          P_{i-1}(I)$
\end{itemize}

Now, the definition of clique-guardedness ensures that
each pair of elements of~$\mathbf{c}$ co-occurs in some fact of~$I$, i.e., 
$\mathbf{c}$ induces a clique in~$I$. This ensures that there is a bag of the
tree decomposition that contains all elements of~$\mathbf{c}$ (see
Lemma~2 of~\cite{BodlaenderK10}, Lemma~1
of~\cite{gavril1974intersection}), i.e., 
there exists a node $n'$ such that $\mathbf{c} \subseteq \dom(n')$.
Let $n = n_1, n_2,\ldots,n_p = n'$ be the nodes on the path from $n$ to $n'$, and $(d_i,s_i)$ be $\lambda_E(n_i)$ for $1 \leq i \leq p$.
By compatibility, for every $x_j$ defined by $\nu$ we have $\dec_n(\nu(x_j)) = c_j$. But $\dec_n(\nu(x_j)) \in \dom(n)$ and $c_j \in \dom(m)$ so by Property~\ref{propTreeEnc2}, for every $1 \leq i \leq p$ we have
$c_j \in \dom(m_i)$ and $\enc_{m_i}(c_j) =\enc_{n}(c_j) = \enc_{n}(\dec_n(\nu(x_j))) = \nu(x_j)$, so that $\nu(x_j) \in d_i$.
We can then start to construct the run $\rho$ starting at node $n$ in state
  $q_{R(\mathbf{x})}^{\nu}$ as follows. The root $\pi_1$ is labeled by $(q_{R(\mathbf{x})}^{\nu},n)$, and
for every $2 \leq i \leq p$, $\pi_i$ is the unique child of $\pi_{i-1}$ and is labeled by $(q_{R(\mathbf{x})}^{\nu}, m_i)$. 
This part is valid because we just proved that for every $i$, there is no $j$ such that $y_j$ is defined by $\nu$ and $\nu(y_j) \notin d_j$.
Now from $\pi_{n'}$, we continue the run by staying at node $n'$ and building up the
valuation, until we reach a total valuation $\nu'$ such that $\nu'(\mathbf{x}) = \enc_{n'}(\mathbf{c})$.
Hence we now only need to build a run $\rho'$ starting at node $n'$ in state
$q_{R(\mathbf{x})}^{\nu'}$.

To achieve our goal of building a run starting at node $n'$ in state  $q_{R(\mathbf{x})}^{\nu'}$, it suffices to construct a run starting at node $n'$ in state $q_r^{\nu'',\{L_1,\ldots,L_t\}}$, 
with $\nu'' = \mathrm{Hom}_{\mathbf{z},\mathbf{x}}(\nu')$.
The first step is to take care of the literals of the rule and to prove that:

	\begin{compactenum}[(i)]
\item If $L_l(\mathbf{y}_l) = R_l(\mathbf{y}_l)$ is a positive literal, then there exists a node $m_l$ and a total valuation $\nu_l$
		of $\mathbf{y}_l$ with $\dec_{m_l}(\nu_l(\mathbf{y}_l)) = \mu(\mathbf{y}_l)$ such that there exists a run $\rho_l$
		starting at node $m_l$ in state
                $q_{R_l(\mathbf{y}_l)}^{\nu_l}$.
              \item If $L_l(\mathbf{y}_l) = \lnot R_l(\mathbf{y}_l)$ is a negative literal, then
		there exists a node $m_l$ and a total valuation $\nu_l$
		of $\mathbf{y}_l$ with $\dec_{m_l}(\nu_l(\mathbf{y}_l)) = \mu(\mathbf{y}_l)$ such that there exists a run $\rho_l$
		starting at node $m_l$ in state
                $\lnot q_{R_l(\mathbf{y}_l)}^{\nu_l}$.
\end{compactenum}

We first prove (i).
We have $R_l(\mu(\mathbf{y}_l)) \in \Xi_P^{j-1}(P_{i-1}(I))$, and because $P$ is clique-frontier-guarded, there exists a node $m$ such that $\mu(\mathbf{y}_l) \subseteq m$.
We take $m_l$ to be $m$ and $\nu_l$ to be such that $\nu_l(\mathbf{y}_l) = \enc_{m_l}(\mu(\mathbf{y}_l))$. We then directly obtain (i) by induction hypothesis (on $j$).
We then prove (ii). Because the negative literals are guarded in rule bodies, there exists a node $m$ such that $\mu(\mathbf{y}_l) \subseteq m$.
We take $m_l$ to be $m$ and $\nu_l$ to be such that $\nu_l(\mathbf{y}_l) = \enc_{m_l}(\mu(\mathbf{y}_l))$.
We straightforwardly get (ii) using the induction on the strata of our equivalence
lemma.

The second step is to use the runs $\rho_l$ that we just constructed and to construct from them a run starting at node $n'$ in state $q_r^{\nu'',\{L_1,\ldots,L_t\}}$.
We describe in a high-level manner how we build the run.
Starting at node~$n$, we partition the literals to verify (i.e., the atoms of the
body of the rule that we are applying), in the following way:

\begin{itemize}
  \item We create one class in the partition for each literal $R_l$
    (which can be intentional or extensional)
    such that $m_l$ is $n'$, which we verify directly at the current node.
    Specifically, we handle these literals one by one, by splitting the
    remaining literals in two using the transition formula corresponding to the rule and by staying at node $n'$ and building the valuations according to 
$\dec_n(\mu)$.
    \item For the remaining literals, considering all neighbors of~$n'$ in the
      tree encoding, we split the literals into one class per neighbor $n''$,
      where each literal $L_l$ is mapped to the neighbor that allows us to reach
      its node~$m_l$. We ignore the empty classes. If there is only one class,
      i.e., we must go in the same direction to verify all facts, we simply go to
      the right neighbor~$n''$, remaining in the same state. If there are
      multiple classes, we partition the facts and verify each class on the
      correct neighbor.
      
      One must then argue that, when we do so, we can indeed
      choose the image by~$\nu''$ of all elements that were shared between
      literals in two different classes and were not yet defined in~$\nu''$. The
      reason why this is possible is because we are working on a tree encoding:
      if two facts of the body share a variable $x$, and the two facts will be
      witnessed in two different directions, then the variable $x$ must be mapped
      to the same element in the two direction (namely, $\mu(x)$), which implies that it must occur
      in the node where we split. Hence, we can indeed choose the image
      of~$x$ at the moment when we split. \qedhere
\end{itemize}
\end{proof}

\subparagraph*{FPT-linear time construction.}
Finally, we justify that we can construct in FPT-linear time the automaton $A_P$ which recognizes the same language as $A_P^\prime$.
The size of $\Gamma_{\sigmae}^{\ki}$ only depends on $\ki$ and on the
extensional signature, which are fixed. As the number of states is linear
in $\card{P}$, the number of transitions is linear in~$\card{P}$. 
Most of the transitions are of constant size, and in fact one can check that the only problematic transitions
are those for states of the form $q^\nu_{R(\textbf{x})}$ 
with $R$ intensional, specifically the second bullet point. Indeed, we have
defined a transition from $q^\nu_{R(\textbf{x})}$, for each valuation $\nu$ of a
rule body, to the $q_r^{\nu^\prime,\calA}$ for linearly many rules, so in
general there are quadratically many transitions.

However, it is easy to fix this problem: instead of having one state $q^\nu_{R(\textbf{x})}$
for every occurrence of an intensional predicate $R(\textbf{x})$ in a rule body
of~$P$ and total valuation $\nu$ of this rule body,
we can instead have a constant number of states $q_{R(\textbf{a})}$ for $\textbf{a} \in \calD_{\ki}^{\arity{R}}$. 
In other words, when we have decided to verify a single intensional atom in the
body of a rule, instead of remembering the entire valuation of the rule body (as
we remember $\nu$ in $q^\nu_{R(\textbf{x})}$), we can simply forget all other
variable values, and just remember the tuple which is the image of~$\textbf{x}$
by~$\nu$, as in~$q_{R(\textbf{a})}$. Remember that the number of such states is
only a function of $\kp$ and $\ki$, because bounding $\kp$ implies that we bound
the arity of~$P$, and thus the arity of intensional predicates.

We now redefine the transitions for those states :

	\begin{itemize}
		\item If there is a $j$ such that $a_j \notin d$,
                  then $\Delta(q_{R(\mathbf{a})},(d,s)) = \false$.
		\item Else, $\Delta(q_{R(\mathbf{a})},(d,s))$ is a
                  disjunction of all the $q_r^{\nu^\prime,\calA}$ for
                  each rule~$r$ such that
		  the head of~$r$ is $R(\mathbf{y})$, $\nu^\prime(\mathbf{y}) = \mathbf{a}$ and
		$\calA$ is the set of all literals in the body of $r$.
	\end{itemize}

        The key point is that a given $q_r^{\nu',\calA}$ will only appear
        in rules for states of the form $q_{R(\mathbf{a})}$ where $R$ is
        the predicate of the head of $r$, and there is a constant number
        of such states.
        
We also redefine the transitions that used these states:
\begin{itemize}
\item Else, if $\calA = \{R^\prime(\mathbf{y})\}$ with $R^\prime$ intensional, 
		then $\Delta(q_r^{\nu,\calA},(d,s)) = q_{R^\prime(\nu(\mathbf{y}))}$.
\item Else, if $\calA = \{\lnot R^\prime(\mathbf{y})\}$ with $R^\prime$ intensional, 
		then $\Delta(q_r^{\nu,\calA},(d,s)) = \lnot q_{R^\prime(\nu(\mathbf{y}))}$. 
\end{itemize}

$A_P$ recognizes the same language as $A_P'$. Indeed, consider a run of $A'_P$, and replace every state $q^\nu_{R(\textbf{x})}$ with $R$ intensional by the state 
$q_{R(\nu(\textbf{x}))}$: we obtain a run of $A_P$. 
Conversely, being given a run of $A_P$, observe that every state $q_{R(\textbf{a})}$ comes from a state~$q_r^{\nu,\{R(\mathbf{y})\}}$ with 
$\nu(\mathbf{y}) = \mathbf{a}$.
We can then replace~$q_{R(\textbf{a})}$ by the state $q^\nu_{R(\textbf{x})}$ to obtain a run of~$A'_P$.

\subsection{Managing Unguarded Negations}
\label{sec:managing-unguarded}

We now explain how the translation can be extended to the full CFG-Datalog fragment.
We recall that the difference with \cfggnd is that negative
literals in rule bodies no longer need to be clique-guarded.
Remember that clique-frontier-guardedness was used 
in the translation of \cfggnd
to ensure the following property:
when the automaton is verifying a rule $r \defeq R(\mathbf{z}) \leftarrow L_1(\mathbf{y}_1) \ldots
L_t(\mathbf{y}_t)$ at some node $n$, i.e., when it is in a state $q_r^{\nu,\calA}$ at node $n$ 
for some subset $\calA$ of literals of the body of~$r$ and partial
valuation~$\nu$ of the variables in~$\vars(\calA)$, 
then, for each literal $L_l(\mathbf{y}_l)$ for $1 \leq l \leq t$,
the images of $\mathbf{y}_l$ all appear together in a bag.
More formally,
writing $\vars(r)$ for the variables of the body of~$r$,
let $\mu : \vars(r) \to \dom(I)$ be a mapping
with $\mu(\mathbf{z}) = \dec_n(\nu(\mathbf{z}))$ that witnesses that
$R(\mu(\mathbf{z})) \in P(i)$:
that is, if $L_l(\mathbf{y}_l)$ is a positive literal $S_l(\mathbf{y}_l)$
then we have $S_l(\mu(\mathbf{y_l})) \in P(i)$ 
and if 
$L(\mathbf{y}_l)$ is a negative literal $\lnot S_l(\mathbf{y}_l)$
then we have
$S_l(\mu(\mathbf{y}_l)) \notin P(I)$.
In this case, we know that each $\mu(\mathbf{y}_l)$ must be contained in a bag of the tree decomposition.

This property is still true in CFG-Datalog when $L(\mathbf{y}_l)$ is a positive
literal $S_l(\mathbf{y}_l)$. 
Indeed, if $S$ is an extensional relation then the fact $S(\mu(\mathbf{y}_l))$ is encoded somewhere in the tree encoding, hence $\mu(\mathbf{y}_l)$ is contained in a bag of the tree decomposition. 
If $S$ is an intensional predicate then, because $P$ is clique-frontier-guarded, $\mu(\mathbf{y}_l)$ is also contained in a bag. 
However,
when $L(\mathbf{y}_l)$ is a negative literal $\lnot S_l(\mathbf{y}_l)$,
it is now possible that $\mu(\mathbf{y}_l)$ is not contained in any bag of the tree decomposition. 
This can be equivalently rephrased as follows:
there are $y_i, y_j \in \mathbf{y}_l$ with $y_i \neq y_j$
such that the occurrence subtrees of $\mu(y_i)$ and that of $\mu(y_j)$ are disjoint.
If this happens, the automaton that we construct in the previous proof no longer
works: it cannot assign the correct values to $y_i$ and $y_j$, because once a
value is assigned to~$y_i$, the automaton cannot leave the occurrence subtree of
$\mu(y_i)$ until a value is also assigned to~$y_j$, which is not possible if the
occurrence subtrees are disjoint.

To circumvent this problem,
we will first rewrite the CFG Datalog program $P$ into another program (still of
bounded body size)
which intuitively distinguishes between two kinds of negations: the negative
atoms that will hold as in the case of clique-guarded negations in
\cfggnd,
and the ones that will hold because two variables have disjoint occurrence
subtrees.
First, we create a vacuous unary fact $\Adom$, we modify the input
instance and tree encoding in linear time to add the fact $\Adom$ for every
element~$a$ in the active domain, and we modify $P$ in linear time: for each rule $r$, for each
variable $x$ in the body of~$r$, we add the fact $\Adom(x)$. This ensures
that each variable of rule bodies occurs in at least one positive fact.

Second, we rewrite $P$ to a different program $P'$. 
Let $r$ be a rule of $P$, and let $\calN$ be the set of negative atoms in the body of~$r$.
Let $\calN_{\text{G}} \cup \calN_{\neg \text{G}}$ be a partition of~$\calN$ (where the classes in the partition may be empty),
intuitively distinguishing the guarded and unguarded negations.
For every atom of $\calN_{\neg \text{G}}$, we nondeterministically choose a pair $(y_i, y_j)$ of
distinct variables of this atom, and consider the undirected graph $\calG$ formed by the edges $\{y_i, y_j\}$
(we may choose the same edge for two different atoms). 
The graph $\calG$
intuitively describes the variables that must be mapped to elements having
disjoint occurrence subtrees in the tree encoding: if there is an edge between
two variables in~$\calG$, then they must be mapped to
two elements whose subtrees of occurrences do not intersect.
For each rule $r$ of~$P$, for each choice of $\calN_{\text{G}} \cup
\calN_{\neg \text{G}}$ and~$\calG$, we create a rule $r_{\calN_{\text{G}},
\calN_{\neg \text{G}}, \calG}$ defined as follows: it has the same head as $r$,
and its body contains the positive atoms of the body of~$r$ (including the
$\Adom$-facts) and the negative atoms of $\calN_{\text{G}}$. We call $\calG$ the
\emph{unguardedness graph of $r_{\calN_{\text{G}}, \calN_{\neg \text{G}}, \calG}$}.
Note that the semantics of~$P'$ will defined relative to the instance and also relative to the tree
encoding of the instance that we consider: specifically, a rule can fire if there is a
valuation that satisfies it in the sense of \cfggnd (i.e., for all
atoms, including negative atoms, all variables must be mapped to elements that
occur together in some node), and which further respects the unguardedness
graph, i.e., for any two variables $x \neq y$ with an edge in the graph, the
elements to which they are mapped must have disjoint occurrence subtrees in the
tree encoding. Note that we can compute $P'$ from~$P$ in FPT-linear time
parameterized by the body size, because the number of rules created in~$P'$ for
each rule of~$P$ can be bounded by the body size; further, the bound on the body
size of~$P'$ only depends on that of~$P$, specifically it only increases by the
addition of the atoms $\Adom(x)$.

The translation of~$P'$ can now be done as in the case of
\cfggnd that we presented before;
the only thing to explain 
is how the automaton can ensure that the semantics of the unguardedness graph is satisfied.
To this end, we will first make two general changes to the way that our
automaton is defined, and then present the specific tweaks to handle the
unguardedness graph. The two general changes can already be applied to the
original automaton construction that we presented, without changing its
semantics.

The first change is that, instead of isotropic automata, we will use automata
that take the directions of the tree into account, as in \cite{cachat2002two}
for example (with stratified negation as we do for SATWAs). Specifically, we
change the definition of the transition function. Remember that a SATWA has a
transition function $\Delta: \calQ \times \Gamma \to \calB(\calQ)$
that maps each pair of a state and a label to a propositional formula on 
states of~$\calQ$. To handle directions, $\Delta$ will instead map to a
propositional formula on pairs of states of~$\calQ$ and 
of directions in the tree, in $\{\bullet, \uparrow, \leftarrow, \rightarrow\}$.
The intuition is that the corresponding state is only evaluated on the tree node 
in the specified direction (rather than on any arbitrary neighbor). We will use
these directions to ensure that, while the automaton considers a rule
application and navigates to find the atoms used in the rule body, then it never
visits the same node twice. Specifically, consider two variables $y_i$ and~$y_j$
that are connected by an edge in the unguardedness graph,
and imagine that we first assign a value $a$ to $y_i$ in some
node $n$. To assign a value to $y_j$, we must leave the occurrence subtree
of the current $a$ in the tree encoding, and must choose a value 
outside of this occurrence subtree.
Thus, the automaton must ``remember'' when
it has left the subtree of occurrences of~$a$, so that it can choose a value
for~$y_j$.
However,
an isotropic automaton cannot
``remember'' that it has left the subtree of occurrences of~$a$, because it can
always come back on a previously visited node, by going back in the direction
from which it came.
However, using SATWAs with directions, and changing the automata states to
store the last direction that was followed, we can ensure that the automaton
cannot come back to a previously visited node (while locating the facts that
correspond to the body of a 
rule application). This ensures that, once the
automaton has left the subtree of occurrences of an element, then it cannot come
back in this subtree again while it is considering the same rule application.
Hence, the first general change is that we use SATWAs with directions, and we
use the directions to ensure that the automaton does not go back to a previously
visited node while considering the same rule application.
In fact, this first general change does not modify the semantics of the
automaton: indeed, in the case of isotropic automata, we did not really need the
ability to go back to a previously visited node when verifying a rule
application.

The second general change that we perform on the automaton is that, when
guessing a value for an undefined variable, then we only allow the guess to
happen as early as possible. In other words, suppose the automaton is at a node
$n$ in the tree encoding while it was previously at node $n'$. Then it can
assign a value $a \in n$ to some variable $y$ only if $a$ was not in $n'$, i.e.,
$a$ has just been introduced in~$n$. Obviously an
automaton can remember which elements have been introduced in this sense, and
which elements have not. This change can be performed on our existing
construction without changing the semantics of the automaton, by only
considering runs where the automaton assigns values to variables at the moment
when it enters the occurrence subtree of this element.

Having done these general changes, we will simply reuse the previous automaton
construction (not taking the unguardedness graph $\calG$ into
account) on the program $P'$, and make two tweaks to ensure
that the unguardedness graph is
respected. The first tweak is that,
in states of the form $q_r^{\nu,\calA}$, the automaton will also
remember, for each undefined variable~$x$ (i.e., $x$ is in the domain of~$\nu$
but $\nu(x)$ is still undefined),
a set $\beta(x)$ of \emph{blocking elements} for~$x$, which are
elements of the tree encoding. While $\beta(x)$ is non-empty, then the automaton
is not allowed to guess a value for~$x$, intuitively because we know that it is
still in the occurrence subtree of some previously mapped variable $y \in
\beta(x)$ which is adjacent to~$x$
in~$\calG$. Here is how these sets of blocking elements are computed and updated:

\begin{itemize}
  \item When the automaton starts to consider the application of a rule $r$, then
    $\beta(x) \colonequals \emptyset$ for each variable $x$ of the body of~$r$.
  \item When the automaton guesses a value $a$ for a variable~$x$, then for
    every undefined variable $y$,
    if $x$ and $y$ are adjacent in~$\calG$, then we set $\beta(y)
    \colonequals \beta(y) \cup \{a\}$, intuitively adding $a$ to the set of
    blocking elements for~$y$. This intuitively means that the automaton is not
    allowed to guess a value for~$y$ until it has exited the subtree of occurrences
    of~$a$. Note that, if the automaton wishes to guess values for multiple
    variables while visiting one node (in particular when partitioning the
    literals of~$\calA$), then the blocking sets are updated between
    each guess: this implies in particular that, if there is an edge in~$\calG$
    between two variables $x$ and $y$, then the automaton can never guess the value
    for $x$ and for~$y$ at the same node.
  \item When the automaton navigates to a new node $n'$ of the tree encoding, then for
    every variable $x$ in the domain of~$\nu$ which does not have an image yet,
    we set $\beta(x) \colonequals \beta(x) \cap n'$. Intuitively, when an element
    $a$ was blocking for $x$ but disappears from the current node, then $a$ is no
    longer blocking. Note that $\beta(x)$ may then become empty, meaning that
    the automaton is now free to guess a value for~$x$.
\end{itemize}

The blocking sets ensure that, when the automaton guesses a value 
for~$x$, then this value is guaranteed not to occur in the occurrence subtree of
variables that are adjacent to~$x$ in~$\calG$ and have been guessed
before. This also relies on the second general change above: we can only guess
values for variables as early as possible, i.e., we can only use elements in
guesses when we have 
just entered their occurrence subtree, so when $\beta(x)$ becomes empty then the
possible guesses for~$x$ do not include any element whose occurrence subtree
intersects that of~$\nu(y)$ for any variable $y$ adjacent to $x$ in $\calG$.

The second tweak is that, when we partition the set of literals to be verified,
then we use the directionality of the automaton to ensure that the remaining
literals are split across the various directions (having at most one run for every direction). 
For instance, 
considering the rule body $\{\Adom(x), \Adom(y)\}$ and the unguardedness graph $\calG$
having an edge between~$x$ and~$y$, the automaton may decide at one node to
partition $\calA = \{\Adom(x), \Adom(y)\}$ into
$\{\Adom(x)\}$ and $\{\Adom(y)\}$, and these two subsets of facts
will be verified by two independent runs: these two runs are required to go in different directions of
the tree. This will ensure that, even though the edge $\{x, y\}$ of~$\calG$ will
not be considered explicitly by either of these runs (because the domain of
their valuations will be $\{x\}$ and $\{y\}$ respectively), it will still be the
case that $x$ and $y$ will be mapped to elements whose occurrence subtrees do
not intersect: this is again using the fact that we map elements as early as
possible.

We now summarize how the modified construction works:

\begin{compactenum}[(i)]
	\item Assume that the automaton $A$ is 
          at some node $n$ in state $q_{R(\mathbf{x})}^{\nu''}$, with $\nu''$
          being total in $\mathbf{x}$.
		\label{itm:item1}
	\item At node $n$, the automaton chooses a rule $r': R(\mathbf{z}) \leftarrow L_1(\mathbf{y}_1) \ldots L_t(\mathbf{y}_t)$ of $P'$ and goes to state $q_{r'}^{\nu,\calA}$
		where $\nu \defeq \mathrm{Hom}_{\mathbf{z},\mathbf{x}}(\nu'')$ and $\calA$ is the set of literals in the body of $r'$.
                That is, it simply chooses a rule to derive $R(\nu''(\mathbf{x}))$.
                This amounts to choosing a rule of the original program $P$, and
                choosing which negative atoms will be guarded (i.e., mapped to
                variables that occur together in some node of the tree
                encoding), and choosing the unguardedness graph $\calG$ in a way to
                ensure that each unguarded negated atom has a pair of variables that forms an edge of $G$. 
		The blocking set
                $\beta(x)$ of each variable~$x$ in the rule body is initialized
                to the empty set, and whenever the automaton will move to a
                different node $n'$ then each element $a$ that is no longer
                present in~$n'$ will be removed from $\beta(x)$ for each variable
                $x$, formally, $\beta(x) \colonequals \beta(x) \cap n'$.
		\label{itm:item2}
	\item From now on, assume the automaton $A$ always remembers (stores in its state)
          which elements $N$ have just been introduced in the current node of
          the tree encoding.
          That is, $N$ is initialized with the elements in $n$, and when $A$ 
		goes from some node $n'$ to node~$n''$, $N$ becomes $n'' \setminus n'$. 
		When guessing values for variables, the automaton will only use
                values in~$N$, so as to respect
                the condition that we guess the value of variables as early as
                possible. This is how we implement our second general change.
		\label{itm:item3}
	\item While staying at node $n$, the automaton chooses 
          some undefined variables $x$ (i.e., variables in the domain of~$\nu$ that do not have a value
          yet), and guesses some values
          in~$N$ for them, one after another. For each such variable $x$, we first verify that
          $\beta(x) = \emptyset$ (otherwise we fail), we set $\nu(x) \colonequals
          a$ where $a$ is the guessed value, and then,
          for every edge $\{x, y\}$ in~$\calG$ such that $y$ is an undefined
          variable (i.e., it is in the
          domain of~$\nu$ but does not have an image by~$\nu$ yet), we set
          $\beta(y) \colonequals \beta(y) \cup \{a\}$, ensuring that no value
          will be guessed for~$y$ until the automaton has left the subtree of
          occurrences for~$a$.
		We call $\nu'$ the resulting new valuation.
		\label{itm:item6}
	\item While staying at node $n$, the automaton
          guesses a partition of $\calA$ as $\Pdir = (\calA^{\bullet}, \calA^{\uparrow},
          \calA^{\leftarrow},\calA^{\rightarrow})$ to decide in which direction
          each one of the remaining facts is sent. 
          Of course, if there is a direction for which $n$ has no neighbor (e.g.,
          $\leftarrow$ and $\rightarrow$ if $n$ is a leaf, or $\uparrow$ if $n$
          is the root), then $\calA^d$ in the corresponding direction~$d$ must
          be empty. \label{itm:item7}
        \item If $\calA^\bullet$ is the
          only class that is not empty, meaning that all remaining facts will be
          witnessed at the current node, then go to step~\ref{itm:item12}.
	\item While staying at node $n$, the automaton checks that each variable
          $x$
          that appears in two different classes of $\Pdir$ has been assigned a
          value, i.e., $\nu(x)$ is defined; otherwise, the automaton fails.
		This is to ensure that the partitioning is consistent (i.e., that a variable $x$ will not be assigned different values in different runs). 
		The automaton
          also checks that $\nu(x)$ is defined for each variable occurring in
          $\calA^{\bullet}$, that is, we assume without loss of generality that
          atoms that will be verified at the current node have all their variables
          already mapped. This ensures that the undefined variables 
          are partitioned between directions in $\{\uparrow, \leftarrow,
          \rightarrow\}$.
	\item The automaton then launches a run $q_{r'}^{\nu', \calA^d}$ for each direction $d \in \{\bullet, \uparrow, \leftarrow, \rightarrow\}$ at the corresponding node
		($n$ for $\bullet$, the parent of $n$ for $\uparrow$, the left
                of right child of $n$ for $\leftarrow$ or $\rightarrow$).
	\item For each of these runs, we update the value of $N$ and of the blocking sets, and we go back to step~\ref{itm:item6}, except
          that now $A$ remembers the direction from which it comes, and does not go back to the
          previously visited node. For example if the automaton goes from some node $n$ to the parent $n'$ of $n$ such
		that $n$ is the left child of $n'$, then in the partition that
                will be guessed at step~\ref{itm:item7} we will have
                $\calA^{\leftarrow} = \emptyset$. Further, in each of these
                runs, of course, the automaton remembers the values of the
                blocking sets $\beta(x)$ for each undefined variable $x$.
		\label{itm:item11}
	\item Check that all the variables have been assigned. Launch positive states for each positive intensional literal and negative states for 
		each negative literal, i.e., start from step~\ref{itm:item1}: in
                this case, when the automaton verifies a different rule
                application, then of course it forgets the values of the
                blocking sets, and forgets the previous direction (i.e., it can
                again visit the entire tree from the node where it starts).
                For each positive extensional literal, simply
		check that the atom is indeed encoded in the current node of the tree encoding.
		\label{itm:item12}
	
\end{compactenum}

All these modifications can be implemented in FPT-linear time provided that the
arity of~$P$ is bounded, which is the case because the body size of $P$ is bounded.
Moreover, as we pointed out after the proof of Theorem~\ref{thm:satwaprov}, the construction of provenance cycluits can easily be modified to work
for stratified alternating two way automata with directions,
so that all our results about CFG-Datalog (evaluation and provenance cycluit
computation in FPT linear time) still hold on this modified automaton.

This finishes the proof of translation.

\section{Conclusion}
We introduced CFG-Datalog, a stratified Datalog fragment whose
evaluation has FPT-bilinear complexity when parameterized by instance
treewidth and program body size. The complexity result is obtained via 
translation to alternating two-way automata, and via the computation of a
provenance representation
in the form of stratified cycluits, a generalisation of
provenance circuits that we hope to be of independent interest.

A careful inspection of the proofs shows that our results can be used to derive PTIME combined complexity results
on arbitrary instances, e.g., XP membership when parametrizing only by
program size; this recaptures in particular the tractability of some
query languages on arbitrary instances, such as $\alpha$-acyclic queries
or SAC2RPQs.
We also intend to extend our
cycluit framework to support more expressive 
provenance semirings than Boolean provenance (e.g., formal power series~\cite{green2007provenance}).

We leave open the question of practical implementation of the methods we
developed, but we have good hopes that this approach can give efficient
results in practice, in part from our experience with a preliminary
provenance prototype~\cite{monet2016probabilistic}. Optimization is possible, for
instance by not representing the full automata but building them on the
fly when needed in query evaluation. Another promising direction
supported by our experience, to deal with real-world datasets that are
not treelike, is to use partial tree
decompositions~\cite{maniu2017indexing}.

\subparagraph*{Acknowledgements.}
This work was partly funded by the Télécom ParisTech Research Chair on Big Data
and Market Insights.

\bibliographystyle{abbrv}
\bibliography{main}

\begin{thebibliography}{10}

\bibitem{abiteboul1995foundations}
S.~Abiteboul, R.~Hull, and V.~Vianu.
\newblock {\em \href{http://webdam.inria.fr/Alice/pdfs/all.pdf}{Foundations of
  databases}}.
\newblock Addison-Wesley, 1995.

\bibitem{alon1997finding}
N.~Alon, R.~Yuster, and U.~Zwick.
\newblock \href{http://www.tau.ac.il/~nogaa/PDFS/ayz97.pdf}{Finding and
  counting given length cycles}.
\newblock {\em Algorithmica}, 17(3), 1997.

\bibitem{amarilli2016leveraging}
A.~Amarilli.
\newblock {\em \href{https://tel.archives-ouvertes.fr/tel-01345836}{Leveraging
  the structure of uncertain data}}.
\newblock PhD thesis, T{\'e}l{\'e}com ParisTech, 2016.

\bibitem{amarilli2017combined}
A.~Amarilli, P.~Bourhis, M.~Monet, and P.~Senellart.
\newblock \href{http://drops.dagstuhl.de/opus/volltexte/2017/7051/}{Combined
  tractability of query evaluation via tree automata and cycluits}.
\newblock In {\em {ICDT}}, 2017.

\bibitem{amarilli2017combinedb}
A.~Amarilli, P.~Bourhis, M.~Monet, and P.~Senellart.
\newblock \href{https://arxiv.org/abs/1612.04203}{Combined tractability of
  query evaluation via tree automata and cycluits (Extended version)}.
\newblock {\em ArXiv e-prints}, (arXiv:1612.04203), 2017.
\newblock Extended version of~\cite{amarilli2017combined}.

\bibitem{amarilli2015provenance}
A.~Amarilli, P.~Bourhis, and P.~Senellart.
\newblock \href{https://arxiv.org/abs/1511.08723}{Provenance circuits for trees
  and treelike instances}.
\newblock In {\em ICALP}, 2015.

\bibitem{amarilli2017conjunctive}
A.~Amarilli, M.~Monet, and P.~Senellart.
\newblock \href{https://arxiv.org/abs/1703.03201}{Conjunctive queries on
  probabilistic graphs\string: Combined complexity}.
\newblock In {\em PODS}, 2017.

\bibitem{barany2012queries}
V.~B{\'{a}}r{\'{a}}ny, B.~ten Cate, and M.~Otto.
\newblock
  \href{http://vldb.org/pvldb/vol5/p1328_vincebarany_vldb2012.pdf}{Queries with
  guarded negation}.
\newblock {\em {PVLDB}}, 5(11), 2012.

\bibitem{barany2015guarded}
V.~B{\'{a}}r{\'{a}}ny, B.~ten Cate, and L.~Segoufin.
\newblock \href{https://hal.inria.fr/hal-01184763}{Guarded negation}.
\newblock {\em J. {ACM}}, 62(3), 2015.

\bibitem{barcelo2013querying}
P.~Barcel{\'o}.
\newblock
  \href{https://users.dcc.uchile.cl/~pbarcelo/pods001i-barcelo.pdf}{Querying
  graph databases}.
\newblock In {\em {PODS}}, 2013.

\bibitem{barcelo2014does}
P.~Barcel{\'o}, M.~Romero, and M.~Y. Vardi.
\newblock
  \href{http://citeseerx.ist.psu.edu/viewdoc/download?doi=10.1.1.715.6869&rep=rep1&type=pdf}{Does
  query evaluation tractability help query containment?}
\newblock In {\em PODS}, 2014.

\bibitem{unpublishedbenediktmonadic}
M.~Benedikt, P.~Bourhis, G.~Gottlob, and P.~Senellart.
\newblock Monadic datalog and limited access containment.
\newblock Unpublished, 2016.

\bibitem{benedikt2012monadic}
M.~Benedikt, P.~Bourhis, and P.~Senellart.
\newblock
  \href{http://pierre.senellart.com/publications/benedikt2012monadic.pdf}{Monadic
  datalog containment}.
\newblock In {\em ICALP}, 2012.

\bibitem{benedikt2016step}
M.~Benedikt, P.~Bourhis, and M.~{Vanden Boom}.
\newblock
  \href{https://www.cs.ox.ac.uk/people/michael.vandenboom/papers/LICS16-gnfpup-long.pdf}{A
  step up in expressiveness of decidable fixpoint logics}.
\newblock In {\em LICS}, 2016.

\bibitem{benedikt2010impact}
M.~Benedikt and G.~Gottlob.
\newblock \href{http://www.vldb.org/pvldb/vldb2010/pvldb_vol3/R26.pdf}{The
  impact of virtual views on containment}.
\newblock {\em PVLDB}, 3(1-2), 2010.

\bibitem{benedikt2014effective}
M.~Benedikt, B.~ten Cate, and M.~{Vanden Boom}.
\newblock
  \href{https://www.cs.ox.ac.uk/people/michael.vandenboom/papers/CSL-LICS14-gnfi-long.pdf}{Effective
  interpolation and preservation in guarded logics}.
\newblock In {\em LICS}, 2014.

\bibitem{berry2010introduction}
A.~Berry, R.~Pogorelcnik, and G.~Simonet.
\newblock \href{https://hal.archives-ouvertes.fr/lirmm-00485851/document}{An
  introduction to clique minimal separator decomposition}.
\newblock {\em Algorithms}, 3(2), 2010.

\bibitem{berwanger2001games}
D.~Berwanger and E.~Gr{\"{a}}del.
\newblock
  \href{http://www.lsv.ens-cachan.fr/Publis/PAPERS/PS/BG-lpar01.ps}{Games and
  model checking for guarded logics}.
\newblock In {\em {LPAR}}, 2001.

\bibitem{birget1993state}
J.-C. Birget.
\newblock State-complexity of finite-state devices, state compressibility and
  incompressibility.
\newblock {\em Mathematical systems theory}, 26(3), 1993.

\bibitem{bodlaender1996linear}
H.~L. Bodlaender.
\newblock
  \href{http://www.cs.uu.nl/research/techreps/repo/CS-1992/1992-27.pdf}{A
  linear-time algorithm for finding tree-decompositions of small treewidth}.
\newblock {\em SIAM J.~Comput.}, 25(6), 1996.

\bibitem{BodlaenderK10}
H.~L. Bodlaender and A.~M. C.~A. Koster.
\newblock
  \href{http://www.sciencedirect.com/science/article/pii/S0890540109000947}{Treewidth
  computations {I}. {U}pper bounds}.
\newblock {\em Inf. Comput.}, 208(3), 2010.

\bibitem{cachat2002two}
T.~Cachat.
\newblock \href{https://hal.archives-ouvertes.fr/hal-00019914/document}{Two-way
  tree automata solving pushdown games}.
\newblock In {\em Automata logics, and infinite games}, chapter~17. Springer,
  2002.

\bibitem{calvanese2000containment}
D.~Calvanese, G.~{De Giacomo}, M.~Lenzeniri, and M.~Y. Vardi.
\newblock
  \href{https://www.inf.unibz.it/~calvanese/papers/calv-degi-lenz-vard-KR-2000.pdf}{Containment
  of conjunctive regular path queries with inverse}.
\newblock In {\em KR}, 2000.

\bibitem{chandra1985implication}
A.~K. Chandra and M.~Y. Vardi.
\newblock The implication problem for functional and inclusion dependencies is
  undecidable.
\newblock {\em SIAM Journal on Computing}, 14(3), 1985.

\bibitem{tata}
H.~Comon, M.~Dauchet, R.~Gilleron, C.~L\"oding, F.~Jacquemard, D.~Lugiez,
  S.~Tison, and M.~Tommasi.
\newblock \href{http://tata.gforge.inria.fr/}{Tree automata: Techniques and
  applications}, 2007.

\bibitem{cosmadakis1988decidable}
S.~Cosmadakis, H.~Gaifman, P.~Kanellakis, and M.~Vardi.
\newblock Decidable optimization problems for database logic programs.
\newblock In {\em STOC}, 1988.

\bibitem{courcelle1990monadic}
B.~Courcelle.
\newblock
  \href{http://www.sciencedirect.com/science/article/pii/089054019090043H}{The
  monadic second-order logic of graphs. {I}. {R}ecognizable sets of finite
  graphs}.
\newblock {\em Inf. Comput.}, 85(1), 1990.

\bibitem{dantsin2001complexity}
E.~Dantsin, T.~Eiter, G.~Gottlob, and A.~Voronkov.
\newblock
  \href{http://cs.roosevelt.edu/~dantsin/research/pdf/DEGV01.pdf}{Complexity
  and expressive power of logic programming}.
\newblock {\em {ACM} Comput. Surv.}, 33(3):374--425, 2001.

\bibitem{deutch2014circuits}
D.~Deutch, T.~Milo, S.~Roy, and V.~Tannen.
\newblock
  \href{https://openproceedings.org/2014/conf/icdt/DeutchMRT14.pdf}{Circuits
  for {D}atalog provenance}.
\newblock In {\em ICDT}, 2014.

\bibitem{diestel1989simplicial}
R.~Diestel.
\newblock
  \href{http://www.sciencedirect.com/science/article/pii/0012365X89900848}{Simplicial
  decompositions of graphs: A survey of applications}.
\newblock {\em Discrete Math.}, 75(1), 1989.

\bibitem{dowling1984linear}
W.~F. Dowling and J.~H. Gallier.
\newblock \href{https://core.ac.uk/download/pdf/82477946.pdf}{Linear-Time
  Algorithms for Testing the Satisfiability of Propositional {H}orn Formulae}.
\newblock {\em J. Log. Program.}, 1(3):267--284, 1984.

\bibitem{fagin1983degrees}
R.~Fagin.
\newblock
  \href{http://researcher.ibm.com/researcher/files/us-fagin/jacm83b.pdf}{Degrees
  of acyclicity for hypergraphs and relational database schemes}.
\newblock {\em J. {ACM}}, 30(3), 1983.

\bibitem{flum2002query}
J.~Flum, M.~Frick, and M.~Grohe.
\newblock
  \href{https://home.mathematik.uni-freiburg.de/flum/preprints/query.ps}{Query
  evaluation via tree-decompositions}.
\newblock {\em J.\ {ACM}}, 49(6), 2002.

\bibitem{flum2006parameterized}
J.~Flum and M.~Grohe.
\newblock {\em Parameterized Complexity Theory}.
\newblock Springer, 2006.

\bibitem{gavril1974intersection}
F.~Gavril.
\newblock
  \href{http://www.sciencedirect.com/science/article/pii/009589567490094X}{The
  intersection graphs of subtrees in trees are exactly the chordal graphs}.
\newblock {\em J.~Combinatorial Theory}, 16(1), 1974.

\bibitem{gottlob2002datalog}
G.~Gottlob, E.~Gr{\"{a}}del, and H.~Veith.
\newblock
  \href{https://logic.rwth-aachen.de/Publications/pub/graedel/GoGrVe-tocl01.ps}{Datalog
  {LITE:} A deductive query language with linear time model checking}.
\newblock {\em {ACM} Trans. Comput. Log.}, 3(1), 2002.

\bibitem{gottlob2014treewidth}
G.~Gottlob, G.~Greco, and F.~Scarcello.
\newblock
  \href{https://www.mat.unical.it/~ggreco/files/GottlobGrecoScarcello.pdf}{Treewidth
  and hypertree width}.
\newblock In {\em Tractability: Practical Approaches to Hard Problems},
  chapter~1. Cambridge University Press, 2014.

\bibitem{gottlob2002hypertree}
G.~Gottlob, N.~Leone, and F.~Scarcello.
\newblock
  \href{http://www.sciencedirect.com/science/article/pii/S0022000001918094}{Hypertree
  decompositions and tractable queries}.
\newblock {\em JCSS}, 64(3), 2002.

\bibitem{gottlob2003robbers}
G.~Gottlob, N.~Leone, and F.~Scarcello.
\newblock
  \href{http://www.sciencedirect.com/science/article/pii/S0022000003000308}{Robbers,
  marshals, and guards: game theoretic and logical characterizations of
  hypertree width}.
\newblock {\em JCSS}, 66(4), 2003.

\bibitem{gottlob2010monadic}
G.~Gottlob, R.~Pichler, and F.~Wei.
\newblock \href{https://arxiv.org/abs/0809.3140}{Monadic {D}atalog over finite
  structures of bounded treewidth}.
\newblock {\em TOCL}, 12(1), 2010.

\bibitem{Gradel02}
E.~Gr{\"{a}}del.
\newblock Guarded fixed point logics and the monadic theory of countable trees.
\newblock {\em Theor. Comput. Sci.}, 288(1), 2002.

\bibitem{green2007provenance}
T.~J. Green, G.~Karvounarakis, and V.~Tannen.
\newblock
  \href{http://repository.upenn.edu/cgi/viewcontent.cgi?article=1022&context=db_research}{Provenance
  semirings}.
\newblock In {\em {PODS}}, 2007.

\bibitem{grohe2014constraint}
M.~Grohe and D.~Marx.
\newblock
  \href{http://www.cs.bme.hu/~dmarx/papers/soda-fractional.pdf}{Constraint
  solving via fractional edge covers}.
\newblock {\em TALG}, 11(1), 2014.

\bibitem{ImielinskiL84}
T.~Imielinski and W.~Lipski, Jr.
\newblock Incomplete information in relational databases.
\newblock {\em J. {ACM}}, 31(4), 1984.

\bibitem{leimer1993optimal}
H.-G. Leimer.
\newblock
  \href{http://www.sciencedirect.com/science/article/pii/0012365X9390510Z}{Optimal
  decomposition by clique separators}.
\newblock {\em Discrete Math.}, 113(1-3), 1993.

\bibitem{leinders2005semijoin}
D.~Leinders, M.~Marx, J.~Tyszkiewicz, and J.~V. den Bussche.
\newblock \href{http://alpha.luc.ac.be/~lucp1080/sagf.pdf}{The semijoin algebra
  and the guarded fragment}.
\newblock {\em Journal of Logic, Language and Information}, 14(3), 2005.

\bibitem{malik1993analysis}
S.~Malik.
\newblock
  \href{http://alcom.ee.ntu.edu.tw/system/privatezone/meetingfile/201002111238121.pdf}{Analysis
  of cyclic combinational circuits}.
\newblock In {\em ICCAD}, 1993.

\bibitem{maniu2017indexing}
S.~Maniu, R.~Cheng, and P.~Senellart.
\newblock
  \href{http://pierre.senellart.com/publications/maniu2017indexing.pdf}{An
  indexing framework for queries on probabilistic graphs}.
\newblock {\em ACM Transactions on Database Systems}, 42(2), 2017.

\bibitem{mendelzon1989finding}
A.~O. Mendelzon and P.~T. Wood.
\newblock \href{http://www.vldb.org/conf/1989/P185.PDF}{Finding regular simple
  paths in graph databases}.
\newblock In {\em VLDB}, 1989.

\bibitem{meyer1975weak}
A.~R. Meyer.
\newblock Weak monadic second order theory of succesor is not
  elementary-recursive.
\newblock In {\em Logic Colloquium}, 1975.

\bibitem{mitchell1983implication}
J.~C. Mitchell.
\newblock The implication problem for functional and inclusion dependencies.
\newblock {\em Information and Control}, 56(3), 1983.

\bibitem{monet2016probabilistic}
M.~Monet.
\newblock \href{https://zenodo.org/record/58133/}{Probabilistic evaluation of
  expressive queries on bounded-treewidth instances}.
\newblock In {\em SIGMOD/PODS PhD Symposium}, June 2016.

\bibitem{riedel2012cyclic}
M.~D. Riedel and J.~Bruck.
\newblock
  \href{http://www.sciencedirect.com/science/article/pii/S0166218X1200159X}{Cyclic
  Boolean circuits}.
\newblock {\em Discrete Applied Mathematics}, 160(13-14), 2012.

\bibitem{robertson1984graph}
N.~Robertson and P.~D. Seymour.
\newblock Graph minors. \ {III}. \ {P}lanar tree-width.
\newblock {\em {J. Comb. Theory, Ser. B}}, 36(1), 1984.

\bibitem{robertson1986graph}
N.~Robertson and P.~D. Seymour.
\newblock Graph minors. {II}. {A}lgorithmic aspects of tree-width.
\newblock {\em J.~Algorithms}, 7(3), 1986.

\bibitem{tarjan1972depth}
R.~E. Tarjan.
\newblock \href{https://epubs.siam.org/doi/abs/10.1137/0201010}{Depth-first
  search and linear graph algorithms}.
\newblock {\em SIAM Journal on Computing}, 1972.

\bibitem{tarjan1985decomposition}
R.~E. Tarjan.
\newblock
  \href{http://www.sciencedirect.com/science/article/pii/0012365X85900512}{Decomposition
  by clique separators}.
\newblock {\em Discrete Math.}, 55(2), 1985.

\bibitem{tarjan1984simple}
R.~E. Tarjan and M.~Yannakakis.
\newblock Simple linear-time algorithms to test chordality of graphs, test
  acyclicity of hypergraphs, and selectively reduce acyclic hypergraphs.
\newblock {\em SIAM Journal on Computing}, 13(3), 1984.

\bibitem{tarski1955lattice}
A.~Tarski.
\newblock \href{https://projecteuclid.org/euclid.pjm/1103044538}{A
  lattice-theoretical fixpoint theorem and its applications}.
\newblock {\em Pacific Journal of Mathematics}, 5, 1955.

\bibitem{thompson1968programming}
K.~Thompson.
\newblock \href{https://dl.acm.org/citation.cfm?id=363387}{Programming
  techniques: Regular expression search algorithm}.
\newblock {\em Communications of the ACM}, 1968.

\bibitem{vardi1982complexity}
M.~Y. Vardi.
\newblock \href{http://www.cs.rice.edu/~vardi/papers/stoc82.pdf.gz}{The
  complexity of relational query languages}.
\newblock In {\em STOC}, 1982.

\bibitem{vardi1995complexity}
M.~Y. Vardi.
\newblock \href{https://www.cs.rice.edu/~vardi/papers/pods95.ps.gz}{On the
  complexity of bounded-variable queries}.
\newblock In {\em PODS}, 1995.

\bibitem{yannakakis1981algorithms}
M.~Yannakakis.
\newblock
  \href{https://www.researchgate.net/publication/200034379_Algorithms_for_Acyclic_Database_Schemes}{Algorithms
  for acyclic database schemes}.
\newblock In {\em {VLDB}}, 1981.

\end{thebibliography}

\end{document}